\newtheorem{lemma}{Lemma}
\newtheorem{definition}{Definition}
\newtheorem{example}{Example}
\def\hlinewd#1{%
\noalign{\ifnum0=`}\fi\hrule \@height #1 %
\futurelet\reserved@a\@xhline}
\def\@copyrightspace{\relax}
\begin{document}
%\setcopyright{acmcopyright}

%\title{Computing Influence through Uncertain Reverse Skyline}
%\title{Computing Probabilistic Influence through Uncertain Reverse Skyline}
\title{Computing Influence of a Product through Uncertain Reverse Skyline}

\numberofauthors{1} 
\author{
% 1st. author
\alignauthor
Md. Saiful Islam\textsuperscript{\ddag1}, Wenny Rahayu\textsuperscript{\#2}, Chengfei Liu\textsuperscript{\dag3}, Tarique Anwar\textsuperscript{\dag4} and Bela Stantic\textsuperscript{\ddag5}\\
	\textsuperscript{\ddag}\affaddr{Griffith University, Gold Coast, Australia}\\       	  
	\textsuperscript{\#}\affaddr{La Trobe University, Melbourne, Australia}\\          
	 \textsuperscript{\dag}\affaddr{Swinburne University of Technology, Melbourne, Australia}\\       	 
	 \email{\{\textsuperscript{1}mdsaiful.islam, \textsuperscript{5}b.stantic\}@griffith.edu.au, \textsuperscript{2}w.rahayu@latrobe.edu.au, \{\textsuperscript{3}cliu, \textsuperscript{4}tanwar\}@swin.edu.au}
% 2nd. author
}

\maketitle
\begin{abstract}

%which include $q$ in their uncertain dynamic skylines. 
%as well as optimizing the product promotion plans.
Understanding the influence of a product is crucially important for making informed business decisions. This paper introduces a new type of skyline queries, called \emph{uncertain reverse skyline}, for measuring the influence of a probabilistic product in uncertain data settings. More specifically, given a dataset of probabilistic products $\mathcal{P}$ and a set of customers $\mathcal{C}$, an \emph{uncertain reverse skyline} of a probabilistic product $q$ retrieves all customers $c\in\mathcal{C}$ which include $q$ as one of their preferred products. We present efficient pruning ideas and techniques for processing the \emph{uncertain reverse skyline query} of a probabilistic product using R-Tree data index. We also present an efficient parallel approach to compute the \emph{uncertain reverse skyline} and \emph{influence score} of a probabilistic product. Our approach significantly outperforms the baseline approach derived from the existing literature. The efficiency of our approach is demonstrated by conducting extensive experiments with both real and synthetic datasets. 
\end{abstract}

\keywords{
UD-Dominance, Uncertain Reverse Skyline, Query Processing Algorithms, Parallel Computing.
}

%Pruning Techniques

\section{Introduction}
\label{sec:introduction}

%For example, these data can be exploited for finding products with the largest influence scores or discovering the popular product set for the production companies to sustain in the global market.
%probabilistically

These days we are experiencing voluminous customer preference and the product popularity rating data available from the product related websites, e.g., search queries in CarSales\footnote{http://www.carsales.com.au/}, YahooAutos\footnote{http://autos.yahoo.com/} etc and the product ratings in Amazon\footnote{https://www.amazon.com/}, eBay\footnote{http://www.ebay.com/} etc. The popularity ratings of the products in these sites can be treated as the probabilities by which the products match the customer preferences. Making intelligent use of these customer preference and popularity rating data might help production companies to optimize their (probabilistic) selling strategy or promotion plans and thereafter, increase their revenues \cite{FayX08}. To illustrate the problem settings studied in this paper, consider the datasets of wine products and the customer preferences as given in Fig. \ref{fig:exampledataset}. In general, a product is assumed to be liked by a customer if it closely matches her stated preference. However, the popularity rating of a product may also play an important role in her buying decisions in reality. For example, though $w_3$ matches the preference of the customer $c_3$ better than $w_5$, $w_5$ still has the chance to attract $c_3$ as its popularity rating is higher than $w_3$. We argue that both of the above factors need to be modeled in determining the influence of a product and discovering the favorable or popular product set for the manufacturers to sustain in the global market.

%for the upcoming Christmas or similar events
%in certain data settings
%Th skyline query retrieves products from customers' perspective. That is,

\begin{figure}[tb]
\setlength{\tabcolsep}{0.5pt}
\tiny
\centering
\begin{tabular}{cc}
\subfigure[Grape wines, $\mathcal{W}$]{
\centering
\begin{tabular}{|c|c|c||c|} \hlinewd{1 pt}
\textbf{Wines}&\textbf{1-GraCon(\%)}&\textbf{Price(\$)}&\textbf{Rating}\\ \hline
\hline
$w_1$&40&70&0.90\\ \hline
$w_2$&20&90&0.80\\ \hline
$w_3$&60&170&0.40\\ \hline
$w_4$&30&220&0.50\\ \hline
$w_5$&90&190&0.70\\ \hline
$w_6$&70&80&0.60\\ \hlinewd{1 pt}
\end{tabular}
}&
\subfigure[Customer Preferences, $\mathcal{C}$]{
\begin{tabular}{|c|c|c|} \hlinewd{1 pt}
\textbf{Preferences}&\textbf{1-GraCon(\%)}&\textbf{Price(\$)}\\ \hline
\hline
$c_1$&55&130\\ \hline
$c_2$&35&110\\ \hline
$c_3$&70&140\\ \hline
$c_4$&55&130\\ \hline
$c_5$&35&110\\ \hline
$c_6$&70&140\\ \hlinewd{1 pt}
\end{tabular}
}
\end{tabular}
\normalsize
\label{fig:exampledataset}
\vspace{-2ex}
\caption{Example datasets of (a) wines and (b) customer preferences. The ``Rating" column denotes the popularity of a wine in the market.}
\vspace{-2ex}
\end{figure}

The first operator for preference-based data retrieval over certain data is the \emph{skyline operator} introduced by B{\"o}rzs{\"o}nyi et al. \cite{BorzsonyiKS01} to the database research community. Since then, this operator has received lots of attention and is studied extensively in multi-criteria decision making applications (\cite{PapadiasTFS03}, \cite{LiOTW06}, \cite{Sharifzadeh06}, \cite{DellisS07}, \cite{LinYZZ07}, \cite{WuTWDY09}, \cite{WuXMH09} for survey). Given a dataset of products $\mathcal{P}$, the standard skyline query returns all products $p\in\mathcal{P}$ that are not dominated by any other products $p^\prime\in\mathcal{P}$. A product $p$ is considered to dominate another product $p^\prime$ \emph{iff} it is as good as $p^\prime$ in every aspects of $p^\prime$, but better than $p^\prime$ in at least one aspect of $p^\prime$. Mathematically, $p$ dominates $p^\prime$, denoted by $p\prec p^\prime$, iff: (i) $\forall i\in \{1,2,..., d\}$, $p^i\le p^{\prime i}$ and (ii) $\exists j\in \{1,2,...,d\}$, $p^j<p^{\prime j}$, assuming that smaller values are preferred in all dimensions, $p^i$ and $p^{\prime i}$ denote the $i$th dimensional values of $p$ and $p^\prime$, respectively and $\mathcal{P}$ is a set of $d$-dimensional data objects. For example, consider the dataset of wine products given in Fig. \ref{fig:exampledataset}(a), the standard skyline operator \cite{BorzsonyiKS01} on this wine dataset returns $\{w_1, w_2\}$ as no other wine can dominate these wines in terms of \emph{1- percentage of grape juice content} (1-GraCon(\%)) and \emph{price}(\$).

Though standard skyline queries \cite{BorzsonyiKS01} can trade-off well if there are multiple dimensions of a product and a customer is unable to weight these dimensions, not all customers may prefer to minimize/maximize every dimensional value of a product, rather s/he may like certain range for it, e.g., laptop screen size, GraCon(\%) etc. To address this, Papadias et al. \cite{PapadiasTFS03} propose \emph{dynamic skyline query}, which retrieves data objects $p$ that are not \emph{dynamically dominated} by another data object $p^\prime$ w.r.t. a customer preference $c$, where $c$ is also a $d$-dimensional data object. Unlike standard skyline queries \cite{BorzsonyiKS01} where the aspects of $p$ is directly compared with the corresponding aspects of $p^\prime$ without considering any customer object, the dynamic skyline query compares the absolute differences of the aspects of $p$ and the customer object $c$ with the corresponding absolute differences of the aspects of $p^\prime$ and the customer object $c$ in deciding the dominance between $p$ and $p^\prime$. Mathematically, a data object $p$ dynamically dominates another data object $p^\prime$ w.r.t. a customer object $c$, denoted by $p\prec_c p^\prime$, iff: (i) $\forall i\in \{1,2,..., d\}$, $|p^i-c^i|\le |p^{^\prime i}-c^i|$ and (ii) $\exists j\in \{1,2,...,d\}$, $|p^j-c^j|<|p^{\prime j}-c^j|$. For example, consider the dataset of wines given in Fig. \ref{fig:exampledataset}(a) and the customer preferences in Fig. \ref{fig:exampledataset}(b), the dynamic skyline query of $c_1$ on the wine dataset returns $w_3$ as no other wines can dominate $w_3$ in view of $c_1$, i.e., $w_3$ matches the customer preference $c_1$ better than any other wines given in Fig. \ref{fig:exampledataset}(a).

% in terms of 1-GraCon(\%) and \emph{price}(\$)

Both the standard skyline \cite{BorzsonyiKS01} and dynamic skyline \cite{PapadiasTFS03} queries retrieve data objects from $P$ based on the customer's point of view, not the company's perspective. Dellis et al. \cite{DellisS07} present a new type of skyline queries, called \emph{reverse skyline}, which retrieves data objects from the company's point of view. Given a dataset of products $\mathcal{P}$, a set of customer preferences $\mathcal{C}$ and a product query $q$, the reverse skyline query retrieves all customers $c\in\mathcal{C}$ that include $q$ as one of their preferred products. Mathematically, given datasets $\mathcal{P}$ and $\mathcal{C}$ and a query $q$, a customer $c\in C$ is a reverse skyline of $q$, iff $\not \exists p\in\mathcal{P}$ such that (i) $\forall i\in \{1,2,...,d\}$, $|p^i-c^i|\le|q^i-c^i|$ and (ii) $\exists j\in \{1,2,...,d\}$, $|p^j-c^j|<|q^j-c^j|$. For example, consider the dataset of wine products given in Fig. \ref{fig:exampledataset}(a) and the customer preferences in Fig. \ref{fig:exampledataset}(b), the reverse skyline query of $w_1$ returns $c_2$ as no other wines in Fig. \ref{fig:exampledataset}(a) can dominate $w_1$ in view of $c_2$, i.e., $w_1$ is one of the preferred products of the the customer $c_2$. Like the standard and dynamic skyline queries, reverse skylines are also studied with great importance in the literature, specifically for measuring the \emph{influence} of a product and evaluating the market research queries (\cite{WuTWDY09}, \cite{ArvanitisDV12}, \cite{IslamLZ13}, \cite{IslamL16} for survey). 

%\cite{BorzsonyiKS01}, \cite{PapadiasTFS03}, \cite{DellisS07}
%, that surpass a user given threshold\
%$\{\forall c\in\mathcal{C}|Pr^c_{DSky}(q)>\delta\}$
%$\forall p\in\mathcal{P}, 
%uncertain data settings

% without guidelines
%% Given a set of probabilistic products $\mathcal{P}$ 
%Let us assume that both $W$ and $C$ are $d$-dimensional datasets. We use $w^i$ and $c^i$ to denote the $i^{th}$ dimensional attributes of a wine $w\in W$ and a customer preference $c\in C$. In addition of this, each wine $w\in W$ is associated with a rating (popularity among its customers), denoted by $Pr(w)$, as shown in the last column of $W$ in Fig. \ref{fig:exampledataset}(a). The rating value is treated as an existence probability of a wine object $w\in W$ in a certain possible world. 

Though the above skyline queries are important findings for studying the customer-product relationships over certain data, none of them is applicable over uncertain data. In works \cite{LianC08}, \cite{LianC10} Lian et al. present a threshold-based approach for evaluating reverse skyline queries over uncertain data. To find the threshold-based reverse skyline of a probabilistic product $p\in\mathcal{P}$, the authors first discover the probable alternative products of a customer $c\in\mathcal{C}$, called \emph{probabilistic dynamic skyline}. The probabilistic dynamic skyline of a customer $c$, denoted by $PDS(c)$, is computed as follows: $\{\forall p\in\mathcal{P}|Pr^c_{DSky}(p)\ge\delta\}$, where $Pr^c_{DSky}(p)$ denotes the \emph{dynamic skyline probability} of a product $p$ w.r.t. $c$ and is computed as follows: $Pr^c_{DSky}(p)=Pr(p)\times\prod_{\forall p^\prime\in\mathcal{P}\setminus\{p\}, p^\prime\prec_{c} p}{(1-Pr(p^\prime))}$, $Pr(p)$ denotes the probability of $p$ and $\delta$ is a given threshold. Then, the \emph{probabilistic reverse skyline} of a product $p\in\mathcal{P}$, denoted by $PRS(p)$, consists of all customers $c\in\mathcal{C}$ that include $p$ in its probabilistic dynamic skyline, i.e, $\{\forall c\in\mathcal{C}| p\in PDS(c)\}$. For example, consider the wine products and the customers given in Fig. \ref{fig:exampledataset}. Assume that the popularity ratings in Fig. \ref{fig:exampledataset}(a) are the probabilities of wines. The probabilistic reverse skyline of $w_2$ retrieves customers $c_1$ and $c_2$ for $\delta\ge 0.48$. Certainly, the study of probabilistic reverse skylines \cite{LianC08}, \cite{LianC10} is an advancement for measuring the \emph{influence} of a product over uncertain data. However, these skylines are not that friendly from usability point of view. (\textbf{Friendliness}) One has to mention the \emph{threshold} $\delta$, which is certainly a burden. (\textbf{Stability}) The result set can also vary based on the settings of $\delta$ and therefore, is not stable. (\textbf{Fairness}) Furthermore, it is not favorable towards products with small dynamic skyline probabilities. 

Recently, Zhou et al. \cite{ZhouLXZL16} propose a new skyline query called \emph{uncertain dynamic skyline} to compute the probable alternative choices for a customer. Unlike probabilistic dynamic skyline \cite{LianC08}, \cite{LianC10}, the uncertain dynamic skyline \cite{ZhouLXZL16} is stable and one does not need to provide any \emph{threshold} value. A product $p\in\mathcal{P}$ is considered a member of the uncertain dynamic skyline of a customer $c$ as long as $\not\exists p^\prime\in\mathcal{P}$ such that (i) $p^\prime\prec_c p$ and (ii) $Pr^c_{DSky}(p^\prime) \ge Pr^c_{DSky}(p)$. For example, consider the dataset of probabilistic wine products and the customer preferences given in Fig. \ref{fig:exampledataset}, the uncertain dynamic skyline of $c_1$, denoted by $UDS(c_1)$, retrieves $w_2$ and $w_3$, as no other wines can dynamically dominate them or their dynamic skyline probabilities are greater than these two wines in view of $c_1$. To compute the \emph{influence} of a probabilistic product $p\in\mathcal{P}$ through uncertain dynamic skyline, one has to compute the uncertain dynamic skyline of each customer $c\in\mathcal{C}$, i.e., $UDS(c)$ and then, check whether $UDS(c)$ includes $p$. As UDS query is computationally very expensive by itself, computing the influence of a probabilistic product via uncertain dynamic skyline \cite{ZhouLXZL16} is not efficient. 

This paper presents a new skyline query, called \emph{uncertain reverse skyline}, for measuring the \emph{influence} of a product in uncertain data settings. We also present efficient pruning ideas and an approach for processing the uncertain reverse skyline query of a probabilistic product. To be specific, our main contributions are as follows:

%, \cite{FrischknechtGPR09}

\begin{enumerate}[topsep=0pt,itemsep=-1ex,partopsep=1ex,parsep=1ex]
\item we introduce a novel skyline query, called \emph{uncertain reverse skyline}, for measuring the influence of a probabilistic product in uncertain data settings;
\item we present several pruning ideas and R-Tree data indexing based techniques to compute the uncertain reverse skyline and the influence score of a product in probabilistic databases;
\item we also present an efficient parallel computing approach for processing the uncertain reverse skyline query of a probabilistic product; and
\item finally, we demonstrate the efficiency of our approach by conducting extensive experiments with both real and synthetic datasets.
\end{enumerate}

The rest of the paper is organized as follows: Section \ref{sec:preliminaries} provides the preliminaries, Section \ref{sec:urs} presents the uncertain reverse skyline query and analyses the complexity of computing the influence score of probabilistic product through uncertain reverse skyline, Section \ref{sec:ourapproach} describes our approach in detail, Section \ref{sec:parallelapproach} presents our parallel approach, Section \ref{sec:experiments} presents the experimental results, Section \ref{sec:relatedwork} discusses the related work and finally, Section \ref{sec:conclusion} concludes the paper.

\section{Preliminaries}
\label{sec:preliminaries}

% and is denoted by $\mathcal{UP}$
Consider a set of product objects $\mathcal{P}$ and a set of customer preferences $\mathcal{C}$, where a product object $p\in\mathcal{P}$ and a customer preference $c\in\mathcal{C}$ are $d$-dimensional points modeled as $<p^1, p^2,...,p^d>$ and $<c^1, c^2,..., c^d>$, respectively. The $p^i$ denotes the value of the product $p$ in the $i$th dimension, whereas the $c^i$ denotes the preferred value of the customer $c$ in the $i$th dimension of a product. If the product objects $p\in\mathcal{P}$ are associated with a probability (e.g., popularity rating), then we call it probabilistic product set. The probability of a product $p\in\mathcal{P}$ is denoted by $Pr(p)$. We use product and product object as well as customer and customer preference interchangeably. The query object, denoted by $q$, can represent both a product and a customer.

%in this paper.

\begin{definition} \textbf{Dynamic Dominance \cite{PapadiasTFS03}} A product $p\in\mathcal{P}$ dynamically dominates another product $p^\prime\in\mathcal{P}$ w.r.t. a customer $c$, denoted by $p\prec_c p^\prime$, iff the followings hold: (i) $\forall i\in\{1,2,...,d\}\text{ , }|p^i-c^i|\le|{p^\prime}^i-c^i|$ and (ii) $\exists j\in\{1,2,...,d\}\text{ , }|p^j-c^j|<|{p^\prime}^j-c^j|$.
\label{def:dd}
\end{definition}

\begin{example}
Consider the datasets of wine products $\mathcal{W}$ and the customer $c_1$. According to the Definition \ref{def:dd}, the wine product $w_3$ dominates the wine product $w_6$ w.r.t. the customer $c_1$, i.e., $w_3\prec_{c_1} w_6$.
\label{ex:dynamicdominance}
\end{example}

\begin{definition} \textbf{Dynamic Skyline Probability \cite{LianC10, ParkMS15}.}
The dynamic skyline probability of a product $p\in\mathcal{P}$ w.r.t. a customer $c$, denoted by $Pr^c_{DSky}(p)$, is computed as follows: 
%\begin{align*}
\begin{equation}
Pr^c_{DSky}(p)=Pr(p)\times\prod_{\forall p^\prime\in\mathcal{P}\setminus\{p\}, p^\prime\prec_{c} p}{(1-Pr(p^\prime))}
%\end{align*}
\label{eq:dskyprob}
\end{equation}
\end{definition}

\begin{example}
Consider the probabilistic wine products $\mathcal{W}$ and customers $\mathcal{C}$ in Fig. \ref{fig:exampledataset}. As no other objects in  $\mathcal{W}$ dominates $w_3$ w.r.t. $c_1$, the dynamic skyline probability of $w_3$ w.r.t. $c_1$ is $Pr^{c_1}_{DSky}(w_3)=Pr(w_3)=0.40$. Since $w_3\prec_{c_1} w_6$, the dynamic skyline probability of $w_6$ w.r.t. $c_1$ is $Pr^{c_1}_{DSky}(w_6)=Pr(w_6)\times(1-Pr(w_3))=0.60\times(1-0.40)=0.36$.
\label{ex:dskyprobability}
\end{example}

\begin{lemma}
$Pr^c_{DSky}(p^\prime)<Pr^c_{DSky}(p)$ iff: (i) $p\prec_c p^\prime$ and (ii) $Pr(p^\prime)< Pr(p)\vee (Pr(p^\prime)\times (1-Pr(p)))< Pr(p)$ \cite{ZhouLXZL16}.
\label{lem:dskyrelationship}
\end{lemma}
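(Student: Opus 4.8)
The plan is to work in the transformed coordinates $t_c(p)^i=|p^i-c^i|$, in which Definition~\ref{def:dd} is exactly ordinary componentwise dominance; thus $\prec_c$ is a strict partial order and, in particular, transitive. I would first record two consequences of the hypothesis $p\prec_c p'$: (a) $p$ is itself a dominator of $p'$, and (b) by transitivity every $p''$ with $p''\prec_c p$ also satisfies $p''\prec_c p'$. Writing $D(x)=\{p''\in\mathcal{P}\setminus\{x\}:p''\prec_c x\}$ for the dominator set, these say $D(p)\subseteq D(p')$ together with $p\in D(p')$. I would also note up front that the two disjuncts of condition~(ii) are not independent: because $1-Pr(p)\le 1$, the inequality $Pr(p')<Pr(p)$ already forces $Pr(p')(1-Pr(p))\le Pr(p')<Pr(p)$, so (ii) is equivalent to its second disjunct $Pr(p')(1-Pr(p))<Pr(p)$, a single scalar comparison I can carry through the whole argument.

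For the \emph{if} direction I would start from Equation~\eqref{eq:dskyprob}. Using $p\in D(p')$ I factor out $(1-Pr(p))$, writing $Pr^c_{DSky}(p')=Pr(p')\,(1-Pr(p))\prod_{p''\in D(p')\setminus\{p\}}(1-Pr(p''))$. Since $D(p)\subseteq D(p')\setminus\{p\}$, the remaining product ranges over a superset of $D(p)$ and is therefore at most $\prod_{p''\in D(p)}(1-Pr(p''))$; substituting the formula for $Pr^c_{DSky}(p)$ gives $Pr^c_{DSky}(p')\le \frac{Pr(p')(1-Pr(p))}{Pr(p)}\,Pr^c_{DSky}(p)$. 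Condition~(ii) makes the leading ratio strictly below $1$, so under the standing nondegeneracy that probabilities lie in $(0,1)$ (whence $Pr^c_{DSky}(p)>0$) the strict inequality $Pr^c_{DSky}(p')<Pr^c_{DSky}(p)$ follows immediately.

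The harder direction, and what I expect to be the main obstacle, is necessity: recovering both (i) and (ii) from $Pr^c_{DSky}(p')<Pr^c_{DSky}(p)$. The difficulty is that a small value of $Pr^c_{DSky}(p')$ may be caused by dominators of $p'$ that have nothing to do with $p$ (an incomparable $p$ with larger base probability, or an unrelated high-probability dominator of $p'$), so the bare probability inequality does not on its own isolate the factor $Pr(p')(1-Pr(p))$ against $Pr(p)$. The step I would have to pin down is controlling the extra dominators $(D(p')\setminus\{p\})\setminus D(p)$: precisely when this set is empty, i.e. $D(p')=D(p)\cup\{p\}$, does the chain of inequalities above reduce to an exact scalar comparison, so that the probability inequality forces condition~(ii) and the presence of $p$ as the unique extra dominator yields $p\prec_c p'$. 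I would therefore establish necessity within the setting in which the lemma is actually invoked by the uncertain dynamic skyline computation -- comparisons against an immediate dominator, where $p'$ inherits no dominator beyond $p$ and those it already shares with $p$ -- and would treat justifying this nestedness of the dominator sets as the real content to be argued.
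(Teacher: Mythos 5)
The paper does not prove this lemma; it imports it verbatim from \cite{ZhouLXZL16}, so there is no in-paper argument to compare yours against. Judged on its own terms, your treatment of the \emph{if} direction is correct and is the natural one: writing $D(x)$ for the set of dominators of $x$ w.r.t.\ $c$, condition (i) gives $p\in D(p')$ and, by transitivity and irreflexivity of $\prec_c$, $D(p)\subseteq D(p')\setminus\{p\}$; factoring $(1-Pr(p))$ out of Eq.~\eqref{eq:dskyprob} and bounding the remaining product by the one over $D(p)$ yields $Pr^c_{DSky}(p')\le\frac{Pr(p')(1-Pr(p))}{Pr(p)}\,Pr^c_{DSky}(p)$, and condition (ii) --- which, as you observe, collapses to its second disjunct --- makes the leading ratio strictly less than one. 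The only caveat is the nondegeneracy you flag yourself: the final strict inequality needs $Pr^c_{DSky}(p)>0$, i.e.\ no dominator of $p$ with probability exactly $1$, which the paper tacitly assumes.

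Your diagnosis of the \emph{only if} direction is also correct: as an unconditional biconditional the statement is false. Two mutually incomparable products with empty dominator sets and $Pr(p')<Pr(p)$ already satisfy $Pr^c_{DSky}(p')<Pr^c_{DSky}(p)$ while violating (i). Be aware, however, that your proposed repair is circular as a route to (i): the hypothesis $D(p')=D(p)\cup\{p\}$ presupposes $p\prec_c p'$, so under that restriction you can only recover (ii), not derive (i). The defensible statement is that the lemma holds as an implication (conditions (i)--(ii) imply the probability inequality), and becomes an equivalence with (ii) only after (i) and the nestedness $D(p')=D(p)\cup\{p\}$ are assumed. This is harmless for the present paper: every downstream use (Lemma~\ref{lem:midpointfilteringURS}, Lemma~\ref{ursPRTreeNodePruning}, Lemma~\ref{lem:UMSLnodetonodepruning}) invokes only the sufficiency direction, which your argument establishes in full.
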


\begin{definition} \textbf{Uncertain Dynamic Dominance (UD-Dominance) \cite{ZhouLXZL16}.} A probabilistic product $p\in\mathcal{P}$ UD-dominates another probabilistic product $p^\prime\in\mathcal{P}$ w.r.t. a customer $c$, denoted by $p\prec^u_c p^\prime$, iff the followings hold: (i) $p\prec_c p^\prime$ and (ii) $Pr^c_{DSky}(p)\ge Pr^c_{DSky}(p^\prime)$.
\label{def:udd}
\end{definition}

\begin{example}
Consider the probabilistic wine products $\mathcal{W}$ and customers $\mathcal{C}$ given in Fig. \ref{fig:exampledataset}. As $w_3\prec_{c_1} w_6$ (see Ex. \ref{ex:dynamicdominance}) and also, $Pr^{c_1}_{DSky}(w_3)>Pr^{c_1}_{DSky}(w_6)$ (see Ex. \ref{ex:dskyprobability}), $w_3$ UD-dominates $w_6$ w.r.t. $c_1$, i.e., $w_3\prec^u_{c_1} w_6$. % according to Definition \ref{def:udd}.
\end{example}

%\subsection{Uncertain Dynamic Skyline}

\begin{definition} \textbf{Uncertain Dynamic Skyline (UDS) \cite{ZhouLXZL16}.}
Given a set of probabilistic products  $\mathcal{P}$ and a customer $c$, the uncertain dynamic skyline of $c$, denoted by $UDS(c)$, consists of all products $p\in\mathcal{P}$ such that $p$ is not UD-dominated by any other $p^\prime\in\mathcal{P}\setminus p$, w.r.t. $c$.  Mathematically, $UDS(c)=\{c\in\mathcal{P}|\not\exists p^\prime\in\mathcal{P}\setminus p: p^\prime\prec^u_c p\}$.
\label{def:uds}
\end{definition}

\begin{example}
Consider the probabilistic wine products $\mathcal{W}$ and the customers $\mathcal{C}$ given in Fig. \ref{fig:exampledataset}. According to Definition \ref{def:uds}, the uncertain dynamic skyline of the customer $c_1$, i.e., $UDS(c_1)$, consists of wines $w_2$ and $w_3$ as no other wines in $\mathcal{W}$ UD-dominates them w.r.t. $c_1$. Similarly, the $UDS(c_2)$ and $UDS(c_3)$ are $\{w_1, w_2\}$ and $\{w_3, w_5, w_6\}$, respectively.
\end{example}

%\subsection{Favorite Probability}
\begin{definition} \textbf{Favorite Probability \cite{ZhouLXZL16}.} Given a probabilistic product set $\mathcal{P}$, the \emph{favorite probability} of a product $p$ in view of a customer $c$, denoted by $Pr^c_{Fav}(p)$, is computed as given as follows: 

%\small
\begin{equation}
Pr^c_{Fav}(p)=
    \begin{cases}      
      \frac{Pr^c_{DSky}(p)}{\sum_{\forall p^\prime\in UDS(c)}{Pr^c_{DSky}(p^\prime)}} & \text{ if }\ p\in UDS(c)\\
	0 & \text{otherwise}
    \end{cases}
\label{eq:favprob}
\end{equation}
\normalsize

%\begin{equation}
%Pr^c_{Fav}(p)=\frac{Pr^c_{DSky}(p)}{\sum_{\forall p^\prime\in UDS(c)}{Pr^c_{DSky}(p^\prime)}}
%\label{eq:favprob}
%\end{equation}

The \textbf{favorability rating} of a product $p$ w.r.t. a customer set $\mathcal{C}$, denoted by $Pr^\mathcal{C}_{Fav}(p)$, is computed as follows: 

\begin{align}
Pr^\mathcal{C}_{Fav}(p)&=\sum_{\forall c\in\mathcal{C}}{Pr^c_{Fav}(p)} \nonumber \\ 
	&=\sum_{\forall c\in\mathcal{C}}{\frac{Pr^c_{DSky}(p)}{\sum_{\forall p^\prime\in UDS(c)}{Pr^c_{DSky}(p^\prime)}}}
\label{eq:fabprobset}
\end{align}
\label{def:pis}
\end{definition}

%According to Eq. \ref{eq:favprob}, 
\begin{example} Consider the datasets of probabilistic wine products $\mathcal{W}$ and the customers $\mathcal{C}$ as given in Fig. \ref{fig:exampledataset}. The favorability rating of $w_1$ w.r.t. the customer set $\mathcal{C}$ is $Pr^\mathcal{C}_{Fav}(w_1)$ $=\frac{0.00}{0.48+0.40}+\frac{0.90}{0.90+0.80}+\frac{0.00}{0.40+0.42+0.60}=0.53$. Similarly, the favorability rating of $w_2$ w.r.t. $\mathcal{C}$ is $Pr^\mathcal{C}_{Fav}(w_2)=\frac{0.48}{0.48+0.40}+\frac{0.90}{0.90+0.80}+\frac{0.00}{0.40+0.42+0.60}=1.02$.

\label{ex:favprob}
\end{example}

\section{Uncertain Reverse Skyline}
\label{sec:urs}

Here, we present a new skyline query, called \emph{uncertain reverse skyline query} based on \emph{UD-Dominance}\cite{ZhouLXZL16}. % and \emph{UDS} query. 

\begin{definition} \textbf{Uncertain Reverse Skyline (URS).} Given a set of probabilistic products  $\mathcal{P}$, a set of customers $\mathcal{C}$ and a query product $q$, the uncertain reverse skyline of $q$, denoted by $URS(q)$, consists of all customers $c\in\mathcal{C}$ such that $q$ appears in $UDS(c)$, i.e., $q\in UDS(c)$. Mathematically, a customer $c\in\mathcal{C}$ appears in $URS(q$) iff $\not\exists p\in\mathcal{P}$ such that: (a) $p\prec_c q$ and (b) $Pr^c_{DSky}(p)\ge Pr^c_{DSky}(q)$.
\label{def:urs}
\end{definition}

\begin{example}
Consider the datasets of probabilistic wines $\mathcal{W}$ and the customers $\mathcal{C}$ given in Fig. \ref{fig:exampledataset}. According to Definition \ref{def:urs}, the $URS(w_1)$ consists of $c_2$ only. The $URS(w_2)$ and $URS(w_3)$ are $\{c_1, c_2\}$ and $\{c_1, c_3\}$, respectively.
\end{example}

Unlike the probabilistic reverse skyline \cite{LianC10}, \cite{LianC13}, the uncertain reverse skyline proposed here is \emph{user friendly}, \emph{stable} and \emph{fair}. One does not need to provide the setting of threshold $\delta$ for computing the uncertain reverse skyline and it does not favor the query product over another one unless the query product strictly dominates the other one and the dynamic skyline probability of the query product is better than the other one. The uncertain reverse skyline always returns the same result, i.e., there is no threshold dependency.

%\{c\in\mathcal{C}|p\in UDS(c)\}
\begin{definition} \textbf{Influence.} The \emph{influence set} of a probabilistic product $p\in\mathcal{P}$, denoted by $IS(p)$, consists of all customers $c\in\mathcal{C}$ that appear in the uncertain reverse skyline of $p$, i.e, $IS(p)=URS(p)$. Given a set of probabilistic products $\mathcal{P}$ and the customer set $\mathcal{C}$, the influence score of a probabilistic product $p$, denoted by $\tau(p)$, is measured by its favorability rating w.r.t. $\mathcal{C}$, i.e., $\tau(p)=Pr^\mathcal{C}_{Fav}(p)$.
\label{def:influence}
\end{definition}

%\begin{definition} \textbf{Joint Influence Score.} The joint  influence score of a product set $P^\prime$ w.r.t. a customer set $\mathcal{C}$, denoted by $\tau(P^\prime)$, is computed as given as below:
%\begin{equation}
%\tau(P^\prime)=\sum_{\forall p\in P^\prime}{\tau(p)}%=\sum_{\forall p\in P^\prime}{Pr^\mathcal{C}_{Fav}(p)}
%\end{equation}
%\end{definition}

\begin{example} Consider the datasets of probabilistic wine products $\mathcal{W}$ and the customers $\mathcal{C}$ as given in Fig. \ref{fig:exampledataset}. The influence score of wine product $w_1$ is $\tau(w_1)=Pr^\mathcal{C}_{Fav}(w_1)=0.53$ (easy to verify from Ex. \ref{ex:favprob}). Similarly, the influence score of wine product $w_2$ is $\tau(w_2)=Pr^\mathcal{C}_{Fav}(w_2)=1.02$. 
%Finally, the joint influence score of $w_1$ and $w_2$ is $\tau(w_1)+\tau(w_2)=0.53+1.02=1.55$.
\end{example}

\subsection{Complexity Analysis}
\label{sec:complexity}

A naive approach of computing the \emph{influence score} of a product $p\in\mathcal{P}$ like the one proposed by Zhou et al. \cite{ZhouLXZL16} first computes the uncertain dynamic skyline of each customer $c\in\mathcal{C}$ and then, check whether the $UDS(c)$ includes the product $p$ and then computes its \emph{influence score} by following Eq. \ref{eq:fabprobset}. However, this approach requires the computation of $|\mathcal{C}|$ uncertain dynamic skylines, i.e., $UDS(c), \forall c\in\mathcal{C}$. As the \emph{UDS} query itself is computationally prohibitive, this na\"ive approach is not efficient enough to compute the influence score of a product $p$, i.e., $\tau(p)$. The following lemma guides how to efficiently compute $\tau(p)$ through the uncertain reverse skyline of $p$, i.e, $URS(p)$.  

\begin{lemma}
%$Pr^\mathcal{C}_{Fav}(p)=\sum_{\forall c\in URS(p)}{Pr^c_{Fav}(p)}$.
$\tau(p)=Pr^{URS(p)}_{Fav}(p)=\sum_{\forall c\in URS(p)}{Pr^c_{Fav}(p)}$.
\label{lem:influencescorenew}
\end{lemma}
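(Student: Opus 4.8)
The plan is to reduce the claim to the observation that a customer contributes to the influence score precisely when it lies in the uncertain reverse skyline of $p$. I would begin from Definition \ref{def:influence}, which gives $\tau(p)=Pr^\mathcal{C}_{Fav}(p)$, and then unfold the favorability rating through Eq. \ref{eq:fabprobset} to write $\tau(p)=\sum_{\forall c\in\mathcal{C}}Pr^c_{Fav}(p)$. The right-hand side of the lemma, $Pr^{URS(p)}_{Fav}(p)=\sum_{\forall c\in URS(p)}Pr^c_{Fav}(p)$, is just the same favorability rating taken over the subset $URS(p)\subseteq\mathcal{C}$ in place of all of $\mathcal{C}$, so the second equality in the statement is definitional; the substantive work is establishing the first equality.

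The key step is to split the sum over $\mathcal{C}$ into the part indexed by $URS(p)$ and its complement $\mathcal{C}\setminus URS(p)$, and to show that the complementary part vanishes. For this I would invoke the case analysis in Eq. \ref{eq:favprob}: $Pr^c_{Fav}(p)=0$ whenever $p\notin UDS(c)$. The bridge between the two descriptions is Definition \ref{def:urs}, which characterizes $URS(p)$ exactly as the set of customers $c$ with $p\in UDS(c)$; equivalently, $c\notin URS(p)$ iff $p\notin UDS(c)$. Combining these two facts, every term indexed by $c\in\mathcal{C}\setminus URS(p)$ is zero, so the sum over $\mathcal{C}$ collapses to the sum over $URS(p)$, yielding $\tau(p)=\sum_{\forall c\in URS(p)}Pr^c_{Fav}(p)=Pr^{URS(p)}_{Fav}(p)$.

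I expect essentially no obstacle here: the lemma is a direct consequence of the definitions, and its real content is the efficiency observation that one need only iterate over the (typically small) set $URS(p)$ rather than over all of $\mathcal{C}$. The only point requiring mild care is to confirm that the favorability rating notation $Pr^{S}_{Fav}(p)$ is read as the sum of $Pr^c_{Fav}(p)$ over $c\in S$ for an arbitrary customer subset $S$, the natural extension of Eq. \ref{eq:fabprobset}; once this reading is fixed, both equalities follow immediately from the vanishing of the off-skyline terms.
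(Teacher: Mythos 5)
Your proposal is correct and follows essentially the same route as the paper's proof: both expand $\tau(p)=Pr^\mathcal{C}_{Fav}(p)$ via Eq.~\ref{eq:fabprobset}, split the sum over $\mathcal{C}$ into $URS(p)$ and $\mathcal{C}\setminus URS(p)$, and kill the complementary terms using the fact that $c\notin URS(p)$ iff $p\notin UDS(c)$, so $Pr^c_{Fav}(p)=0$ there by Eq.~\ref{eq:favprob}. No gaps; your remark about reading $Pr^{S}_{Fav}(p)$ as the sum over an arbitrary subset $S$ matches the paper's implicit usage.
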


\begin{proof} From Definition \ref{def:influence} and Eq. \ref{eq:fabprobset}, we get:

\begin{align*}
\tau(p)&=Pr^\mathcal{C}_{Fav}(p)
&=\sum_{\forall c\in\mathcal{C}}{\frac{Pr^c_{DSky}(p)}{\sum_{\forall p^\prime\in UDS(c)}{Pr^c_{DSky}(p^\prime)}}}
\end{align*}

Now, we can divide the customers $c\in\mathcal{C}$ in view of the product $p$ into two groups: (a) the customers $c\in\mathcal{C}$ that appear in the uncertain reverse skyline of $p$, i.e., $URS(p)$ and (b) the rest, i.e., $\mathcal{C}\setminus URS(p)$. Therefore, we can rewrite the above as given as follows:
 %Eq. \ref{eq:fabprobset} as:

\begin{align*}
\tau(p)&=\sum_{\forall c\in URS(p)}{\frac{Pr^c_{DSky}(p)}{\sum_{\forall p^\prime\in UDS(c)}{Pr^c_{DSky}(p^\prime)}}}\\
&+\sum_{\forall c^\prime\in\{\mathcal{C}\setminus URS(p)\}}{\frac{Pr^{c^\prime}_{DSky}(p)}{\sum_{\forall p^\prime\in UDS(c^\prime)}{Pr^{c^\prime}_{DSky}(p^\prime)}}}\\
 &=\sum_{\forall c\in URS(p)}{Pr^c_{Fav}(p)}+\sum_{\forall c^\prime\in \{\mathcal{C}\setminus URS(p)\}}{Pr^{c^\prime}_{Fav}(p)}
\end{align*}

According to Definition \ref{def:urs}, a product $p$ does not appear in the uncertain dynamic skyline of a customer $c^\prime$ if $c^\prime\not\in URS(p)$. Therefore, we get $Pr^{c^\prime}_{Fav}(p)=0$, $\forall c^\prime\in\mathcal{C}\setminus URS(p)$ and the above can be rewritten as given as follows:
%Eq. \ref{eq:favprobsetursvsnonurs}

\begin{align}
\tau(p)&=\sum_{\forall c\in URS(p)}{Pr^c_{Fav}(p)}+0\nonumber\\
\tau(p)         &=Pr^{URS(p)}_{Fav}(p)
\label{eq:ursscore}
\end{align}

Hence, the lemma, i.e., $\tau(p)=Pr^{URS(p)}_{Fav}(p)$. %=\sum_{\forall c\in URS(p)}{Pr^c_{Fav}(p)}$.
\end{proof}

%computing its \emph{influence set}

From Lemma \ref{lem:influencescorenew}, we conclude that the efficiency of computing the \emph{influence score} of a product depends merely on the efficiency of computing its \emph{uncertain reverse skyline}, i.e., $URS(p)$. We present efficient pruning ideas and R-Tree data indexing based techniques for processing the uncertain reverse skyline query of a product in Section \ref{sec:ourapproach}. As we experience voluminous product and customer data in most data retrieval systems these days, we also present a parallel uncertain reverse skyline query evaluation technique in Section \ref{sec:parallelapproach}, which outperforms its serial counterparts significantly.

\section{Our Approach}
\label{sec:ourapproach}

This section presents our pruning ideas and the detail of uncertain reverse skyline query processing techniques based on probabilistic R-Tree data indexing. 

\subsection{Pruning Ideas}

\begin{definition}
\textbf{Orthant.} Given an object $p$ and a query $q$, the orthant $O$ of $p$ w.r.t. $q$, denoted by $O_q(p)$, is computed as: $O_q^i(p)=0$ \emph{iff} $p^i\le q^i$, otherwise $O_q^i(p)=1$. 
\end{definition}

A $d$-dimensional query $q$ has $2^d$ orthants in total, e.g., the orthants of $w_1$ and $w_2$ are shown as red-colored binary strings in Fig. \ref{fig:ursofw1andw2}(a) and Fig. \ref{fig:ursofw1andw2}(b), respectively.

\begin{definition} \textbf{Midpoint.} The midpoint $m$ of a product $p$ w.r.t. a query product $q$ is computed as given as follows: $m^i = (p^i + q^i)/2, \forall i\in\{1,2,...,d\}$. 
\end{definition}

\begin{example}
Consider the datasets of wine products $\mathcal{W}$ as given in Fig. \ref{fig:exampledataset}(a). The midpoint of $w_6$ w.r.t. the query product $w_1$ is $m_6=<55, 75>$. Similarly the midpoints of $w_2$, $w_3$ and $w_4$ w.r.t. $w_1$ are $m_2=<30, 80>$, $m_3=<50, 120>$ and $m_4=<35, 145>$, respectively. These midpoints are depicted in Fig. \ref{fig:ursofw1andw2}(a).
\end{example}

\begin{lemma}
Assume $m^\prime$ is a midpoint of $p^\prime$ w.r.t. $p$ and the followings hold: (i) $O_p(m^\prime)=O_p(c)$; (ii) $m^\prime\prec_p c$; and (iii) $Pr(p)< Pr(p^\prime)\vee (Pr(p)\times (1-Pr(p^\prime)))< Pr(p^\prime)$. Then, we get $Pr^c_{DSky}(p)<Pr^c_{DSky}(p^\prime)$ and $c\not\in URS(p)$.
\label{lem:midpointfilteringURS}  
\end{lemma}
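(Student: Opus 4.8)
The plan is to prove the numerical statement $Pr^c_{DSky}(p)<Pr^c_{DSky}(p^\prime)$ first; the membership conclusion $c\not\in URS(p)$ then follows at once, since $p^\prime$ witnesses a violation of Definition \ref{def:urs} (it satisfies $p^\prime\prec_c p$ together with $Pr^c_{DSky}(p^\prime)\ge Pr^c_{DSky}(p)$). To obtain the inequality I would invoke Lemma \ref{lem:dskyrelationship} with the roles of $p$ and $p^\prime$ interchanged: it asserts that $Pr^c_{DSky}(p)<Pr^c_{DSky}(p^\prime)$ holds exactly when (I) $p^\prime\prec_c p$ and (II) $Pr(p)<Pr(p^\prime)\vee (Pr(p)\times(1-Pr(p^\prime)))<Pr(p^\prime)$. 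Clause (II) is verbatim hypothesis (iii), so the whole problem reduces to deriving the dynamic dominance $p^\prime\prec_c p$ from the two geometric hypotheses (i) and (ii).

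The geometric core rests on two elementary identities about the midpoint $m^\prime$, where $m^{\prime i}=(p^{\prime i}+p^i)/2$. First, $|m^{\prime i}-p^i|=\tfrac{1}{2}|p^{\prime i}-p^i|$ for every $i$, so hypothesis (ii), $m^\prime\prec_p c$, translates into $|p^{\prime i}-p^i|\le 2|c^i-p^i|$ for all $i$ and $|p^{\prime j}-p^j|<2|c^j-p^j|$ for some $j$. Second, because $m^\prime$ lies on the segment joining $p$ and $p^\prime$, we have $O_p^i(m^\prime)=O_p^i(p^\prime)$ for every $i$; combined with hypothesis (i), $O_p(m^\prime)=O_p(c)$, this gives $O_p^i(p^\prime)=O_p^i(c)$, i.e., $p^\prime$ and $c$ lie on the same side of $p$ in each coordinate. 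This co-orthant property is exactly what lets the coarse bound $|p^{\prime i}-p^i|\le 2|c^i-p^i|$ be converted into the target bound $|p^{\prime i}-c^i|\le|p^i-c^i|$.

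The remaining work is a per-dimension case analysis driven by the orthant bit. If $p^{\prime i}\le p^i$ and $c^i\le p^i$, the translated bound becomes $p^{\prime i}\ge 2c^i-p^i$, which together with $p^{\prime i}\le p^i$ sandwiches $p^{\prime i}$ into $[2c^i-p^i,\,p^i]$ — precisely the interval on which $|p^{\prime i}-c^i|\le p^i-c^i=|p^i-c^i|$; the symmetric case $p^{\prime i}>p^i$, $c^i>p^i$ is handled identically. Running the same computation on the distinguished coordinate $j$ upgrades the inequality to strict, yielding $p^\prime\prec_c p$ and closing the argument via Lemma \ref{lem:dskyrelationship}. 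I expect the delicate point to be this transfer of strictness: the bound from (ii) is strict at the midpoint, yet when $p^{\prime j}=p^j$ the corresponding comparison of $|p^{\prime j}-c^j|$ with $|p^j-c^j|$ collapses to an equality, so one must ensure that the strict coordinate supplied by (ii) is genuinely one where $p^\prime$ and $p$ differ. I would dispose of this by assuming, as is standard in the skyline literature, that the points are in general position with distinct coordinate values, noting that a coordinate with $p^{\prime j}=p^j$ contributes equally to both $Pr^c_{DSky}(p)$ and $Pr^c_{DSky}(p^\prime)$ and thus cannot be the one that separates them.
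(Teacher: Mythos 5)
Your proof is correct and follows the same skeleton as the paper's: establish $p^\prime\prec_c p$, feed conditions (ii)--(iii) into Lemma \ref{lem:dskyrelationship} to get $Pr^c_{DSky}(p)<Pr^c_{DSky}(p^\prime)$, and conclude $c\not\in URS(p)$ from Definition \ref{def:urs}. The one substantive difference is in how the geometric step is handled: the paper simply cites the equivalence $p^\prime\prec_c p\leftrightarrow m^\prime\prec_p c$ (under the co-orthant condition) from Wu et al.\ \cite{WuTWDY09} as a known fact, whereas you derive it from scratch via the identity $|m^{\prime i}-p^i|=\tfrac{1}{2}|p^{\prime i}-p^i|$ and the per-coordinate sandwich $p^{\prime i}\in[2c^i-p^i,\,p^i]$. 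Your elementary derivation buys something the citation hides: you correctly observe that the strict coordinate of $m^\prime\prec_p c$ need not transfer to a strict coordinate of $p^\prime\prec_c p$ when $p^{\prime j}=p^j$ but $c^j\ne p^j$ (indeed, for $p=(0,0)$, $p^\prime=(0,2)$, $c=(-1,1)$ one has $m^\prime\prec_p c$ in the same orthant yet $p^\prime\not\prec_c p$), so the lemma as stated implicitly relies on a general-position assumption on the coordinates. Flagging and discharging that assumption explicitly, as you do, is a genuine improvement in rigor over the paper's one-line appeal to \cite{WuTWDY09}.
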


\begin{proof}
As $m^\prime$ is a midpoint of $p^\prime$ w.r.t. the product $p$, we get $p^\prime\prec_{c} p \leftrightarrow m^\prime\prec_{p}c$ iff $O_p(m^\prime)=O_p(c)$ \cite{WuTWDY09}. This satisfies the conditions given for Lemma \ref{lem:dskyrelationship}, i.e., $Pr^c_{DSky}(p)<Pr^c_{DSky}(p^\prime)$ if conditions (i)-(iii) hold. Now, we get $c\not\in URS(p)$ according to Definition \ref{def:urs} as $p^\prime\prec_c p$ and $Pr^c_{DSky}(p)<Pr^c_{DSky}(p^\prime)$. Hence, the lemma.
\end{proof}

\begin{figure}[tb]
\centering
\setlength{\tabcolsep}{2pt}
\centering
\begin{tabular}{cc}
\begin{minipage}[a]{0.48\linewidth}
\subfigure[]{
\includegraphics[scale=0.18]{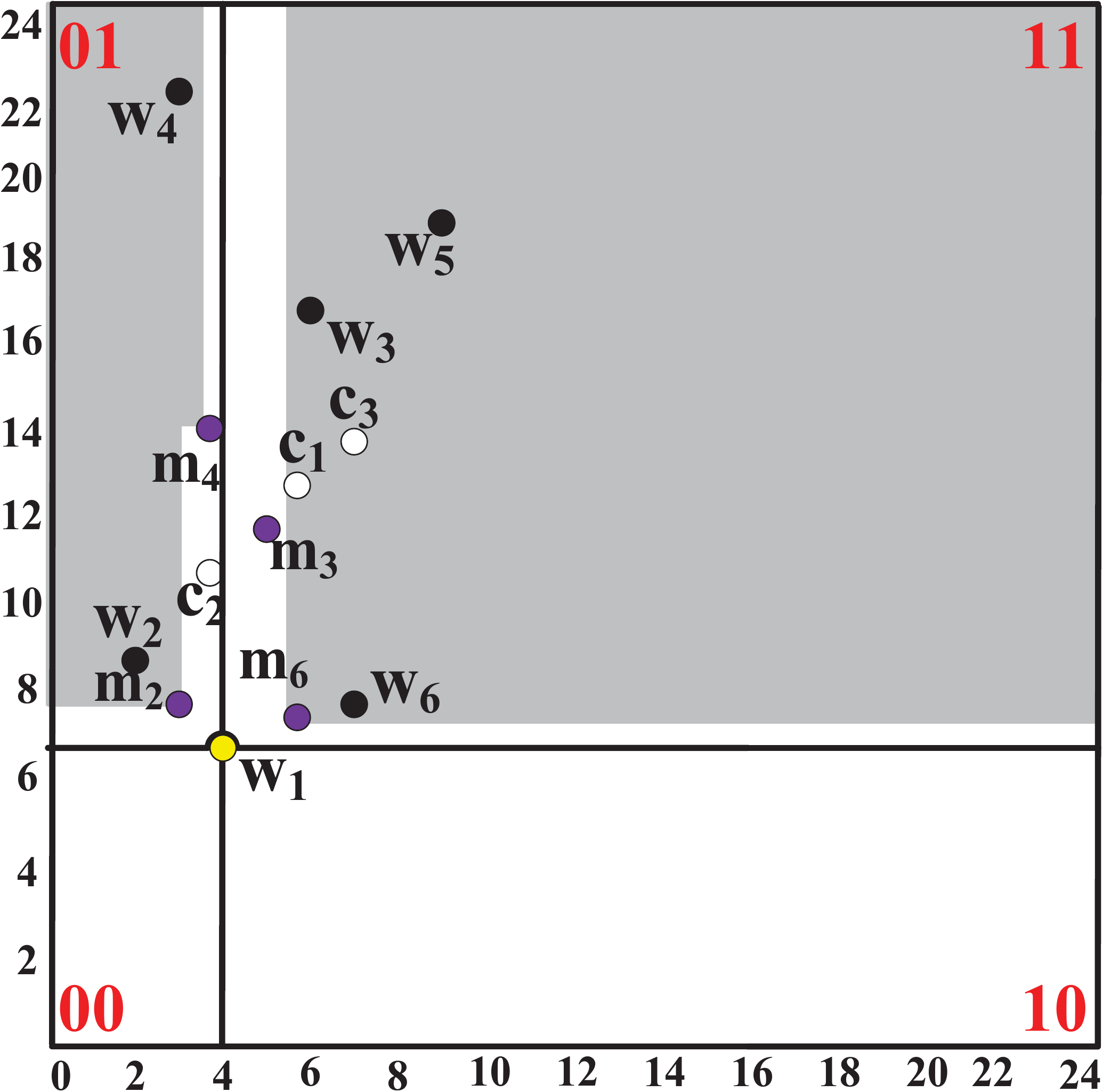}
}
\end{minipage}
&
\begin{minipage}[a]{0.48\linewidth}
\subfigure[]{
\includegraphics[scale=0.18]{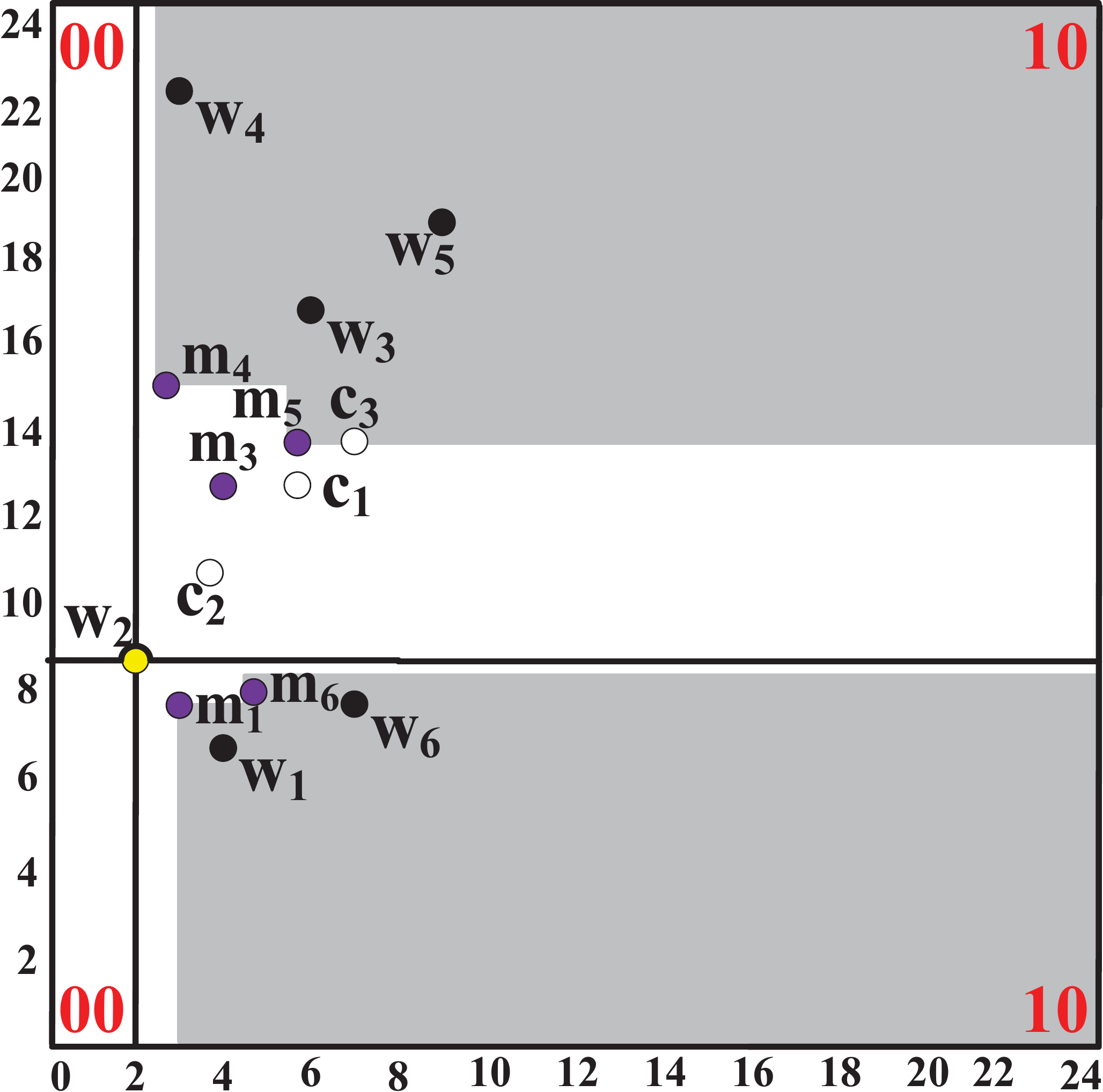}
}
\end{minipage}
\end{tabular}
%\vspace{-4ex}
\caption{UD-Dominance Regions (UDRs) of (a) $w_1$ and (b) $w_2$}
\label{fig:ursofw1andw2}
%\vspace{-2ex}
\end{figure}

\begin{definition} \textbf{UD-Dominance Region (UDR).} Given a set of probabilistic products $\mathcal{P}$ in a $d$-dimensional data space, a region is said to be a \emph{UD-dominance region} of a product $p\in\mathcal{P}$, denoted by $UDR(p)$, for which $\forall c\in UDR(p)$, $\exists p^\prime\in\mathcal{P}$ such that the followings hold: (i) $O_p(m^\prime)=O_p(c)$, (ii) $m^\prime\prec_p c$ and (iii) $Pr_{DSky}^c(p)\le Pr_{DSky}^c(p^\prime)$, where $m^\prime$ is the midpoint of the product $p^\prime$ w.r.t. $p$. 
\label{def:uddominanceregion}
\end{definition}

\begin{example}
Consider the datasets of probabilistic wine products $\mathcal{W}$ and the customers $\mathcal{C}$ as given in Fig. \ref{fig:exampledataset}. The UD-dominance regions of $w_1$ and $w_2$ are shown as gray patterned regions in Fig. \ref{fig:ursofw1andw2}(a) and Fig. \ref{fig:ursofw1andw2}(b), respectively. Here, the $UDR(w_1)$ is defined by the midpoints of $w_2$, $w_4$ and $w_6$ w.r.t. $w_1$. Similarly, the $UDR(w_2)$ is defined by the midpoints of $w_1$, $w_4$, $w_5$ and $w_6$ w.r.t. $w_2$.  
\end{example}

\begin{lemma}
A customer $c\in UDR(p)$ is not an uncertain reverse skyline of $p$, i.e., $c\not\in URS(p)$ if $c\in UDR(p)$.
\label{lem:udr}
\end{lemma}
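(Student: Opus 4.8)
The plan is to unfold the two relevant definitions and exhibit the witness product handed to us by membership in $UDR(p)$ as precisely the product whose existence violates the defining test of $URS(p)$. Concretely, I would fix an arbitrary customer $c\in UDR(p)$. By Definition \ref{def:uddominanceregion}, this membership immediately supplies a product $p^\prime\in\mathcal{P}$ together with its midpoint $m^\prime$ w.r.t. $p$ satisfying (i) $O_p(m^\prime)=O_p(c)$, (ii) $m^\prime\prec_p c$, and (iii) $Pr^c_{DSky}(p)\le Pr^c_{DSky}(p^\prime)$. The entire argument then reduces to checking that this $p^\prime$ meets both clauses (a) and (b) of Definition \ref{def:urs} that disqualify $c$ from being a reverse skyline of $p$.

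For clause (a) I would invoke the midpoint--dynamic-dominance equivalence already used in the proof of Lemma \ref{lem:midpointfilteringURS}, namely $p^\prime\prec_c p \leftrightarrow m^\prime\prec_p c$ whenever $O_p(m^\prime)=O_p(c)$ \cite{WuTWDY09}. Condition (i) is the side hypothesis legitimising this equivalence and condition (ii) is its right-hand side, so together they yield $p^\prime\prec_c p$, which is exactly clause (a). Clause (b), $Pr^c_{DSky}(p^\prime)\ge Pr^c_{DSky}(p)$, is nothing but condition (iii) read the other way around. Hence there exists $p^\prime\in\mathcal{P}$ with $p^\prime\prec_c p$ and $Pr^c_{DSky}(p^\prime)\ge Pr^c_{DSky}(p)$, and Definition \ref{def:urs} then forces $c\not\in URS(p)$. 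Note that, unlike Lemma \ref{lem:midpointfilteringURS}, I do not need the probability translation of Lemma \ref{lem:dskyrelationship} here, since the UDR definition already phrases condition (iii) directly in terms of dynamic skyline probabilities.

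I would be upfront that no real difficulty is hidden in this statement: it is essentially a re-reading of Definition \ref{def:uddominanceregion}, whose three conditions were set up to align one-for-one with the negation of the URS membership test. The only genuine content is the midpoint-to-dominance translation, which is external (cited to \cite{WuTWDY09}) and already exploited earlier. If anything, the step deserving the most care is confirming that the orthant-matching condition (i) is what justifies using the equivalence in the direction $m^\prime\prec_p c \Rightarrow p^\prime\prec_c p$; once that is observed, the proof closes in two lines.
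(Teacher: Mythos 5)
Your proposal is correct and follows essentially the same route as the paper's proof: extract the witness $p^\prime$ from Definition~\ref{def:uddominanceregion}, convert conditions (i)--(ii) into $p^\prime\prec_c p$ via the midpoint--dominance equivalence, and read condition (iii) as the probability clause that disqualifies $c$ from $URS(p)$. The only cosmetic difference is that the paper passes explicitly through ``$p\notin UDS(c)$'' (Definition~\ref{def:uds}) before concluding $c\notin URS(p)$, whereas you apply the negated membership test of Definition~\ref{def:urs} directly; these are equivalent.
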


\begin{proof}
Assume that $c\in UDR(p)$. According to the Definition \ref{def:uddominanceregion}, $\exists p^\prime\in\mathcal{P}$ such that the midpoint of $p^\prime$ w.r.t. $p$ dominates $c$ w.r.t. $p$, i.e., $p^\prime\prec_c p$ (conditions (i)-(ii)) and also,  $Pr_{DSky}^c(p)\le Pr_{DSky}^c(p^\prime)$. Therefore, the $UDS(c)$ does not include $p$ according to Definition \ref{def:uds}, which implies $c\not\in URS(p)$ according to Definition \ref{def:urs}. 
\end{proof}

\begin{definition} \textbf{Uncertain Midpoint Skyline.} Given a set of probabilistic products $\mathcal{P}$, the uncertain midpoint skyline of a probabilistic query product $q$, denoted by $UMSL(q)$, consists of a minimal set of midpoints of the products $p\in\mathcal{P}$ that defines the \emph{UD-dominance region} of $q$. 
\label{def:uncertainmidpointskyline}
\end{definition}

\begin{lemma}
%The $UMSL(q)$ is correct.
If there are two products $p\in\mathcal{P}$ and $p^\prime\in\mathcal{P}$ such that the following holds: (i) $m\prec_q m^\prime$ and (ii) $\forall c\in\mathcal{C}$, $m^\prime\prec_q c\rightarrow m\prec_q c$ and $Pr_{DSky}^c(p)>Pr_{DSky}^c(q)$, then $m^\prime\not\in UMSL(q)$ but $m\in UMSL(q)$, where $m$ and $m^\prime$ are the midpoints of the products $p$ and $p^\prime$ w.r.t. $q$, respectively.
\label{lem:umslcorrectness}
\end{lemma}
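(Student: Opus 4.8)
The plan is to unfold Definition~\ref{def:uncertainmidpointskyline}: $UMSL(q)$ is the minimal set of midpoints whose (per-orthant) dominance regions, restricted by the dynamic-skyline-probability condition of Definition~\ref{def:uddominanceregion}, union to exactly $UDR(q)$. Accordingly I would prove the two assertions separately: (a) $m^\prime\notin UMSL(q)$ by showing that $m^\prime$ is \emph{redundant}, i.e. every customer it places in $UDR(q)$ is already placed there by $m$, so deleting $m^\prime$ leaves $UDR(q)$ unchanged; and (b) $m\in UMSL(q)$ by showing that $m$ cannot be discarded, since it strictly dominates $m^\prime$ and therefore carries a pruning region that $m^\prime$ cannot reproduce.

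For part (a) I would take an arbitrary $c\in\mathcal{C}$ for which $m^\prime$ is the witness that $c\in UDR(q)$; by Definition~\ref{def:uddominanceregion} this means $O_q(m^\prime)=O_q(c)$, $m^\prime\prec_q c$, and $Pr^c_{DSky}(q)\le Pr^c_{DSky}(p^\prime)$. Hypothesis~(ii) immediately supplies $m\prec_q c$ and $Pr^c_{DSky}(p)>Pr^c_{DSky}(q)$, the latter giving $Pr^c_{DSky}(q)\le Pr^c_{DSky}(p)$. Together with the orthant equality $O_q(m)=O_q(c)$ (discussed below), these are exactly conditions (i)--(iii) of Definition~\ref{def:uddominanceregion} for the midpoint $m$ of $p$, so $c$ is still a witness point of $UDR(q)$ after $m^\prime$ is removed. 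Since $c$ was arbitrary, $UDR(q)$ is unaffected by dropping $m^\prime$, and minimality of $UMSL(q)$ forces $m^\prime\notin UMSL(q)$.

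For part (b) I would use that $\prec_q$ is a strict (irreflexive, asymmetric) order: from hypothesis~(i), $m\prec_q m^\prime$, hence $m^\prime\not\prec_q m$, so $m^\prime$ cannot in turn subsume $m$; moreover the strict coordinate $\exists j,\;|m^j-q^j|<|{m^\prime}^j-q^j|$ exhibits customers dominated by $m$ but not by $m^\prime$, so $m$ contributes a pruning region strictly larger than $m^\prime$'s and therefore survives as the representative midpoint retained in the minimal set.

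The main obstacle is the orthant bookkeeping. The whole argument hinges on the equivalence $p^\prime\prec_c q\leftrightarrow m^\prime\prec_q c$, which (as used in Lemma~\ref{lem:midpointfilteringURS}, following \cite{WuTWDY09}) is valid only when $O_q(m^\prime)=O_q(c)$; the same is needed for $m$. Note that $m\prec_q m^\prime$ does \emph{not} by itself guarantee $O_q(m)=O_q(m^\prime)$, since dynamic dominance compares only absolute distances to $q$. I would close this gap by appealing to the fact that $UMSL(q)$, like the classical midpoint skyline, is built orthant by orthant: $m$, $m^\prime$ and the relevant $c$ all belong to the single orthant under consideration, so $O_q(m)=O_q(m^\prime)=O_q(c)$ holds automatically, and transitivity of $\prec_q$ applied to $m\prec_q m^\prime\prec_q c$ then delivers $m\prec_q c$ independently, confirming consistency with hypothesis~(ii).
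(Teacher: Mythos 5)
Your proposal is correct and follows essentially the same route as the paper's proof: both show that any customer pruned by $m^\prime$ (via $m^\prime\prec_q c$ and the probability condition) is already pruned by $m$ (via transitivity of $\prec_q$, or directly via hypothesis (ii)), so $m^\prime$ is redundant in the minimal set while $m$ is retained. If anything you are more careful than the paper, which silently omits the orthant condition $O_q(m)=O_q(c)$ that you correctly flag as necessary for the $p^\prime\prec_c q\leftrightarrow m^\prime\prec_q c$ equivalence; both proofs are equally informal on the positive claim $m\in UMSL(q)$, which strictly speaking would require ruling out a third midpoint subsuming $m$.
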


\begin{proof}
Assume that $\exists c\in\mathcal{C}$ such that $m^\prime\prec_q c$ and $Pr_{DSky}^c(p^\prime)>Pr_{DSky}^c(q)$, where $m^\prime$ is the midpoint of the product $p^\prime\in\mathcal{P}$, but $m^\prime\not\in UMSL(q)$. This can not happen. Either $m^\prime\in UMSL(q)$ or $\exists m\in UMSL(q)$ such that $m\prec_q m^\prime$ and $Pr_{DSky}^c(p)>Pr_{DSky}^c(q)$, where $m$ is the midpoint of a product $p\in\mathcal{P}$ and $p\ne p^\prime$. For the former case, the $UMSL(q)$ is already correct as $c$ will be pruned by $m^\prime$ from $URS(q)$. For the later case, we get $m\prec_q c$ as $m\prec_q m^\prime$ and $m^\prime\prec_q c$ (transitivity of dominance). Since $Pr_{DSky}^c(p)>Pr_{DSky}^c(q)$, $c$ can be pruned from $URS(q)$ by $m$ even if $m^\prime\not\in UMSL(q)$. Hence, the lemma.
\end{proof}

\begin{example}
The uncertain midpoint skyline of $w_1$ consists of the midpoints of the products $w_2$, $w_4$ and $w_6$ w.r.t. $w_1$, i.e., $UMSL(w_1)=\{m_2, m_4, m_6\}$, where $m_2$, $m_4$, $m_6$ are the midpoints of $w_2$, $w_4$ and $w_6$ w.r.t. $w_1$. Similarly, the $UMSL(w_2)$ consists of the midpoints of the wine products $w_1$, $w_4$, $w_5$ and $w_6$ w.r.t. $w_2$.
\end{example}

\subsection{Data Indexing}
\label{sec:prtree}
%the UD-dominance region
From Lemma \ref{lem:udr} and Lemma \ref{lem:umslcorrectness}, it is obvious that we need to compute the $UMSL(q)$ of a probabilistic product $q$ to compute its uncertain reverse skyline. Thats is, the midpoints of the probabilistic products $p\in\mathcal{P}$ that defines the UD-domiance region of the query product $q$. This section presents an efficient approach to approximate the UD-dominance region of a probabilistic product by extending the R-Tree \cite{Guttman84} based data indexing for probabilistic product databases, called PR-Tree, which can take advantage of Lemma \ref{lem:udr} to compute its uncertain reverse skyline. The idea of PR-Tree is to augment each R-Tree node with the maximum and minimum probabilities of its children and store these probabilities in the tree node along with the links to its children. To construct the PR-Tree, we convert each product $p\in\mathcal{P}$ to its corresponding midpoint $m$ and then, insert it in the tree. We also index the customer data by the general R-Tree, which is refereed as CR-Tree in this paper. We use R-Tree to denote either of the trees throughout this paper. In connection with computing the uncertain reverse skyline of a product $q$ using R-Tree, we make the following statements.

\begin{itemize}
\item A midpoint $m$ is said to have the same orthant as an R-Tree node $n$, denoted by $O_q(m)=O_q(n)$, if all $2^d$ corners of node $n$ have the same orthant w.r.t. $q$ as $m$ does w.r.t. $q$.

\item An object $m$ dynamically dominates a node $n$ w.r.t. a query object $q$, denoted by $m\prec_q n$, if all $2^d$ corners of $n$ is dynamically dominated by $m$ w.r.t. $q$.

\item The tree nodes are always accessed in order of their distances to the query product $q$.
\end{itemize}

\subsection{Query Processing}
This section describes how to process the uncertain reverse skyline query and the influence (score) of a probabilistic product through its uncertain reverse skyline in detail.

\subsubsection{Uncertain Reverse Skyline}
\label{sec:serialURS}

While computing the uncertain reverse skyline of a product $q$, we prune a PR-Tree node  as per the following lemma.

\begin{lemma}
A PR-Tree node $n$ is \emph{pruned} if $\exists m^\prime\in\mathcal{M}^\prime$ such that (i) $O_q(m^\prime)=O_q(n)$, (ii) $m^\prime\prec_q n$ and (iii) $Pr(q)<Pr(p^\prime)\vee Pr(q)\times(1-Pr(p^\prime))<Pr(p^\prime)$, where $\mathcal{M}^\prime$ is the set of midpoints of the products $\mathcal{P}^\prime\subseteq\mathcal{P}$ accessed so far in the PR-Tree while computing $UMSL(q)$ for $URS(q)$ and $m^\prime$ is the midpoint of the product $p^\prime\in\mathcal{P}^\prime$.
\label{ursPRTreeNodePruning}
\end{lemma}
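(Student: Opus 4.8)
The plan is to show that pruning node $n$ is safe, i.e.\ that none of the midpoints stored in the subtree of $n$ is needed in $UMSL(q)$, so discarding the whole node cannot cause any customer to be wrongly retained in $URS(q)$. The natural vehicle is Lemma~\ref{lem:umslcorrectness}, applied with the roles swapped: the already-found midpoint $m'\in\mathcal{M}'$ plays the part of the ``keeper,'' and every midpoint $m$ lying inside $n$ plays the part of the redundant midpoint that may be dropped. So the target reduces to checking, for an arbitrary $m\in n$, the two hypotheses of that lemma with $m'$ as the dominating midpoint.

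First I would lift the two geometric conditions from the node level to the point level. Let $p$ be any product whose midpoint $m$ lies in the MBR of $n$. Since $O_q(m')=O_q(n)$ asserts that every one of the $2^d$ corners of $n$ shares the orthant of $m'$ w.r.t.\ $q$, and an MBR contained in a single orthant holds only points of that orthant, I obtain $O_q(m')=O_q(m)$. Likewise $m'\prec_q n$ asserts that $m'$ dominates all $2^d$ corners; because inside a fixed orthant the transformed coordinates $|x^i-q^i|$ are monotone in $x^i$, dominating the corner nearest to $q$ already dominates every interior point, and the strict-inequality requirement transfers as well, so $m'\prec_q m$.

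Next I would verify the hypotheses of Lemma~\ref{lem:umslcorrectness}. Its condition (i), here $m'\prec_q m$, is exactly what the lifting step produced. For condition (ii), take any customer $c$ with $m\prec_q c$; transitivity of dynamic dominance together with $m'\prec_q m$ gives $m'\prec_q c$, so $m'$ reaches every customer that $m$ does. It remains to supply the probabilistic ordering $Pr^c_{DSky}(p')>Pr^c_{DSky}(q)$. For this I would use the midpoint equivalence $p'\prec_c q\leftrightarrow m'\prec_q c$, which is valid because $O_q(m')=O_q(c)$ (this follows from $O_q(m')=O_q(m)=O_q(c)$), and then invoke Lemma~\ref{lem:dskyrelationship} with $q$ in the dominated role and $p'$ in the dominating role: condition~(iii) of the present lemma, $Pr(q)<Pr(p')\vee Pr(q)\times(1-Pr(p'))<Pr(p')$, is precisely the probability clause of Lemma~\ref{lem:dskyrelationship}, so it yields $Pr^c_{DSky}(q)<Pr^c_{DSky}(p')$. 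Hence Lemma~\ref{lem:umslcorrectness} applies and $m\notin UMSL(q)$; since $m$ was an arbitrary midpoint of $n$, no descendant of $n$ contributes to $UMSL(q)$, and $n$ may be pruned.

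The hard part will be the node-to-point step and keeping the role assignments straight. I must argue carefully that ``$m'$ dominates all corners'' and ``all corners share $m'$'s orthant'' really do transfer to every interior midpoint, including the handling of the strict inequality demanded by dynamic dominance, and I must note that condition~(iii) constrains only the fixed, already-accessed product $p'$, so a single $m'$ uniformly covers every $m$ in $n$ regardless of which products those midpoints belong to. Everything else is transitivity of $\prec_q$, transitivity of orthant equality, and a direct instantiation of the two earlier lemmas.
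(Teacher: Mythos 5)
Your proposal is correct and follows essentially the same route as the paper's proof: lift the orthant and dominance conditions from the node's corners to every interior midpoint $m\in n$, use transitivity of $\prec_q$ to show $m'$ reaches every customer that $m$ reaches, and use condition (iii) via Lemma~\ref{lem:dskyrelationship} to supply the probability clause, so $m'$ subsumes the pruning power of the whole subtree. Your packaging of the final step through Lemma~\ref{lem:umslcorrectness} is only a more explicit formalization of the argument the paper states directly.
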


\begin{proof}
As all $2^d$ corners of node $n$ has the same orthant w.r.t. $q$ as $m$ does w.r.t. $q$ (condition (i)) and any $m\in n$ is bounded by the corners of $n$, $m$ must have the same orthant w.r.t. $q$ as $m^\prime$ does. Also, as $m^\prime$ dynamically dominates $n$ w.r.t. $q$ and $m\in n$ is bounded by the corners of $n$, $m^\prime$ also dynamically dominates $m$ w.r.t. $q$, i.e., $m^\prime\prec_q m$. Therefore, $\forall c\in\mathcal{C}$, if $m\prec_q c$ and $Pr_{DSky}^c(p)\ge Pr_{DSky}^c(q)$, we also get $m^\prime\prec_q c$ and $Pr_{DSky}^c(p^\prime)\ge Pr_{DSky}^c(q)$ (condition (iii)), which implies $n$ can be \emph{pruned}. Hence, the lemma.
\end{proof}

While computing the uncertain reverse skyline of a product $q$, we prune a CR-Tree node  as per the following lemma.

\begin{lemma}
A CR-Tree node $n$ is \emph{pruned} if $\exists m\in UMSL(q)$ such that (i) $O_q(m)=O_q(n)$ and (ii) $m \prec_q n$.
\label{ursCRTreeNodePruning}
\end{lemma}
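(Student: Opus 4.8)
The plan is to show that every customer $c$ indexed under node $n$ lies inside the UD-dominance region $UDR(q)$, so that Lemma \ref{lem:udr} immediately yields $c\not\in URS(q)$. Since pruning $n$ is justified precisely when no customer in the subtree rooted at $n$ can belong to $URS(q)$, establishing this for an arbitrary $c$ under $n$ proves the claim.

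First I would fix an arbitrary customer $c$ bounded by the $2^d$ corners of node $n$ and lift the two node-level hypotheses down to the point level, exactly mirroring the argument in the proof of Lemma \ref{ursPRTreeNodePruning}. From condition (i), all $2^d$ corners of $n$ share the orthant of $m$ w.r.t. $q$; since $c$ lies in the box spanned by those corners, it follows that $O_q(m)=O_q(c)$. From condition (ii), $m$ dynamically dominates every corner of $n$ w.r.t. $q$, and again because $c$ is bounded by those corners, we obtain $m\prec_q c$.

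Next I would supply the missing dynamic skyline probability condition. The midpoint $m\in UMSL(q)$ is, by Definition \ref{def:uncertainmidpointskyline}, the midpoint of some product $p\in\mathcal{P}$ that participates in defining $UDR(q)$; by the way $UMSL(q)$ is assembled during the PR-Tree traversal under Lemma \ref{ursPRTreeNodePruning}, $p$ satisfies $Pr(q)<Pr(p)\vee Pr(q)\times(1-Pr(p))<Pr(p)$. Combining this with $O_q(m)=O_q(c)$ and $m\prec_q c$ just derived, the hypotheses of Lemma \ref{lem:midpointfilteringURS} hold, so $Pr^c_{DSky}(q)\le Pr^c_{DSky}(p)$. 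All three conditions of Definition \ref{def:uddominanceregion} are therefore met, giving $c\in UDR(q)$, and Lemma \ref{lem:udr} then yields $c\not\in URS(q)$. As $c$ was arbitrary, no customer under $n$ is a reverse skyline point of $q$, so $n$ may be pruned.

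The main obstacle I anticipate is justifying the probability condition without circularity: it does \emph{not} follow from the two stated hypotheses (i)--(ii) alone, but is carried implicitly by the membership $m\in UMSL(q)$. The care point is to argue that being an element of $UMSL(q)$ --- as opposed to an arbitrary product midpoint with the right orthant and dominance --- is exactly what guarantees that the underlying product $p$ passes the probability test, so that Definition \ref{def:uddominanceregion}, and hence Lemma \ref{lem:udr}, genuinely applies.
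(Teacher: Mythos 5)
Your proposal is correct and follows essentially the same route as the paper's proof: lift conditions (i) and (ii) from the corners of $n$ to each customer $c$ it bounds, then use the membership $m\in UMSL(q)$ to supply the dynamic-skyline-probability condition and conclude $c\not\in URS(q)$. You are in fact more explicit than the paper about where that probability condition comes from (the test applied when $m$ is admitted to $UMSL(q)$, fed into Lemma~\ref{lem:midpointfilteringURS}), whereas the paper simply asserts $Pr^c_{DSky}(p)\ge Pr^c_{DSky}(q)$ ``as $m\in UMSL(q)$''; your extra detour through $UDR(q)$ and Lemma~\ref{lem:udr} is harmless but unnecessary, since Lemma~\ref{lem:midpointfilteringURS} already yields $c\not\in URS(q)$ directly.
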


\begin{proof}
As all $2^d$ corners of node $n$ has the same orthant w.r.t. $q$ as $m$ does w.r.t. $q$ (condition (i)) and any $c\in n$ is bounded by the corners of $n$, $c$ must have the same orthant w.r.t. $q$ as $m$ does. Also, as $m$ dynamically dominates $n$ w.r.t. $q$ (condition (ii)) and $c\in n$ is bounded by the corners of $n$, $m$ dynamically dominates $c$ w.r.t. $q$, i.e., $m\prec_q c$. Therefore, $\exists p\in\mathcal{P}$ such that $p\prec_c q$, where $p$ is the corresponding product of the midpoint $m$ and $Pr_{DSky}^c(p)\ge Pr_{DSky}^c(q)$ as $m\in UMSL(q)$, which implies $n$ can be \emph{pruned}. Hence, the lemma.
\end{proof}

The steps of computing the uncertain reverse skyline of a product $q$, i.e., $URS(q)$, with R-Trees are listed as follows:

\begin{enumerate}[topsep=0pt,itemsep=-1ex,partopsep=1ex,parsep=1ex]
\item Firstly, we convert the products $p\in\mathcal{P}$ into their midpoints $m$ w.r.t. $q$ and index them into a PR-Tree.

\item We initialize $UMSL(q)$ to an empty set. Then, we retrieve the children of the root node of the PR-Tree and insert them into a mean-heap $\mathcal{H}^\mathcal{P}_q$. We repeatedly retrieve the front entry $E$ from $\mathcal{H}^\mathcal{P}_q$ until $\mathcal{H}^\mathcal{P}_q$ becomes empty and do the following: ignore $E$ iff $\exists m\in UMSL(q)$ such that (i) $O_q(m)=O_q(E)$ and (ii) $m\prec_q E$ (Lemma \ref{ursPRTreeNodePruning}), otherwise, insert its children into $\mathcal{H}^\mathcal{P}_q$ if $E$ is a non-leaf node, else add the midpoint $m$ contained in $E$ into $UMSL(q)$ iff $Pr(q)<Pr(p)\vee (Pr(q)\times(1-Pr(p)))<Pr(p)$, where $p$ is the corresponding product of the midpoint $m$ in $\mathcal{P}$.
  
\item We index the customer data into a CR-Tree and initialize $URS(q)$ to an empty set. Then, we retrieve the children of the root node of the CR-Tree and insert them into a mean-heap $\mathcal{H}^\mathcal{C}_q$. We repeatedly retrieve the front entry $E$ from $\mathcal{H}^\mathcal{C}_q$ until $\mathcal{H}^\mathcal{C}_q$ becomes empty and do the following: ignore $E$ iff $\exists m\in UMSL(q)$ such that (i) $O_q(m)=O_q(E)$ and $m\prec_q E$ (Lemma \ref{ursCRTreeNodePruning}), otherwise, insert its children into $\mathcal{H}^\mathcal{C}_q$ if $E$ is a non-leaf node, else add the $c$ contained in $E$ into $URS(q)$.
\end{enumerate}

The above steps are pseudocoded in Algorithm \ref{algo:serialURS}.

\begin{algorithm}[tb]
\scriptsize%Computing Uncertain Reverse Skyline
\DontPrintSemicolon % using R-Tree
\SetKwInOut{Input}{Input}\SetKwInOut{Output}{Output}
\Input{$q$: query, $\mathcal{P}$: products, $\mathcal{C}$: customers}
\Output{$URS(q)$: uncertain reverse skyline of $q$}
\Begin
{  
     $\mathcal{M}\leftarrow$converProductsToMidpoints($\mathcal{P}$);\tcp*[r]{midpoints}
     $root^\mathcal{P}\leftarrow$constructPRTree($\mathcal{M}$);\tcp*[r]{create PR-Tree}

     $ UMSL(q)\leftarrow\emptyset$;\tcp*[r]{initialization}

     $\mathcal{H}^\mathcal{P}_q\leftarrow$ insert($\mathcal{H}^\mathcal{P}_q$, $children(root^\mathcal{P})$);\tcp*[r]{mean heap}
     \While{$\mathcal{H}^\mathcal{P}_q\ne\emptyset$}
        { 
             $E\leftarrow$retrieveFront$(\mathcal{H}^\mathcal{P}_q)$;\\    
 		\If{$\exists m\in UMSL(q):O_q(E)=O_q(m)$ and $m\prec_q E$}
		{         
			continue;\tcp*[r]{prune PR-Tree node as per Lemma \ref{ursPRTreeNodePruning}}
		}
	     \ElseIf{$!E$.isLeaf()} 
		{
			$\mathcal{H}^\mathcal{P}_q\leftarrow$ insert($\mathcal{H}^\mathcal{P}_q$, $children(E)$);\\
		} 
             \ElseIf{$Pr(q)<Pr(E)\vee(Pr(q)\times(1-Pr(E)))<Pr(E)$}
            {
                     $UMSL(q)\leftarrow$ add($UMSL(q)$, $E$);\tcp*[r]{$UMSL$ member}
            }            
        }%end_of_while

       $URS(q)\leftarrow\emptyset$;\\       
	 $root^\mathcal{C}\leftarrow$constructCRTree($\mathcal{C}$);\tcp*[r]{create CR-Tree}
	$\mathcal{H}^\mathcal{C}_q\leftarrow$ insert($\mathcal{H}^\mathcal{C}_q$, $children(root^\mathcal{C})$);\tcp*[r]{mean heap}

        \While{$\mathcal{H}^\mathcal{C}_q\ne\emptyset$}
        { 
		$E\leftarrow$retrieveFront$(\mathcal{H}^\mathcal{C}_q)$;\\    
 		\If{$\exists m\in UMSL(q):O_q(m)=O_q(E)$ and $m\prec_q E$}
		{         
			continue;\tcp*[r]{prune CR-Tree node as per Lemma \ref{ursCRTreeNodePruning}}
		}
		\ElseIf{$!E$.isLeaf()} 
		{
			$\mathcal{H}^\mathcal{C}_q\leftarrow$ insert($\mathcal{H}^\mathcal{C}_q$, $children(E)$);\\
		} 
             \Else
            {
                     $URS(q)\leftarrow$ add($URS(q)$, $E$);\tcp*[r]{member of $URS(q)$}
            }      
        }%end_of_while    
}%end_of_Algorithm
\normalsize
\caption{Uncertain Reverse Skyline}
\label{algo:serialURS}
\end{algorithm}

\begin{lemma} %\textbf{Correctness of Algorithm \ref{algo:serialURS}}
Algorithm \ref{algo:serialURS} computes accurately the uncertain reverse skyline of an arbitrary probabilistic product $q$.
\label{lem:correctnessofURSAlgorithm}
\end{lemma}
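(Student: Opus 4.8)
The plan is to establish correctness by a soundness-and-completeness argument split across the algorithm's two phases: first, that the PR-Tree traversal returns exactly the uncertain midpoint skyline $UMSL(q)$, and second, that the CR-Tree traversal, given a correct $UMSL(q)$, emits exactly those customers in $URS(q)$. I would begin by recalling the characterization implicit in Definition \ref{def:urs} together with the midpoint correspondence used in the proof of Lemma \ref{lem:midpointfilteringURS}: a customer $c$ fails to belong to $URS(q)$ precisely when some product $p$ has a midpoint $m$ w.r.t.\ $q$ with $O_q(m)=O_q(c)$, $m\prec_q c$, and the probability condition of Lemma \ref{lem:dskyrelationship}. Equivalently, $c\notin URS(q)$ iff $c\in UDR(q)$ (Definition \ref{def:uddominanceregion}, Lemma \ref{lem:udr}). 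Since $UMSL(q)$ is by Definition \ref{def:uncertainmidpointskyline} a minimal set of midpoints defining $UDR(q)$, the whole argument reduces to showing the algorithm builds this set faithfully and then tests each customer against it.

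For the first phase I would treat soundness and completeness of $UMSL(q)$ separately. Soundness is immediate from the code: a midpoint $m$ of product $p$ is admitted only when it survives the node-pruning test and passes the explicit check $Pr(q)<Pr(p)\vee(Pr(q)\times(1-Pr(p)))<Pr(p)$, which is exactly condition (iii) guaranteeing that $p$ genuinely UD-dominates $q$ in the view of any customer it dominates, so $m$ truly contributes to $UDR(q)$. Completeness is the crux: I must show no discarded midpoint is needed. Here I would lean on two facts. The distance-ordering of node accesses (stated in Section \ref{sec:prtree}) ensures that whenever $m'\prec_q m$ the dominator $m'$ is dequeued first; hence if $m'$ qualifies it already sits in $UMSL(q)$ when $m$ is examined, and every member of $UMSL(q)$ already satisfies the probability condition, so pruning $m$ by it is valid per Lemma \ref{ursPRTreeNodePruning}. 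And Lemma \ref{lem:umslcorrectness} certifies that such a pruned $m$ is redundant: every customer $m$ would have removed is also removed by the retained dominator $m'$, by transitivity of $\prec_q$. Combining these, the retained set defines $UDR(q)$ and is minimal, i.e.\ exactly $UMSL(q)$.

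The second phase is comparatively direct, since $UMSL(q)$ is fully materialized before the CR-Tree sweep begins, so every pruning decision consults the complete skyline. For soundness, any output customer $c$ survived the test of Lemma \ref{ursCRTreeNodePruning} against every $m\in UMSL(q)$, so no member dominates $c$ with matching orthant; hence $c\notin UDR(q)$ and, by the characterization above, $c\in URS(q)$. For completeness, any $c\in URS(q)$ lies outside $UDR(q)$, is therefore dominated by no member of $UMSL(q)$, cannot be pruned, and the distance-ordered traversal eventually dequeues the leaf containing it. I would also verify that the node-level pruning of Lemma \ref{ursCRTreeNodePruning} never drops a surviving customer when an internal node is discarded, which follows because the orthant and dominance conditions are required at all $2^d$ corners and thus hold for every point the node bounds.

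The step I expect to be the main obstacle is the completeness of the first phase: proving that bounding-box pruning under a distance-ordered traversal never removes a midpoint $UMSL(q)$ genuinely requires. The delicate part is the interaction between the geometric dominance driving the pruning and the probabilistic UD-dominance condition, namely confirming that the access order always surfaces a qualifying dominator first and that Lemma \ref{lem:umslcorrectness} indeed covers the case where the dominated midpoint carries a distinct product with its own probability. Once this redundancy argument is secured, soundness of both phases and completeness of the second follow routinely from the node-pruning lemmas.
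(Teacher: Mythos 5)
Your proposal is correct and follows essentially the same route as the paper's own proof: a two-phase argument that the PR-Tree traversal yields a correct and minimal $UMSL(q)$ (justified by the distance-ordered heap access, the pruning rule of Lemma~\ref{ursPRTreeNodePruning}, and the redundancy guarantee of Lemma~\ref{lem:umslcorrectness}) followed by a CR-Tree sweep filtered by Lemma~\ref{ursCRTreeNodePruning}. Your soundness/completeness framing is in fact somewhat more explicit than the paper's line-by-line walkthrough, which simply asserts minimality and correctness of $UMSL(q)$ from the access order, but the decomposition and the supporting lemmas are the same.
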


\begin{proof}
The computation of the uncertain reverse skyline of $q$, i.e., $URS(q)$, starts scanning the products $p\in\mathcal{P}$, then converting them into their corresponding midpoints w.r.t. $q$ and thereafter, inserting them into the PR-Tree as given in lines 2-3. Then, we initialize $UMSL(q)$ to $\emptyset$ and insert the children of PR-Tree root into the min-heap $\mathcal{H}^\mathcal{P}_q$ in lines 4-5. The lines 6-13 repeatedly retrieve the front entry $E$ of $\mathcal{H}^\mathcal{P}_q$ until $\mathcal{H}^\mathcal{P}_q$ is empty and prune $E$ (PR-Tree node) as per Lemma \ref{ursPRTreeNodePruning}, otherwise, insert the children of $E$ into $\mathcal{H}^\mathcal{P}_q$ if $E$ is an internal node, else add the midpoint $m$ contained in $E$ (leaf node) into the $UMSL(q)$ only if $Pr(q)<Pr(p)\vee Pr(q)\times(1-Pr(p))<Pr(p)$ to make sure that if $\exists c\in\mathcal{C}$ such that $p\prec_c q$ and $Pr^c_{DSky}(p)>Pr^c_{DSky}(q)$ hold, $c$ can be pruned by $m$ as per Lemma \ref{lem:udr}, where $p$ is the corresponding product of $m$ in $\mathcal{P}$. As the entries (PR-Tree nodes) in $\mathcal{H}^\mathcal{P}_q$  are accessed in order of their distances to $q$, the $UMSL(q)$ computed in lines 6-13 is minimal and correct. Now, we initialize $URS(q)$ to $\emptyset$, constrcut CR-Tree of the customers $\mathcal{C}$ and insert the children of the CR-Tree root into the min-heap $\mathcal{H}^\mathcal{C}_q$ in lines 14-16. The lines 17-24 repeatedly retrieve the front entry $E$ of $\mathcal{H}^\mathcal{C}_q$ until $\mathcal{H}^\mathcal{C}_q$ is empty and prune $E$ (CR-Tree node) as per Lemma \ref{ursCRTreeNodePruning}, otherwise, insert the children of $E$ into $\mathcal{H}^\mathcal{C}_q$ if $E$ is an internal node, else add the customer $c$ contained in $E$ (leaf node) into the $URS(q)$ as per the Definition \ref{def:urs}. Hence, the lemma.
\end{proof}

% as one of the legitimate member of the uncertain reverse skyline of $q$
%if they are not pruned already.

\begin{algorithm}[tb]
\scriptsize%Computing Uncertain Reverse Skyline
\DontPrintSemicolon % using R-Tree
\SetKwInOut{Input}{Input}\SetKwInOut{Output}{Output}
\Input{$q$: query, $\mathcal{P}$: products, $\mathcal{C}$: customers}
\Output{$\tau(q)$: influence score of $q$}
\Begin
{       
     $URS(q)\leftarrow$computeURS($q$);\tcp*[r]{Algorithm \ref{algo:serialURS}}
     $\tau(q)\leftarrow 0$;\tcp*[l]{initialization}
     \For{each $c\in URS(q)$}
        { 
             $UDS(c)\leftarrow$computeUDS($c$, $P\cup q$);\tcp*[r]{Approach in \cite{ZhouLXZL16}}
		$score\leftarrow 0$; \tcp*[l]{initialization}
		     \For{each $p\in UDS(c)$}
        		{ 
				$score\leftarrow score+Pr^c_{DSky}(p)$;\\
			}
 		 $\tau(q)\leftarrow \tau(q)+\frac{Pr^c_{DSky}(q)}{score}$; \tcp*[r]{Eq. \ref{eq:ursscore}}
        }%end_of_while 
  
}
\normalsize
\caption{Influence Score}
\label{algo:serialInfluenceScore}
\end{algorithm}

\subsubsection{Influence Score}
\label{sec:influencescore}

As per Eq. \ref{eq:ursscore}, we need to compute the dynamic skyline probability of each product $p\in UDS(c)$ for each $c\in URS(q)$ to compute the influence score $\tau(q)$ of the query product $q$. To achieve this, we first compute the uncertain reverse skyline of $q$, i.e., $URS(q)$ by Algorithm \ref{algo:serialURS}. Then, we compute the dynamic skyline probability of each product $p\in UDS(c)$ for each $c\in URS(q)$ as per the approach proposed in \cite{ZhouLXZL16}. This idea is pseudocoded in Algorithm \ref{algo:serialInfluenceScore}.  Though, we adopt the approach proposed in \cite{ZhouLXZL16} for computing the dynamic skyline probability in Algorithm \ref{algo:serialInfluenceScore}, there is a significant difference between our approach and the approach proposed in \cite{ZhouLXZL16} for computing $\tau(q)$. The approach proposed in \cite{ZhouLXZL16} computes the $UDS(c)$ of each customer $c\in\mathcal{C}$ irrespective of whether $c$ is in $URS(q)$ or not to compute $\tau(q)$, which we don't do in our approach. Therefore, our approach is more efficient than the na\"ive approach proposed in \cite{ZhouLXZL16} for computing the influence score $\tau(q)$ of an arbitrary query product $q$.

\subsubsection{Optimization}
\label{sec:optimization}

Assume that $n_{far}$ is the farthest and $n_{near}$ is the nearest corner of a R-Tree node $n$ w.r.t. $q$ as shown by the green-colored bulleted objects in Fig. \ref{fig:RTreeNodeToNodePruning}. If $n$ is a PR-Tree node, also assume that $Pr(n_{far})=min\{Pr(p), \forall m\in n\}$ and $Pr(n_{near})=max\{Pr(p), \forall m\in n\}$, where $p$ is the corresponding product in $\mathcal{P}$ of the midpoint $m$. 

The following lemma guides how to prune a PR-Tree node by comparing it with another PR-Tree node while computing the uncertain midpoint skyline of an arbitrary query product $q$, i.e., $UMSL(q)$.

\begin{lemma}
A PR-Tree node $n^\prime$ can be pruned if $\exists n\in PR-Tree$ such that (i) $O_q(n)=O_q(n^\prime)$, (ii) $n_{far}\prec_q n^\prime_{near}$ and (iii) $Pr(q)<Pr(n_{far})\vee Pr(q)\times(1-Pr(n_{far}))<Pr(n_{far})$.
\label{lem:UMSLnodetonodepruning}
\end{lemma}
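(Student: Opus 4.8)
The plan is to reduce this node-to-node test to the already-established node-to-midpoint pruning of Lemma~\ref{ursPRTreeNodePruning}. Concretely, I would show that \emph{every} midpoint $m$ contained in node $n$, together with its underlying product $p$, satisfies the three pruning conditions of Lemma~\ref{ursPRTreeNodePruning} with respect to $n^\prime$. Since a single such witness already justifies discarding $n^\prime$, producing one (indeed, showing all of them work) establishes the claim, and it captures the intended optimization: we never have to descend into $n$, because any midpoint it would yield prunes $n^\prime$ anyway.

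First I would dispatch the orthant condition. Every point of an R-Tree node, each of its $2^d$ corners as well as every interior midpoint, shares the node's orthant w.r.t.\ $q$, so condition~(i) $O_q(n)=O_q(n^\prime)$ immediately gives $O_q(m)=O_q(n)=O_q(n^\prime)=O_q(m^\prime)$ for all $m\in n$ and $m^\prime\in n^\prime$. This supplies condition~(i) of Lemma~\ref{ursPRTreeNodePruning}.

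Next I would chain the dominance relations. Within a single orthant the per-dimension distances $|\cdot^i-q^i|$ increase monotonically from the nearest corner to the farthest, so the nearest corner weakly dominates every point of a node while every point weakly dominates the farthest corner. Hence every $m\in n$ has distances at most those of $n_{far}$, and every $m^\prime\in n^\prime$ has distances at least those of $n^\prime_{near}$. Combining these with condition~(ii), $n_{far}\prec_q n^\prime_{near}$, and using transitivity of dynamic dominance yields $m\prec_q m^\prime$; since $m$ in fact dominates even the farthest corner of $n^\prime$, we obtain $m\prec_q n^\prime$, which is condition~(ii) of Lemma~\ref{ursPRTreeNodePruning}.

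The crux is the probability condition. Writing $a=Pr(q)$ and $x=Pr(p)$, the predicate $a<x\,\vee\,a(1-x)<x$ collapses to the single threshold $x>a/(1+a)$ and is therefore monotone non-decreasing in $x$. Because $Pr(n_{far})=\min\{Pr(p):m\in n\}$, condition~(iii) guarantees the predicate at the \emph{smallest} probability appearing in $n$, and monotonicity then propagates it to every product $p$ whose midpoint lies in $n$, yielding condition~(iii) of Lemma~\ref{ursPRTreeNodePruning} for each such $p$. With all three conditions in force, every $m\in n$ prunes $n^\prime$, and the lemma follows. The hard part is precisely this monotonicity step, together with verifying that taking the minimum probability $Pr(n_{far})$ is the conservative choice that lets the node-level test imply the per-midpoint test; a secondary pitfall is keeping the nearest-versus-farthest corner roles straight, since reversing them would break the dominance chaining in the previous paragraph.
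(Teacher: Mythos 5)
Your proof is correct and follows essentially the same route as the paper's: the identical dominance chain $m\prec_q n_{far}\prec_q n^\prime_{near}\prec_q m^\prime$ together with propagation of the probability condition from $Pr(n_{far})=\min\{Pr(p)\}$ to every product whose midpoint lies in $n$. The only differences are cosmetic --- you package the argument as a reduction to Lemma~\ref{ursPRTreeNodePruning} rather than unwinding it down to the customer level as the paper does, and you make explicit the monotonicity in $x$ of the predicate $Pr(q)<x\,\vee\,Pr(q)\times(1-x)<x$ (equivalently $x>Pr(q)/(1+Pr(q))$), a step the paper asserts without justification.
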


\begin{proof}
Assume that $\exists m^\prime\in n^\prime$ and $\exists c\in\mathcal{C}$ such that $m^\prime\prec_q c$ and $Pr^c_{DSky}(p^\prime)>Pr^c_{DSky}(q)$, where $p^\prime$ is the corresponding product in $\mathcal{P}$ of the midpoint $m^\prime$, i.e., $c\not\in URS(q)$. Now, there must exist a midpoint $m\in n$ such that $m\prec_q c$ because of conditions (i) and (ii) as follows: $m\prec_q n_{far}\wedge n_{far}\prec_q n^\prime_{near}\wedge n^\prime_{near}\prec_q m^\prime\wedge m^\prime\prec_q c\rightarrow m\prec_q c$ (transitivity of dominance). Now, $Pr(q)<Pr(p)\vee Pr(q)\times(1-Pr(p))<Pr(p)$ because of condition (iii), where $p$ is the corresponding product in $\mathcal{P}$ of the midpoint $m$, which implies $Pr^c_{DSky}(p)>Pr^c_{DSky}(q)$. Therefore, we can still prune $c$ by $m\in n$ even if we prune $n^\prime$. Hence, the lemma. 
\end{proof}

\begin{lemma}
The customers $c$ in a CR-Tree node $n$ can be safely added to $URS(q)$ if $\not\exists m\in UMSL(q)$ such that the followings hold: (i) $O_q(m)=O_q(n)$ and (ii) $m\prec_q n_{far}$.
\label{lem:URSnodecustomersaddition}
\end{lemma}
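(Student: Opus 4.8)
The plan is to argue by contraposition, reducing this node-level claim to the single-customer pruning characterization that $UMSL(q)$ already encodes. This lemma is the ``acceptance'' dual of Lemma~\ref{ursCRTreeNodePruning}: the latter prunes a whole node when some $m\in UMSL(q)$ dominates all $2^d$ corners, whereas here we want to \emph{accept} a whole node when no $m\in UMSL(q)$ even reaches the single farthest corner $n_{far}$. First I would recall the fact that follows from Lemma~\ref{lem:udr} together with the completeness of $UMSL(q)$ (Definition~\ref{def:uncertainmidpointskyline} and Lemma~\ref{lem:umslcorrectness}): any customer $c$ with $c\notin URS(q)$ lies in the UD-dominance region of $q$, and therefore admits a witness $m\in UMSL(q)$ with $O_q(m)=O_q(c)$ and $m\prec_q c$. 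Hence it suffices to show that if \emph{some} $c\in n$ fails to be admissible, the forbidden configuration (i)--(ii) of the statement must already occur for $n_{far}$, contradicting the hypothesis.

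So I would suppose $c\in n$ with $c\notin URS(q)$ and fix such a witness $m\in UMSL(q)$, i.e. $O_q(m)=O_q(c)$ and $m\prec_q c$. Since condition (i) presumes $n$ has a well-defined orthant w.r.t. $q$, every corner of $n$ — and hence every interior point, including $c$ — shares that orthant, giving $O_q(c)=O_q(n)$ and therefore $O_q(m)=O_q(n)$, which is condition (i) for the pair $(m,n_{far})$.

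For condition (ii) I would invoke the defining property of the farthest corner. Because $n_{far}$ maximizes $|x^i-q^i|$ over all $x\in n$ in every dimension $i$, each $c\in n$ satisfies $|c^i-q^i|\le|n_{far}^i-q^i|$ for all $i$. Chaining this with $m\prec_q c$ through transitivity of dynamic dominance yields $|m^i-q^i|\le|c^i-q^i|\le|n_{far}^i-q^i|$ in every dimension, with a strict inequality in at least one coordinate inherited from $m\prec_q c$, so $m\prec_q n_{far}$. Thus $m$ witnesses both (i) and (ii) for $n$, contradicting the hypothesis that no such $m$ exists; consequently every $c\in n$ is safely added to $URS(q)$.

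The main obstacle, and the step I would check most carefully, is the geometric claim that the single corner $n_{far}$ bounds every point of $n$ simultaneously in all coordinates. This is exactly what the same-orthant condition buys: only when $n$ lies in one orthant is ``farther from $q^i$'' a consistent direction in each dimension, so that one corner is coordinate-wise extremal; without it the farthest corner could vary across dimensions and the transitivity chain would collapse. The probabilistic side requires no separate work, because membership of $m$ in $UMSL(q)$ already guarantees the companion condition $Pr^c_{DSky}(p)\ge Pr^c_{DSky}(q)$ that makes $m$ a genuine pruner, so once the orthant and dominance conditions are established the conclusion follows.
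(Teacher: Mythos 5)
Your proof is correct and follows essentially the same route as the paper's: argue by contradiction/contraposition that a customer $c\in n$ excluded from $URS(q)$ would force a witness $m\in UMSL(q)$ dominating $n_{far}$, contradicting the hypothesis. You merely make explicit two steps the paper leaves implicit — the completeness of $UMSL(q)$ as the source of the witness, and the transitivity chain through the coordinate-wise extremality of $n_{far}$ (valid because $n$ lies in a single orthant) — which is a faithful elaboration rather than a different argument.
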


\begin{proof}

Assume that $\exists c\in n$ and the conditions (i)-(ii) are true, but $c\not\in URS(q)$. We prove that $URS(q)$ is incorrect. As $\not\exists m\in UMSL(q)$ such that $m\prec_q n_{far}$ and $c$ is bounded within the region of node $n$, we get $m\not\prec_q c$. Therefore, $c$ must be in $URS(q)$. Hence, the lemma.
\end{proof}

\begin{figure}[tb]
\centering
\setlength{\tabcolsep}{2pt}
\centering
\begin{tabular}{cc}
\begin{minipage}[a]{0.48\linewidth}
\subfigure[]{
\includegraphics[scale=0.14]{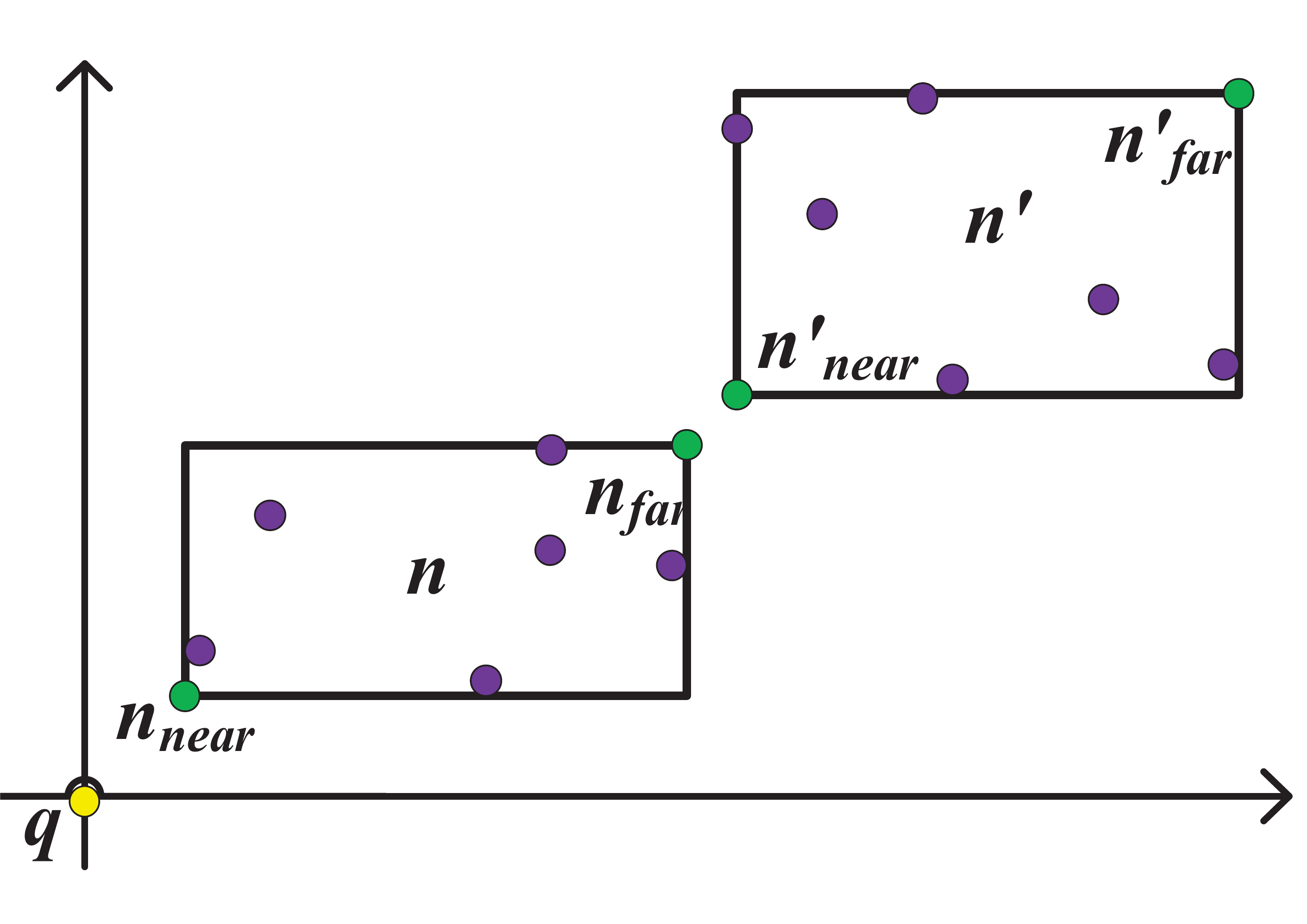}
}
\end{minipage}
&
\begin{minipage}[a]{0.48\linewidth}
\subfigure[]{
\includegraphics[scale=0.145]{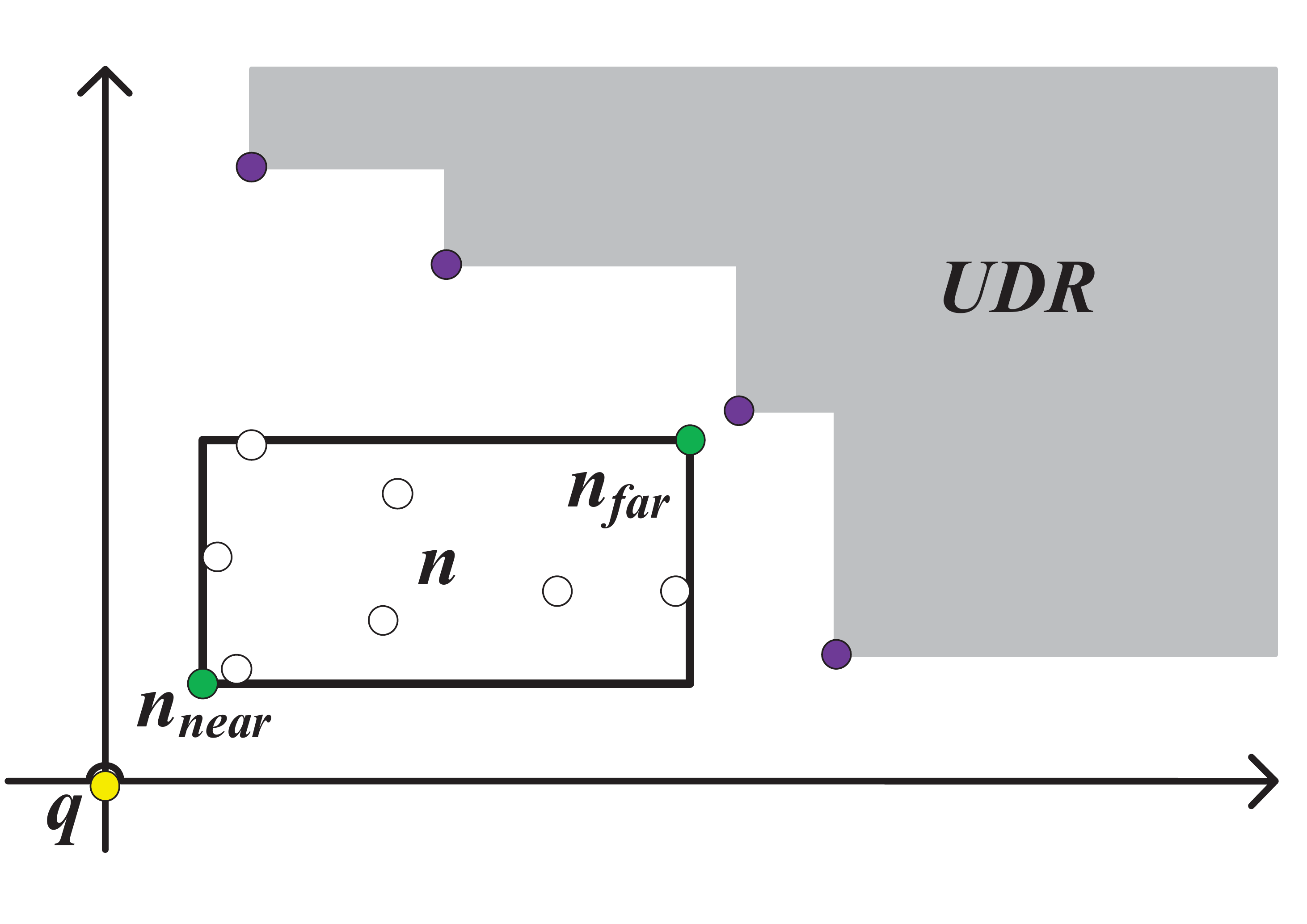}
}
\end{minipage}
\end{tabular}
\vspace{-2ex}
\caption{Optimization (a) a node prunes another node in PR-Tree node for $UMSL(q)$ and (b) adding each customers in a CR-Tree node into $URS(q)$, where purple-colored bulleted points represents midpoints of $q$}
\label{fig:RTreeNodeToNodePruning}
\vspace{-2ex}
\end{figure}

\begin{algorithm}[tb]
\scriptsize% Optimized Uncertain Reverse Skyline
\DontPrintSemicolon % using R-Tree
\SetKwInOut{Input}{Input}\SetKwInOut{Output}{Output}
\Input{$q$: query, $\mathcal{P}$: products, $\mathcal{C}$: customers}
\Output{$URS(q)$: uncertain reverse skyline of $q$}
\Begin
{  
     $\mathcal{M}\leftarrow$converProductsToMidpoints($\mathcal{P}$);\tcp*[r]{midpoints}
     $root^\mathcal{P}\leftarrow$constructPRTree($\mathcal{M}$);\tcp*[r]{create PR-Tree}

     $ UMSL(q)\leftarrow\emptyset$;\tcp*[r]{initialization}

     $\mathcal{H}^\mathcal{P}_q\leftarrow$ insert($\mathcal{H}^\mathcal{P}_q$, $children(root^\mathcal{P})$);\tcp*[r]{mean heap}
     \While{$\mathcal{H}^\mathcal{P}_q\ne\emptyset$}
        { 
             $E\leftarrow$retrieveFront$(\mathcal{H}^\mathcal{P}_q)$;\\    
 		\If{$\exists m\in UMSL(q):O_q(E)=O_q(m)$ and $m\prec_q E$}
		{         
			continue;\tcp*[r]{prune PR-Tree node as per Lemma \ref{ursPRTreeNodePruning}}
		}
	     \ElseIf{$!E$.isLeaf()} 
		{
			$\mathcal{H}^\mathcal{P}_q\leftarrow$ insert($\mathcal{H}^\mathcal{P}_q$, $children(E)$);\\
			$\mathcal{H}^\mathcal{P}_q\leftarrow$applyNodeToNodePruning($\mathcal{H}^\mathcal{P}_q$);\tcp*[r]{Lemma \ref{lem:UMSLnodetonodepruning}}
		} 
             \ElseIf{$Pr(q)<Pr(E)\vee(Pr(q)\times(1-Pr(E)))<Pr(E)$}
            {
                     $UMSL(q)\leftarrow$ add($UMSL(q)$, $E$);\tcp*[r]{$UMSL$ member}
            }            
        }%end_of_while

       $URS(q)\leftarrow\emptyset$;\\       
	 $root^\mathcal{C}\leftarrow$constructCRTree($\mathcal{C}$);\tcp*[r]{create CR-Tree}
	$\mathcal{H}^\mathcal{C}_q\leftarrow$ insert($\mathcal{H}^\mathcal{C}_q$, $children(root^\mathcal{C})$);\tcp*[r]{mean heap}

        \While{$\mathcal{H}^\mathcal{C}_q\ne\emptyset$}
        { 
		$E\leftarrow$retrieveFront$(\mathcal{H}^\mathcal{C}_q)$;\\    
 		\If{$\exists m\in UMSL(q):O_q(m)=O_q(E)$ and $m\prec_q E$}
		{         
			continue;\tcp*[r]{prune CR-Tree node as per Lemma \ref{ursCRTreeNodePruning}}
		}
		\ElseIf{$\not\exists m\in UMSL(q):O_q(m)=O_q(E)$ and $m\prec_q E_{far}$} 
		{
			 $URS(q)\leftarrow$ add($URS(q)$, $customers(E)$); \tcp*[r]{Lemma \ref{lem:URSnodecustomersaddition}}
		}
		\Else%If{$!E$.isLeaf()} 
		{
			$\mathcal{H}^\mathcal{C}_q\leftarrow$ insert($\mathcal{H}^\mathcal{C}_q$, $children(E)$);\tcp*[r]{non-leaf node}
		} 
             %\Else 
            %{
            %         $URS(q)\leftarrow$ add($URS(q)$, $E$);\tcp*[l]{member of $URS(q)$}
           % }      
        }%end_of_while    
}%end_of_Algorithm
\normalsize
\caption{Optimized Uncertain Reverse Skyline}
\label{algo:optimizedSerialURS}
\end{algorithm}

The above optimization heuristics, i.e., Lemma \ref{lem:UMSLnodetonodepruning} and Lemma \ref{lem:URSnodecustomersaddition} are pseudocoded in Algorithm \ref{algo:optimizedSerialURS}. The difference between Algorithm \ref{algo:serialURS} and Algorithm \ref{algo:optimizedSerialURS} is that  Algorithm \ref{algo:optimizedSerialURS} applies PR-Tree node to node pruning on $\mathcal{H}^\mathcal{P}_q$ after inserting the children of an entry $E$ into $\mathcal{H}^\mathcal{P}_q$ while computing $UMSL(q)$ (lines 10-12) and adds the customers $c$ of a CR-Tree non-leaf node $E$ into $URS(q)$ if the conditions in Lemma \ref{lem:URSnodecustomersaddition} are satisfied without inserting the children into $\mathcal{H}^\mathcal{C}_q$ (lines 22-23). The optimization of influence score computation in Algorithm \ref{algo:serialInfluenceScore} is done by replacing Algorithm \ref{algo:serialURS} with Algorithm \ref{algo:optimizedSerialURS} in line 2 for computing the uncertain reverse skyline of $q$.

\section{Parallel Approach}
\label{sec:parallelapproach}

This section presents an efficient approach of computing the uncertain reverse skyline and the influence score of a product by parallelizing their evaluations for today's data intensive systems involving millions of customer objects.%customers.

\subsection{Computing Environment}
\label{sec:parallelcomputingenvironment}

We assume a simplified computing environment for evaluating uncertain reverse skyline queries in parallel in which a master processor, denoted by $\mathcal{T}_0$, is responsible for coordinating and managing the independent tasks carried out by the worker processors, denoted by $\{\mathcal{T}_j\}$. A worker processor $\mathcal{T}_j$ receives input data from the master and the task type, finishes the task accordingly and sends the processed result back to the master processor. The master processor may pre-process the input data before sending them to the workers. The master processor $\mathcal{T}_0$ finalizes the result in one or more rounds. We also assume that the communications and synchronizations between the master processor and the worker processors are integral part of this environment, and the computing powers of all worker processors are the same. 

\subsection{Parallel Uncertain Reverse Skyline}
\label{sec:parallelURS}

The parallel steps of computing the uncertain reverse skyline of a probabilistic product $q$, i.e., $URS(q)$, in two rounds are listed as follows:

\begin{enumerate}[topsep=0pt,itemsep=-1ex,partopsep=1ex,parsep=1ex]
\item In the first round, the master divides $\mathcal{P}$ into chunks $\mathcal{P}_j\subset\mathcal{P}$ (such that $\cup\mathcal{P}_j = \mathcal{P}$) and sends these chunks $\mathcal{P}_j$ and the query product $q$ to its workers.

\item A worker processor converts the products $p\in\mathcal{P}_j$ into their midpoints $m$ w.r.t. $q$ and index them into its local PR-Tree. Then, the worker computes the local uncertain midpoint skyline $UMSL_j$ by following the same technique as given in Step 2 in Section \ref{sec:serialURS}.
 
\item Then, the master does the followings: (i) collects all local $UMSL_j$s from its workers and insert them into a min heap $\mathcal{H}^\mathcal{P}_q$; (ii) initializes $UMSL(q)$ to $\emptyset$  and (iii) repeatedly retrieves the front entry $m$ from $\mathcal{H}^\mathcal{P}_q$ until it becomes empty and does the following: adds $m$ to $UMSL(q)$ if $\not\exists m^\prime\in UMSL(q)$ such that: $O_q(m^\prime)=O_q(m)$ and $m^\prime\prec_q m$, otherwise ignore $m$.

\item In the second round, the master divides $\mathcal{C}$ into chunks $\mathcal{C}_j\subset\mathcal{C}$ (such that $\cup\mathcal{C}_j = \mathcal{C}$) and sends these chunks $\mathcal{C}_j$ and the global $UMSL(q)$ to its workers.

\item A worker processor index $\mathcal{C}_j$ into its local CR-Tree. Then, the worker computes the local uncertain reverse skyline $URS_j$ by following the same technique as given in Step 3 in Section \ref{sec:serialURS}

\item Finally, the master collect all local $URS_j$s from its workers into the global $URS(q)$.
\end{enumerate}

\begin{algorithm}[t]
\scriptsize%Parallel Uncertain Reverse Skyline using R-Tree
\DontPrintSemicolon
\SetKwInOut{Input}{Input}\SetKwInOut{Output}{Output}
\Input{$q$: query, $\mathcal{P}$: products, $\mathcal{C}$: customers}
\Output{$URS(q)$: uncertain reverse skyline of $q$}
\Begin{

$\mathcal{T}_0:$partitionProductData($\mathcal{P}$); \tcp*[r]{\tiny{partition product data}}

$\mathcal{T}_0:$ \textbf{parallel}\text{ }\For{each $\mathcal{P}_j$}
	{
		sendQuery($\mathcal{T}_j$, $q$);\tcp*[r]{\tiny{send query product to $\mathcal{T}_j$}}             
		sendProducts($\mathcal{T}_j$, $\mathcal{P}_j$);\tcp*[r]{\tiny{send product subset to $\mathcal{T}_j$}}
		$\mathcal{T}_j:\mathcal{M}_j\leftarrow$converProductsToMidpoints($\mathcal{P}_j$);\tcp*[r]{\tiny{midpoints}}
	      $\mathcal{T}_j:root^\mathcal{P}_j\leftarrow$constructPRTree($\mathcal{M}_j$);\tcp*[r]{\tiny{create local PR-Tree}}		
		$UMSL_j\leftarrow \mathcal{T}_j:$ localMidpointSkyline($q$, $root^\mathcal{P}_j$);\\
	}%end_of_for

$\mathcal{T}_0:UMSL(q)\leftarrow$globalMidpointSkyline($q$, $\cup UMSL_j$);\\

$\mathcal{T}_0:$partitionCustomerData($\mathcal{C}$); \tcp*[r]{\tiny{partition customer data}}
$\mathcal{T}_0:$ \textbf{parallel}\text{ }\For{each $\mathcal{C}_j$}
	{
             sendGlobalMidpointSkyline($\mathcal{T}_j$, $UMSL(q)$);\tcp*[r]{\tiny{$UMSL(q)$}}
		sendCustomers($\mathcal{T}_j$, $\mathcal{C}_j$);\tcp*[r]{\tiny{send customer subset}}
		$\mathcal{T}_j:root^\mathcal{C}_j\leftarrow$constructCRTree($\mathcal{C}_j$);\tcp*[r]{\tiny{create local CR-Tree}}				
		$URS_j\leftarrow \mathcal{T}_j:$ localURS($q$, $root^\mathcal{C}_j$, $UMSL(q)$);
	}%end_of_for

$\mathcal{T}_0: URS(q)\leftarrow\cup URS_j$;\tcp*[r]{\tiny{global uncertain reverse skyline, $URS(q)$}}
}
\normalsize
\caption{Parallel Uncertain Reverse Skyline}
\label{algo:parallelURS}
\end{algorithm}

%in Section \ref{sec:parallelURS}
The above steps are pseudocoded in Algorithm \ref{algo:parallelURS} as explained below. The master processor $\mathcal{T}_0$ partitions the product data $\mathcal{P}$ equally for the workers in line 2. The master processor then sends the query product $q$ and the partitioned data $\mathcal{P}_j$ to the corresponding worker processor $\mathcal{T}_j$ in lines 4-5. In lines 6-8, the worker processor $\mathcal{T}_j$ converts $\mathcal{P}_j$ into the corresponding midpoints $\mathcal{M}_j$, constructs the local PR-Tree $root^\mathcal{P}_j$ and computes the local uncertain midpoint skyline $UMSL_j$ by calling \emph{localMidpointSkyline($q$, $root^\mathcal{P}_j$)} method which implements Step 2. Once computed, $\mathcal{T}_j$ sends the local $UMSL_j$ to the master $\mathcal{T}_0$ in line 8. The master $\mathcal{T}_0$ computes the global uncertain midpoint skyline $UMSL(q)$ by calling \emph{globalMidpointSkyline($q$, $\cup UMSL_j$)} method which implements Step 3) in line 9. The master processor $\mathcal{T}_0$ now partitions the customer data $\mathcal{C}$ equally for the workers in line 10 and then, sends the global $UMSL(q)$ and $\mathcal{C}_j$ to the corresponding worker $\mathcal{T}_j$ in lines 12-13. The worker processor $\mathcal{T}_j$ constructs the local CR-Tree $root^\mathcal{C}_j$ and computes the local $URS_j$ by calling method \emph{localURS($q$, $root^\mathcal{C}_j$, $UMSL(q)$)} which implements step 5 in lines 14-15. Finally, the local $URS_j$ are accumulated by the master $\mathcal{T}_0$ into the global uncertain reverse skyline $URS(q)$ in line 16 of Algorithm \ref{algo:parallelURS}.     

\begin{lemma}
The Algorithm \ref{algo:parallelURS} accurately computes the uncertain reverse skyline of an arbitrary query product $q$.
\end{lemma}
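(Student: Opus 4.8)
The plan is to reduce the correctness of the parallel Algorithm \ref{algo:parallelURS} to that of the serial Algorithm \ref{algo:serialURS}, whose correctness is already established in Lemma \ref{lem:correctnessofURSAlgorithm}. Concretely, I would prove two equalities: (i) the global $UMSL(q)$ assembled by the master in line 9 coincides with the $UMSL(q)$ produced by the serial algorithm, and (ii) given the same $UMSL(q)$, the distributed customer phase reproduces exactly the serial $URS(q)$. Since the partitions are exhaustive and disjoint ($\cup\mathcal{P}_j=\mathcal{P}$ and $\cup\mathcal{C}_j=\mathcal{C}$), these two facts together suffice.

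For (i) I would invoke the decomposability of the midpoint skyline. Each worker $\mathcal{T}_j$ computes $UMSL_j$ over only its local chunk $\mathcal{P}_j$, applying the same qualifying-probability test and the same orthant-plus-dominance pruning as the serial Step 2; crucially, a qualifying midpoint $m$ (with corresponding product $p$) is dropped locally only when some other qualifying midpoint $m^\prime\in\mathcal{P}_j$ satisfies $O_q(m^\prime)=O_q(m)$ and $m^\prime\prec_q m$. Because $m^\prime\in\mathcal{P}_j\subseteq\mathcal{P}$, such an $m$ is dominated globally as well and, by Lemma \ref{lem:umslcorrectness}, is legitimately absent from the serial $UMSL(q)$; hence no genuine member of the global skyline is lost at the local stage, and every serial $UMSL(q)$ member survives in the worker that owns its product. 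Conversely, a locally surviving midpoint may still be dominated by a midpoint from another chunk, and the master's \emph{globalMidpointSkyline} step removes precisely these by retrieving the pooled $\cup UMSL_j$ in order of distance to $q$ and admitting $m$ only when no already-admitted same-orthant $m^\prime$ dominates it. This inner loop is identical to lines 6--13 of Algorithm \ref{algo:serialURS}, so by Lemma \ref{lem:umslcorrectness} the resulting set is the minimal, correct $UMSL(q)$.

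For (ii) the key observation is that membership of a customer in $URS(q)$ depends only on the global $UMSL(q)$ and on the customer itself (Lemma \ref{lem:udr} and the CR-Tree pruning rule of Lemma \ref{ursCRTreeNodePruning}), and never on the other customers. The customer phase is therefore fully partitionable: each worker runs the same local pruning (Step 5, i.e. lines 17--24 of Algorithm \ref{algo:serialURS}) on its chunk $\mathcal{C}_j$ against the correct global $UMSL(q)$, so $URS_j = URS(q)\cap\mathcal{C}_j$. Taking the union in line 16 yields $\bigcup_j\big(URS(q)\cap\mathcal{C}_j\big)=URS(q)$ because the chunks cover $\mathcal{C}$ and are pairwise disjoint.

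I expect the main obstacle to be part (i): the formal argument that splitting the product set and merging the local midpoint skylines neither discards a required midpoint nor leaves a spurious one, i.e. that the midpoint skyline distributes over the partition. The subtle direction is cross-chunk domination, where a midpoint survives locally yet is dominated by a midpoint residing in another chunk; this must be resolved by the master's merge, and I would make it rigorous by verifying that the merge applies exactly the same orthant-and-dominance test as the serial loop and by appealing to Lemma \ref{lem:umslcorrectness} to preserve minimality across chunk boundaries. The customer phase (ii) is comparatively routine once the independence of $URS$ membership across distinct customers is noted.
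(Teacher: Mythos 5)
Your proposal is correct and follows essentially the same route as the paper: both reduce correctness to the serial Algorithm \ref{algo:serialURS}, argue that merging the local $UMSL_j$s in distance order yields a global $UMSL(q)$ with the same filtering power, and then use the per-customer independence of $URS$ membership to justify unioning the $URS_j$s. Your treatment of the cross-chunk domination issue in part (i) is in fact more explicit than the paper's, which simply asserts that the merged set retains the local sets' filtering capability.
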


\begin{proof} 
Firstly, we prove that the global uncertain midpoint skyline, i.e., $UMSL(q)$ computed by Algorithm \ref{algo:parallelURS} is correct. The local midpoint skyline $UMSL_j$ of $q$ is correct for the partition $\mathcal{P}_i$ as we prove for $\mathcal{P}$ in Algorithm \ref{algo:serialURS}. Now, Algorithm \ref{algo:parallelURS} computes the global $UMSL(q)$ by accumulating the local $UMSL_j$s into the mean heap $\mathcal{H}^\mathcal{P}_q$ and thereafter, accessing the midpoints in $\mathcal{H}^\mathcal{P}_q$ in order of their distances to $q$. A midpoint $m$ is added to the global $UMSL(q)$ iff it's filtering capability cannot be achieved by another midpoint $m$ already existing in $UMSL(q)$. Therefore, the global $UMSL(q)$ can filter the customers $c\in\mathcal{C}$ that would be filtered by local $UMSL_j$s, i.e., the global $UMSL(q)$ is correct and minimal. Finally, the worker processor computes the local $URS_j$ for the customer set $c\in\mathcal{C}_j$ based on the global $UMSL(q)$ as we compute $URS(q)$ for $\mathcal{C}$ in Algorithm \ref{algo:serialURS}. As the selection of customers in the uncertain reverse skyline set of $q$ are mutually independent, the global $URS(q)$ accumulated in the master is correct. Hence, the lemma.
\end{proof}

\subsection{Parallel Influence Score}
\label{sec:parallelInfluenceScore}

%$URS(q)$ is computed here by Algorithm \ref{algo:parallelURS}

This section presents an approach for computing the influence score of an arbitrary query product $q$ in parallel. More specifically, we parallelize the computation of the dynamic skyline probabilities of each product $p\in UDS(c)$ for each $c\in URS(q)$. Our approach is significantly different from the approach proposed in \cite{ZhouLXZL16}. The approach in \cite{ZhouLXZL16} computes the favorite probability $Pr^c_{Fav}(q)$ by executing the uncertain dynamic skyline query of each $c\in\mathcal{C}$ in different processing nodes without partitioning $\mathcal{P}$. In our approach, we partition not only $\mathcal{C}$, but also $\mathcal{P}$, and execute the uncertain dynamic skyline query only for $c\in URS(q)$, not for each $c\in\mathcal{C}$ as suggested in Lemma \ref{lem:influencescorenew}. Our approach is described below.

\begin{algorithm}[t]
\scriptsize%Parallel Influence Score using R-Tree
\DontPrintSemicolon
\SetKwInOut{Input}{Input}\SetKwInOut{Output}{Output}
\Input{$q$: query, $\mathcal{P}$: products, $\mathcal{C}$: customers}
\Output{$\tau(q)$: influence score of $q$}
\Begin{

$URS(q)\leftarrow$parallelURS($q$);\tcp*[r]{\tiny{Algorithm \ref{algo:parallelURS}}}
\tcp*[l]{\tiny{partitioned data $\mathcal{P}_j$ is already sent to workers as part of Algorithm \ref{algo:parallelURS}}}

$\mathcal{T}_0:$ \textbf{parallel}\text{ }\For{each $\mathcal{P}_j$}
	{
		$\mathcal{T}_j:root^\mathcal{P}_j\leftarrow$constructPRTree($\mathcal{P}_j$);\tcp*[r]{\tiny{local PR-Tree on $\mathcal{P}_j$}}		
	}%end_of_for

$\mathcal{T}_0:$ \textbf{parallel}\text{ }\For{each $c\in URS(q)$}
	{
	 	$\mathcal{T}_0:$ \textbf{parallel}\text{ }\For{each $\mathcal{T}_j$}
		{		
			$\mathcal{T}_j$: localUDS($c$, $root^\mathcal{P}_j$);\\
			$UDS^c_j\leftarrow\mathcal{T}_j$: sendLocalUDSPoints();\tcp*[r]{\tiny{local UDS points}}		
			\tcp*[l]{\tiny{local UDS scan points}}		
			$UDSScan^c_j\leftarrow\mathcal{T}_j$: sendLocalUDSScanPoints();
		}%end_of_for
	}%end_of_for

$\mathcal{T}_0:$ \textbf{parallel}\text{ }\For{each $c\in URS(q)$}
	{
	 	$\mathcal{T}_j\leftarrow$ selectAnyAvailableWorker($\{\mathcal{T}_j\}$);\tcp*[r]{\tiny{free worker}}		
		$\mathcal{T}_j$: globalUDS($c$, $(\cup UDS^c_j)\cup q$, $\cup UDSScan^c_j$);\\
		$UDS^c\leftarrow\mathcal{T}_j$: sendGlobalUDSPoints();\\
		$UDSScan^c\leftarrow\mathcal{T}_j$: sendGlobalUDSScanPoints();\\		
	}%end_of_for

$\mathcal{T}_0:$ \textbf{parallel}\text{ }\For{each $c\in URS(q)$}
	{
	 	$\mathcal{T}_0:$ \textbf{parallel}\text{ }\For{each $\mathcal{T}_j$}
		{
			sendUDSScanPoints( $\mathcal{T}_j$, $UDSScan^c$);\\
			$\mathcal{T}_j$:dominatingPointSet($UDSScan^c$, $root^\mathcal{P}_j$);\\
			$UDSScanDom^c_j\leftarrow\mathcal{T}_j$:sendLocalDomPoints();\\			
		}%end_of_for
	}%end_of_for

$\tau(q)\leftarrow 0$;\tcp*[r]{\tiny{initialization}}
\textbf{parallel}\text{ }\For{each $c\in URS(q)$}
	{
		updateDSkyProbs($UDSScan^c$, $\cup UDSScanDom^c_j$);\\
		$Pr^c_{Fav}(q)\leftarrow$computeFavProb($UDS^c$, $UDSScan^c$);\\
		$\tau(q)\leftarrow \tau(q)+Pr^c_{Fav}(q)$; \tcp*[r]{\tiny{Eq. \ref{eq:ursscore}}}
	}

}
\normalsize
\caption{Parallel Influence Score}
\label{algo:parallelInfluenceScore}
\end{algorithm}

Firstly, we compute the uncertain reverse skyline of $q$, i.e., $URS(q)$ by calling Algorithm \ref{algo:parallelURS}. Then, each worker constructs the PR-Tree on $\mathcal{P}_j$ without converting it to midpoints. Then, we compute two sets of products $UDS_j$ and $UDSScan_j$ for each customer $c\in URS(q)$ on each partition $\mathcal{P}_j$ locally by following the same technique described in \cite{ZhouLXZL16}. Once the local $UDS_j$ and $UDSScan_j$ product sets are calculated, we accumulate them into the sets $UDS$ and $UDSScan$ in the master. We move a product $p^\prime$ from $UDS$ to $UDSScan$ iff $\exists p\in UDS$ such that $p\ne p^\prime$ and $p\prec_c p^\prime$. We also update $UDSScan$ by ignoring all $p^\prime\in UDSScan$ iff $\exists p\in UDSScan$ such that $p\ne p^\prime$ and $p\prec^u_c p^\prime$. 

Once the $UDS$ and $UDSScan$ product sets are computed for each $c\in URS(q)$, we update the dynamic skyline probabilities of the $UDSScan$\footnote{The dynamic skyline probability of a $p\in UDS^c$ is $Pr(p)$ i.e., $Pr^c_{DSky(p)}=Pr(p)$, as $\not\exists p^\prime\in\mathcal{P}$ such that $p^\prime\prec_c p$.} product set in parallel. To achieve this, firstly we compute the dominating points for each $p\in UDSScan$ on each partition $\mathcal{P}_j$ by running window/range query for it locally. Once done for each partition, we update the dynamic skyline probabilities of the products $UDSScan$ by their dominating products and compute the favorite probability $Pr^c_{Fav}(q)$ of each $c\in URS(q)$ in the master. Once the favorite probabilities are computed, the influence score $\tau(q)$ of the query product $q$ is computed by following Eq. \ref{eq:ursscore}. The above parallel steps are pseudocoded in Algorithm \ref{algo:parallelInfluenceScore}.  

\begin{lemma}
Algorithm \ref{algo:parallelInfluenceScore} accurately computes the influence score of an arbitrary query product $q$ in parallel.
\label{lem:parallelISaccuracy}
\end{lemma}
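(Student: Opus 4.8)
The plan is to establish correctness of Algorithm~\ref{algo:parallelInfluenceScore} by decomposing it into two independently verifiable claims, mirroring the structure of Lemma~\ref{lem:influencescorenew} and the serial influence-score computation. First I would argue that the set $URS(q)$ computed in line~2 is correct, which we may assume outright since Algorithm~\ref{algo:parallelURS} has already been proven accurate. Then, by Lemma~\ref{lem:influencescorenew}, it suffices to show that the algorithm correctly accumulates $\sum_{\forall c\in URS(q)}{Pr^c_{Fav}(q)}$, because customers outside $URS(q)$ contribute zero to $\tau(q)$ and are correctly never iterated over by the algorithm.

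The core of the proof is to show that for each $c\in URS(q)$, the quantity $Pr^c_{Fav}(q)$ computed in the final parallel loop equals its definition in Eq.~\ref{eq:favprob}, namely $\frac{Pr^c_{DSky}(q)}{\sum_{\forall p^\prime\in UDS(c)}{Pr^c_{DSky}(p^\prime)}}$. I would break this into establishing three facts about the distributed data structures. First, I would show that the global sets $UDS^c$ and $UDSScan^c$ assembled from the local $UDS^c_j$ and $UDSScan^c_j$ across partitions $\mathcal{P}_j$ correctly reconstruct $UDS(c)$, i.e., that the local computation on each chunk followed by the master's merge step (moving $p^\prime$ from $UDS$ to $UDSScan$ when some $p\prec_c p^\prime$, and pruning $UDSScan$ under UD-dominance $\prec^u_c$) reproduces exactly the serial technique of~\cite{ZhouLXZL16}. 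The key observation here is that $\prec_c$ and $\prec^u_c$ are defined pointwise and are transitive (Definitions~\ref{def:dd} and~\ref{def:udd}), so a product dominated globally is dominated by some witness that the merge step will have retained. Second, I would verify that the dynamic skyline probabilities are computed correctly: for $p\in UDS^c$ we have $Pr^c_{DSky}(p)=Pr(p)$ by the footnote's observation (nothing dominates $p$), and for $p\in UDSScan^c$ the distributed dominating-point queries over each $\mathcal{P}_j$ collectively gather all $p'\prec_c p$, so the product formula in Eq.~\ref{eq:dskyprob} is evaluated over the complete dominating set. Third, I would conclude that \texttt{computeFavProb} forms the correct ratio using the denominator $\sum_{\forall p^\prime\in UDS(c)}{Pr^c_{DSky}(p^\prime)}$ summed over the reconstructed skyline.

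I expect the main obstacle to be the second fact: arguing rigorously that partitioning $\mathcal{P}$ across workers does not lose any dominating product when computing $Pr^c_{DSky}(p)$ for the scan points. Because the product formula $\prod_{p^\prime\prec_c p}{(1-Pr(p^\prime))}$ multiplies one factor per dominating product, correctness is sensitive to both completeness (every dominator is counted) and non-duplication (no dominator is counted twice across partitions). I would address completeness by noting $\cup\mathcal{P}_j=\mathcal{P}$, so any $p^\prime\prec_c p$ lies in exactly one chunk and is discovered by that chunk's local window/range query in \texttt{dominatingPointSet}; non-duplication follows from the chunks being disjoint. The remaining care is to confirm that the scan points in $UDSScan^c$ form the exact set whose probabilities require correction, and that \texttt{updateDSkyProbs} applies each discovered dominator's factor once. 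Once these distributed bookkeeping steps are shown to agree with the serial semantics of~\cite{ZhouLXZL16}, the equality $\tau(q)=\sum_{\forall c\in URS(q)}{Pr^c_{Fav}(q)}$ follows directly from Lemma~\ref{lem:influencescorenew}, establishing the lemma.
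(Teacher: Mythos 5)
Your proposal is correct and follows essentially the same route as the paper's proof: both arguments reduce the claim to showing that the distributed computation of the $UDS^c$ and $UDSScan^c$ sets (local computation per partition $\mathcal{P}_j$, followed by the master's dominance-based refinement) and of the dominating-point sets for the scan points reproduces the serial semantics of \cite{ZhouLXZL16}, after which the favorite probabilities and Lemma \ref{lem:influencescorenew} give $\tau(q)$. Your write-up is somewhat more explicit than the paper's --- in particular about completeness and non-duplication of dominators across disjoint partitions, and about tying the final accumulation back to Eq.~\ref{eq:favprob} --- but the decomposition and the key observations are the same.
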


\begin{proof}
Here, we prove that we accurately compute UDS and UDSScan product sets for each customer $c\in URS(q)$ in Algorithm \ref{algo:parallelInfluenceScore}. The local $UDS_j$and $UDSScan_j$ product sets are computed by following the same the technique as described in \cite{ZhouLXZL16}. Once these sets are computed locally, we accumulated them in the master for further refinement. The refinement ensures that $UDS$ set includes only non-dominating products for a customer $c\in URS(q)$. Similarly, the $UDSScan$ set includes only products that are not UD-dominated by any other products. Finally, the algorithm computes the dominating products for each product $p\in UDSScan$ w.r.t. $c$ by executing range query on each partition $\mathcal{P}_j$ w.r.t. $c$ and $p$. The discovery of these dominating products in each partition are independent from one partition to another. Therefore, the final UDS and UDSScan (along with the dominating products of each $p\in UDSScan$) product sets are accurate. Hence, the lemma.
\end{proof}

%, . 

\subsection{Optimization} 
\label{sec:paralleloptimization}
An optimized version of Algorithm \ref{algo:parallelURS} can be achieved by applying Lemma \ref{lem:UMSLnodetonodepruning} and Lemma \ref{lem:URSnodecustomersaddition} while computing the local $UMSL$ and $URS$ of $q$, respectively, as we apply these lemmas in Algorithm \ref{algo:optimizedSerialURS}. An optimized version of Algorithm \ref{algo:parallelInfluenceScore} can also be achieved by executing optimized version of Algorithm \ref{algo:parallelURS} while computing the $URS$ of $q$ in line 2.

\section{Experiments}
\label{sec:experiments}

This section compares the efficiencies of different approaches for evaluating the uncertain reverse skyline queries and computing the influence score of a product in probabilistic databases.

\subsection{Datasets, Queries and Environment}

\textbf{Datasets:} We evaluate the efficiency of our pruning ideas and techniques for processing the uncertain reverse skyline queries using real CarDB\footnote{https://autos.yahoo.com/} data which consists of $2\times10^5$ car objects. The CarDB is a six-dimensional dataset with attributes: \emph{make}, \emph{model}, \emph{year}, \emph{price}, \emph{mileage} and \emph{location}. We consider only the three numerical attributes \emph{year}, \emph{price} and \emph{mileage} in our experiments after normalizing them into the range $[0,1]$. We randomly select half of the car objects as products and the rest as the customer preferences. We also assign random probabilities to the car objects. The synthetic data experiments include data: uniform (UN), correlated (CO) and anti-correlated (AC), consisting of varying number of products, customers and dimensions. The cardinalities of the synthetic datasets range from $2$K to $10$M. The dimensionality ($d$) of the datasets varies from 2 to 6. 

\textbf{Test Queries:} The test queries are generated (synthetic) and selected (CarDB) randomly by following the distribution of the respective datasets. Again, the query products are assigned with random probabilities.

\begin{table}[tb]
\vspace{-2ex}
\centering
\setlength{\tabcolsep}{.5pt}
\tiny
\caption{Settings of parameters}
\begin{tabular}{|l|l|} \hlinewd{1.1pt}
\textbf{Parameter}&\textbf{Values}\\ \hline \hline
Tested Datasets& Real (CarDB), Synthetic (UN, CO, AC)\\ \hline
%Data Cardinality&$2$ Thousands $\sim 10$ Millions\\ \hline
Data Cardinality&2K, 3K, 4K, 6K, 8K, 10K, 100K, 1M, 3M, 5M, 7M, 10M\\ \hline
Dimensionality&$2$D, $3$D, $4$D, $5$D, $6$D\\ \hline
No. of Threads&$1\sim 15$ (1 thread per processor)\\ \hline
%Max \#entries in R-Tree Nodes&20-60 data objects\\ \hlinewd{1.1pt}
MAX \#entries in R-Tree&20, 30, 40, 50, 60 data objects\\ \hlinewd{1.1pt}
\end{tabular}
\label{tab:parameters}
\normalsize
\vspace{-2ex}
\end{table}

\textbf{Computing Environment:} We develop our algorithms in Java and execute them in Swinburne HPC system \footnote{http://www.astronomy.swin.edu.au/supercomputing/} with 1$\sim$15 processors and maximum 60GB main memory, where the parallel computing environment (master-worker) is simulated with Java multi-threading and LOCK-based synchronization. The above parameters are summarized in Table \ref{tab:parameters}.

\begin{table*}[tb]
\setlength{\tabcolsep}{1.4pt}
\vspace{-2ex}
\tiny
\centering
\caption{Effect of customer cardinality on efficiency of evaluating URS queries by different approaches}
\begin{tabular}{|l|c|c|c|c|c|c|c|c|c|c|c|c|} \hlinewd{0.7pt}
\multirow{2}{*}{\textbf{Cardinality}}&\multicolumn{3}{c|}{\textbf{CarDB (millisecs)}}&\multicolumn{3}{c|}{\textbf{UN (millisecs)}}&\multicolumn{3}{c|}{\textbf{CO (millisecs)}}&\multicolumn{3}{c|}{\textbf{AC (millisecs)}}\\ \cline{2-13}
{}&\textbf{SER-URS}&\textbf{OPT-URS}&\textbf{Na\"ive-URS}&\textbf{SER-URS}&\textbf{OPT-URS}&\textbf{Na\"ive-URS}&\textbf{SER-URS}&\textbf{OPT-URS}&\textbf{Na\"ive-URS}&\textbf{SER-URS}&\textbf{OPT-URS}&\textbf{Na\"ive-URS}\\ \hline
 \hline
Customer(2K)& 		3017& 2990&		143803&	2927&	2991&		140145&	3684&	2940&		118851&	3402&	3246&	139054\\ \hline

Customer(4K)&		3067& 3123&		281937&	3084&	3029&		251026&	3251&	3046&		238909&	3399&	3672&	259967\\ \hline

Customer(6K)&		3162& 3136&		419895&	3233&	3355&		380060&	3166&	2913&		337296&	3402&	3679&	356604\\ \hline

Customer(8K)&		3302& 3288&		597125&	3186&	3278&		524370&	3109&	3106&		457902&	3443&	3696&	465955\\ \hline

Customer(10K)&		3303& 3246&		749371&	3468&	3257&		617057&	3230&	3222&		545728&	3837&	4100&	578158\\ \hline

Customer(100K)& 	5077& 5196&	not executed& 	4510&	4756&	not executed&	4657&	5134&	not executed&	5201&	5167&	not executed\\ \hlinewd{0.7pt}
\end{tabular}
\normalsize
\label{tab:URSEffectOfCustomerCardinalities}
\vspace{-2ex}
\end{table*}

\begin{table*}[tb]
\setlength{\tabcolsep}{4pt}
\tiny
\centering
\caption{Effect of customer cardinality on efficiency of computing influence scores by different approaches}
\begin{tabular}{|l|c|c|c|c|c|c|c|c|c|c|c|c|} \hlinewd{0.7pt}
\multirow{2}{*}{\textbf{Cardinality}}&\multicolumn{3}{c|}{\textbf{CarDB (millisecs)}}&\multicolumn{3}{c|}{\textbf{UN (millisecs)}}&\multicolumn{3}{c|}{\textbf{CO (millisecs)}}&\multicolumn{3}{c|}{\textbf{AC (millisecs)}}\\ \cline{2-13}
{}&\textbf{SER-IS}&\textbf{OPT-IS}&\textbf{Na\"ive-IS}&\textbf{SER-IS}&\textbf{OPT-IS}&\textbf{Na\"ive-IS}&\textbf{SER-IS}&\textbf{OPT-IS}&\textbf{Na\"ive-IS}&\textbf{SER-IS}&\textbf{OPT-IS}&\textbf{Na\"ive-IS}\\ \hline
 \hline
Customer(2K)& 		5144& 	 5149&		1350344&	2909&	2907&		  550090&	2797&	2815&		473691&	2980&	2829&	507864\\ \hline

Customer(4K)&		8438& 	 8472&		2636079&	3067&	2962&		1288985&	2872&	2888&		988031&	3091&	2978&	1005732\\ \hline

Customer(6K)&		11748&  11516&		3915923&	6051&	6011&		1609840&	2958&	2920&		1536300&	3045&	3015&	1440399\\ \hline

Customer(8K)&		11953&  11923&		5671686&	6075&	5998&		2135613&	2974&	2911&		2109738&	3111&	3207&	1915065\\ \hline

Customer(10K)&		12262&  12054&		5143220&	5969&	5930&		3027367&	2976&	3116&		2668434&	3172&	3157&	2273715\\ \hline

Customer(100K)& 	13578&  14116&	not executed& 	10595&	11701&	not executed&	9838&	10173&	not executed&	9311&	8430&	not executed\\ \hlinewd{0.7pt}
\end{tabular}
\normalsize
\label{tab:ISEffectOfCustomerCardinalities}
\vspace{-2ex}
\end{table*}

\subsection{Tested Algorithms} 

To compare the efficiency of evaluating uncertain reverse skyline queries, we tested the following algorithms: Serial URS (SER-URS) - Algorithm \ref{algo:serialURS}, Optimized URS (OPT-URS) - Algorithm \ref{algo:optimizedSerialURS}, Parallel URS (PAR-URS) - Algorithm \ref{algo:parallelURS} and Optimized Parallel URS (PAR-URS$^*$) - Optimized Algorithm \ref{algo:parallelURS}. The na\"{\i}ve algorithm proposed in \cite{ZhouLXZL16} and its parallel version are called Na\"{\i}ve-URS and  Na\"{\i}ve-PAR-URS, respectively. To improve the performance of Na\"{\i}ve-URS and Na\"{\i}ve-PAR-URS, we do not update the dynamic skyline probabilities of the products that appear in the UDSScan set of each customer $c\in\mathcal{C}$ as we do not need to know the dynamic skyline probabilities of these products for the inclusion of the customer $c$ in $URS(q)$, we only need to know whether $q$ appears in the UDS or UDSScan sets of $c$. 

To compare the efficiency of computing the influence score of a probabilistic product, we tested the efficiencies of the following algorithms: Serial Influence Score (SER-IS) - Algorithm \ref{algo:serialInfluenceScore}, Optimized Influence Score (OPT-IS) - Optimized Algorithm \ref{algo:serialInfluenceScore}, Parallel Influence Score (PAR-IS) - Algorithm \ref{algo:parallelInfluenceScore} and Optimized Parallel Influence Score (PAR-IS$^*$) - Optimized Algorithm \ref{algo:parallelInfluenceScore}. The na\"{\i}ve algorithm \cite{ZhouLXZL16} and its parallel version are called Na\"{\i}ve-IS and Na\"{\i}ve-PAR-IS, respectively.

\subsection{Efficiency Study}
\label{sec:efficiency}

This section studies the efficiency of our proposed algorithms by comparing the execution times with the na\"{\i}ve approach proposed in \cite{ZhouLXZL16} from the following perspectives.

% \#threads to 1, 
\subsubsection{Effect of data cardinalities}
\label{sec:EffectOfDataCardinalities}
Here, we examine the effect of data cardinality (\#customers) on the efficiency of processing uncertain reverse skyline queries and computing influence score of a probabilistic product by different approaches on the tested datasets. We set $|\mathcal{P}|$ = 100K, $d$ = 2 and vary $|\mathcal{C}|$ from 2K to 100K. We also set MAX \#entries in a R-Tree node to 50. We run a number of queries and the results of evaluating a uncertain reverse skyline query and computing the influence score of a probabilistic product on average are shown in Table \ref{tab:URSEffectOfCustomerCardinalities} and Table \ref{tab:ISEffectOfCustomerCardinalities}, respectively. It is evident that the na\"{\i}ve approach \cite{ZhouLXZL16} is not scalable, whereas our approaches are scalable and can finish their executions within seconds even for 100K customers (na\"{\i}ve approach \cite{ZhouLXZL16} is not executed as it takes hours to finish). We see that the speed-ups achieved by our approach over the na\"{\i}ve approach \cite{ZhouLXZL16} are hugely significant. 

To justify the scalability of our approaches for millions of data objects, we perform another two experiments in UN dataset. For the first experiment, we set $|\mathcal{C}|=1$M and vary $|\mathcal{P}|$ from $1$M to $10$M. For the second experiment, we set $|\mathcal{P}|=1$M and vary $|\mathcal{C}|$ from $1$M to $10$M. For both experiments, we also set $d$ = 2 and MAX \#entries in a R-Tree node to 50. Finally, we run a number of queries and the results of evaluating a uncertain reverse skyline query and computing the influence score of a probabilistic product on average are shown in Fig. \ref{fig:UN2DURSEffectOfDataCardinality} and Fig. \ref{fig:UN2DISEffectOfDataCardinality}, respectively. We observe that our approaches can finish their executions within few minutes for millions of data objects.

\begin{figure}[t]
\small
\centering
\begin{tabular}{cc}
\begin{minipage}[a]{0.5\linewidth}
\subfigure[]{
\includegraphics[scale=0.32]{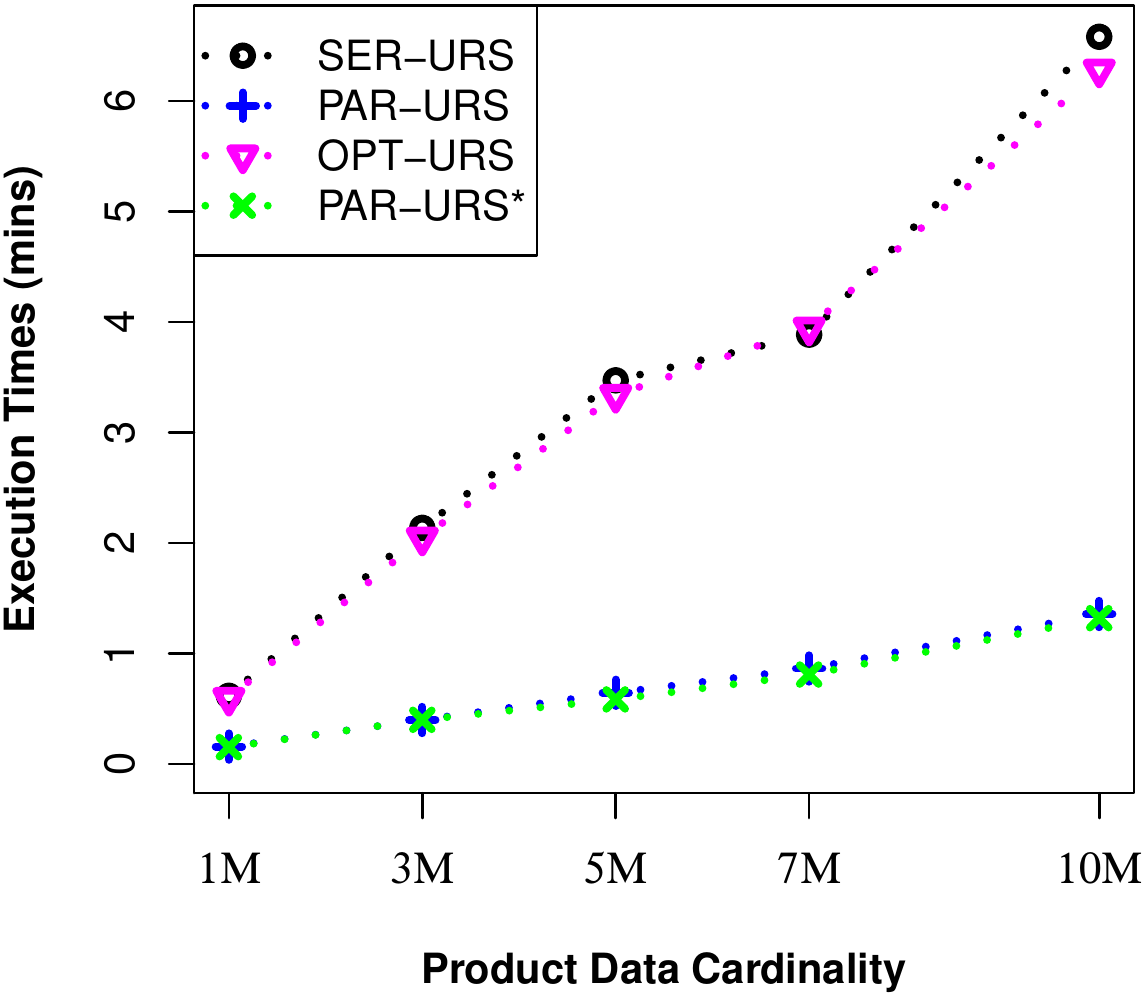}
}
\end{minipage}
&
\begin{minipage}[a]{0.5\linewidth}
\subfigure[]{
\includegraphics[scale=0.32]{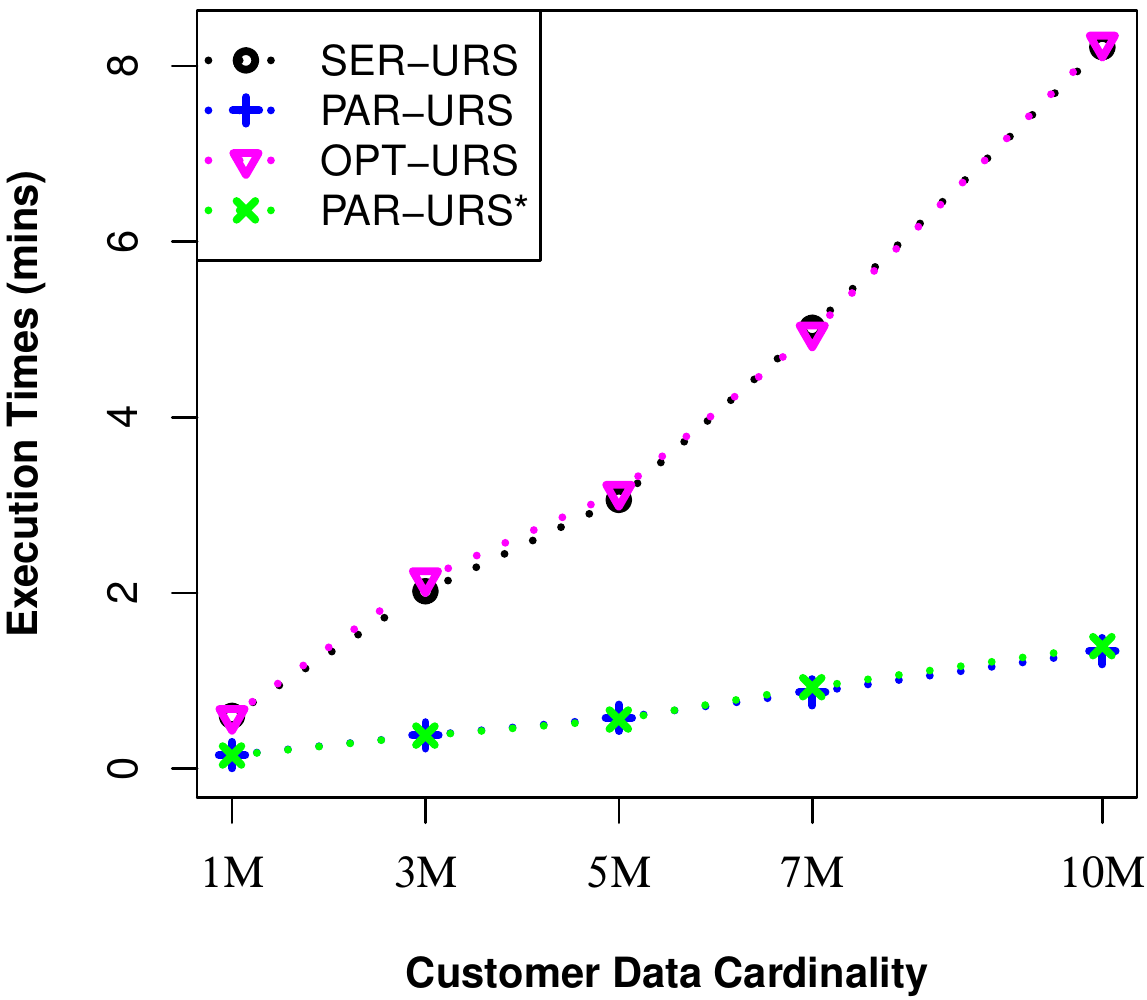}
}
\end{minipage}
\end{tabular}
\vspace{-2ex}
\caption{Effect of data cardinality on the efficiency of processing URS queries in UN dataset: (a) Product Cardinality and (b) Customer Cardinality}
\label{fig:UN2DURSEffectOfDataCardinality}
\vspace{-2ex}
\normalsize
\end{figure}

\begin{figure}[t]
\small
\centering
\begin{tabular}{cc}
\begin{minipage}[a]{0.5\linewidth}
\subfigure[]{
\includegraphics[scale=0.32]{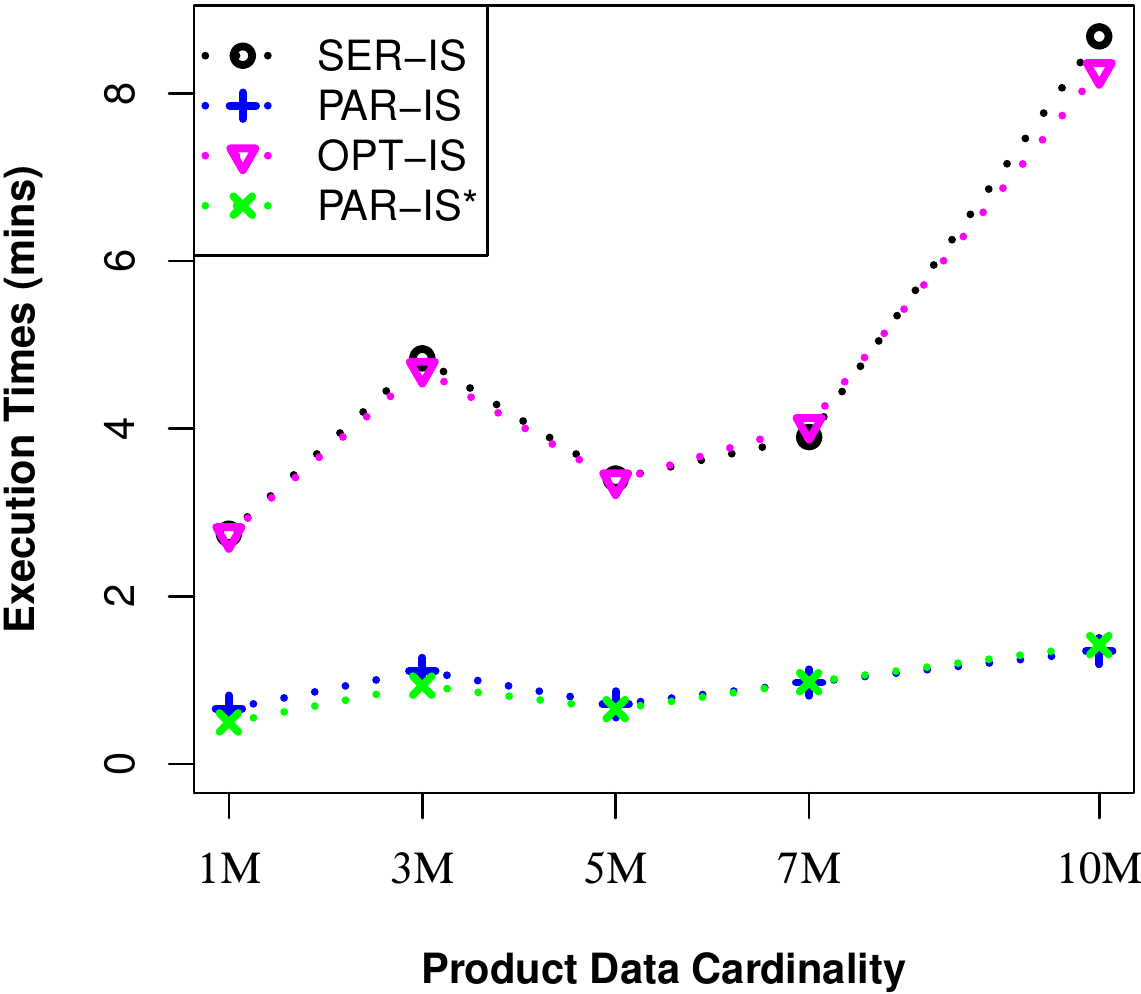}
}
\end{minipage}
&
\begin{minipage}[a]{0.5\linewidth}
\subfigure[]{
\includegraphics[scale=0.32]{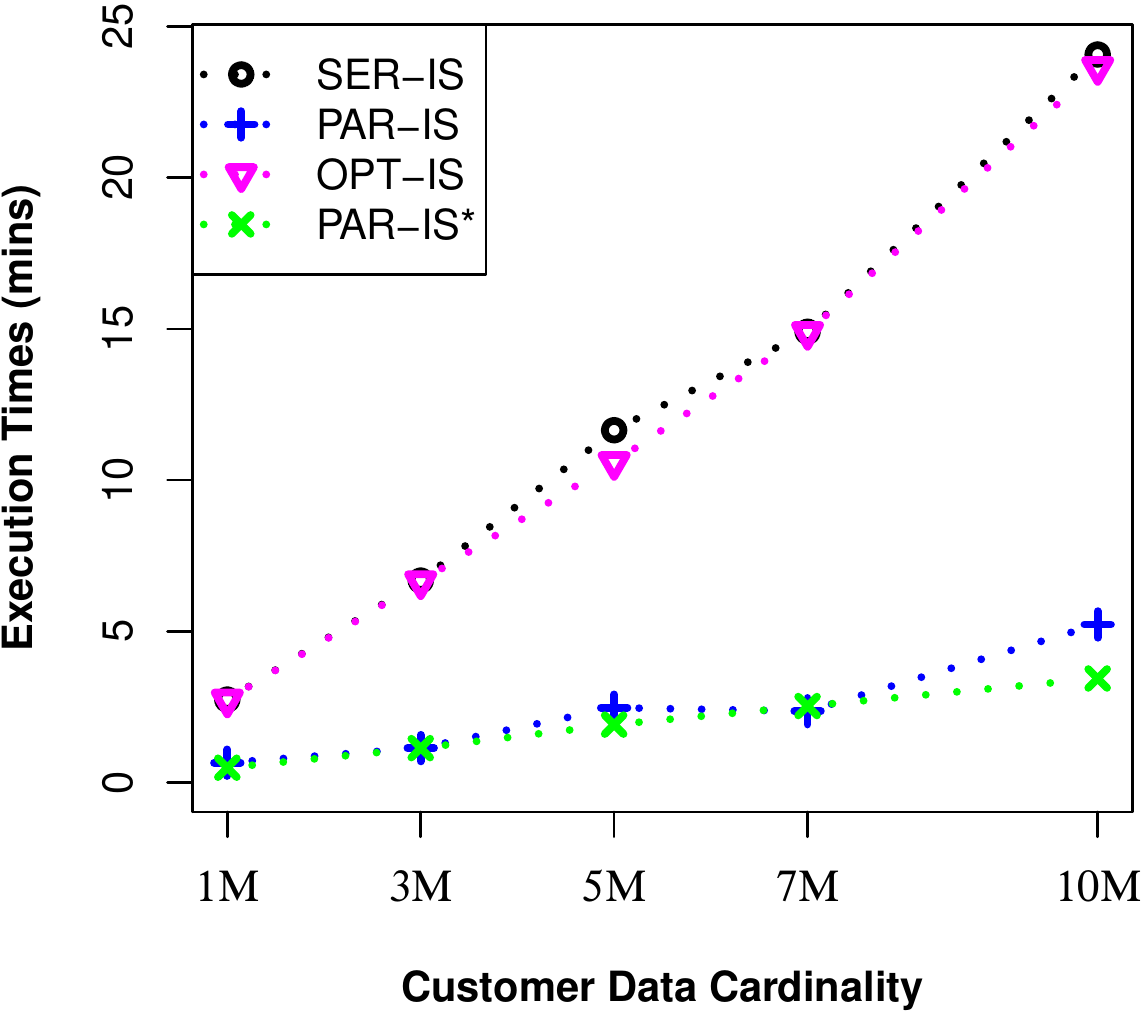}
}
\end{minipage}
\end{tabular}
\vspace{-2ex}
\caption{Effect of data cardinality on the efficiency of computing influence scores in UN dataset: (a) Product Cardinality and (b) Customer Cardinality}
\label{fig:UN2DISEffectOfDataCardinality}
\vspace{-2ex}
\normalsize
\end{figure}

\begin{figure}[tb]
\centering
\includegraphics[scale=0.35]{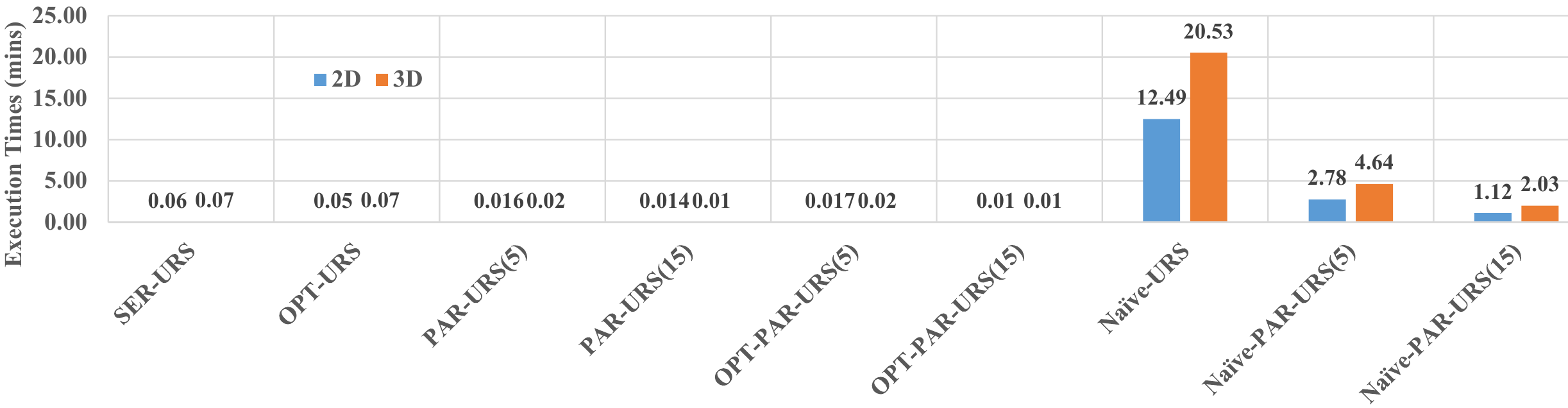}
\vspace{-4ex}
\caption{Effect of dimensions on efficiency of processing reverse skyline queries in CarDB dataset}
\label{fig:CarDBURSEffectOfDimensions}
\vspace{-2ex}
\normalsize
\end{figure}

\begin{figure}[tb]
\includegraphics[scale=0.36]{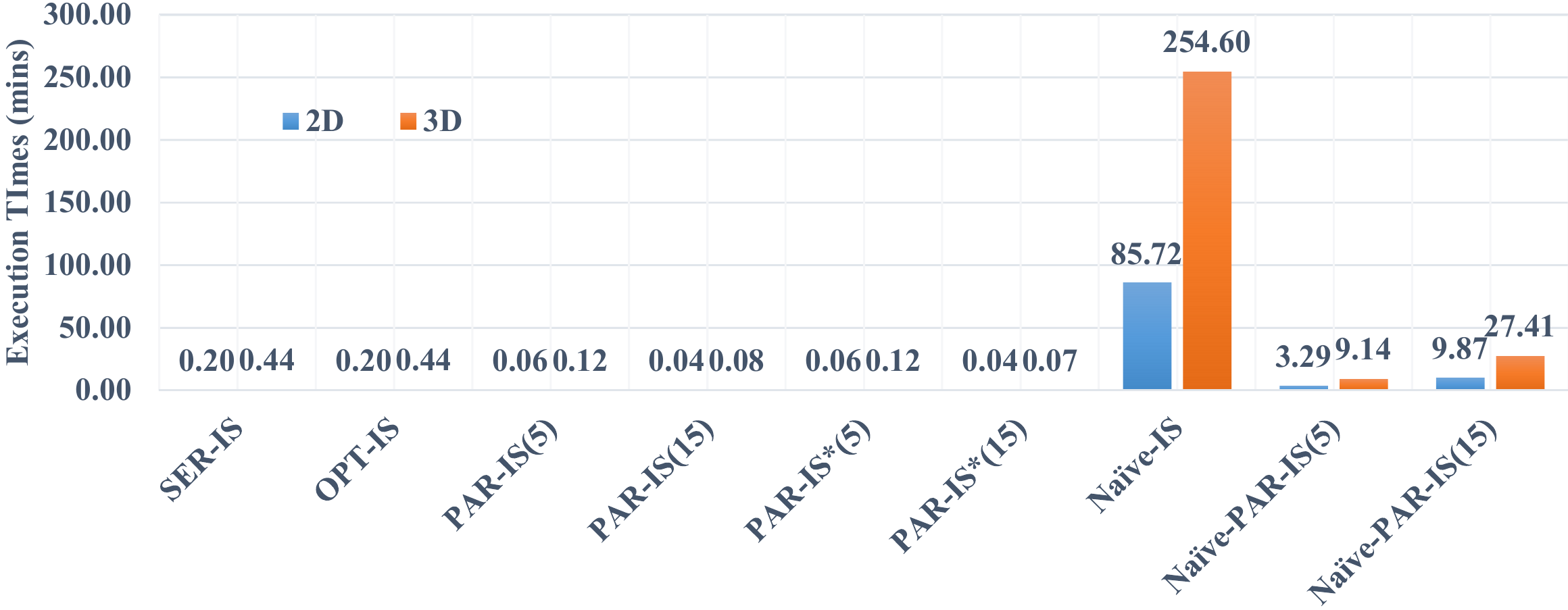}
\vspace{-4ex}
\caption{Effect of dimensions on the efficiency of computing influence scores in CarDB dataset.}
\label{fig:CarDBISEffectOfDimensions}
\vspace{-2ex}
\normalsize
\end{figure}

\begin{figure}[t]
\centering
\includegraphics[scale=0.30]{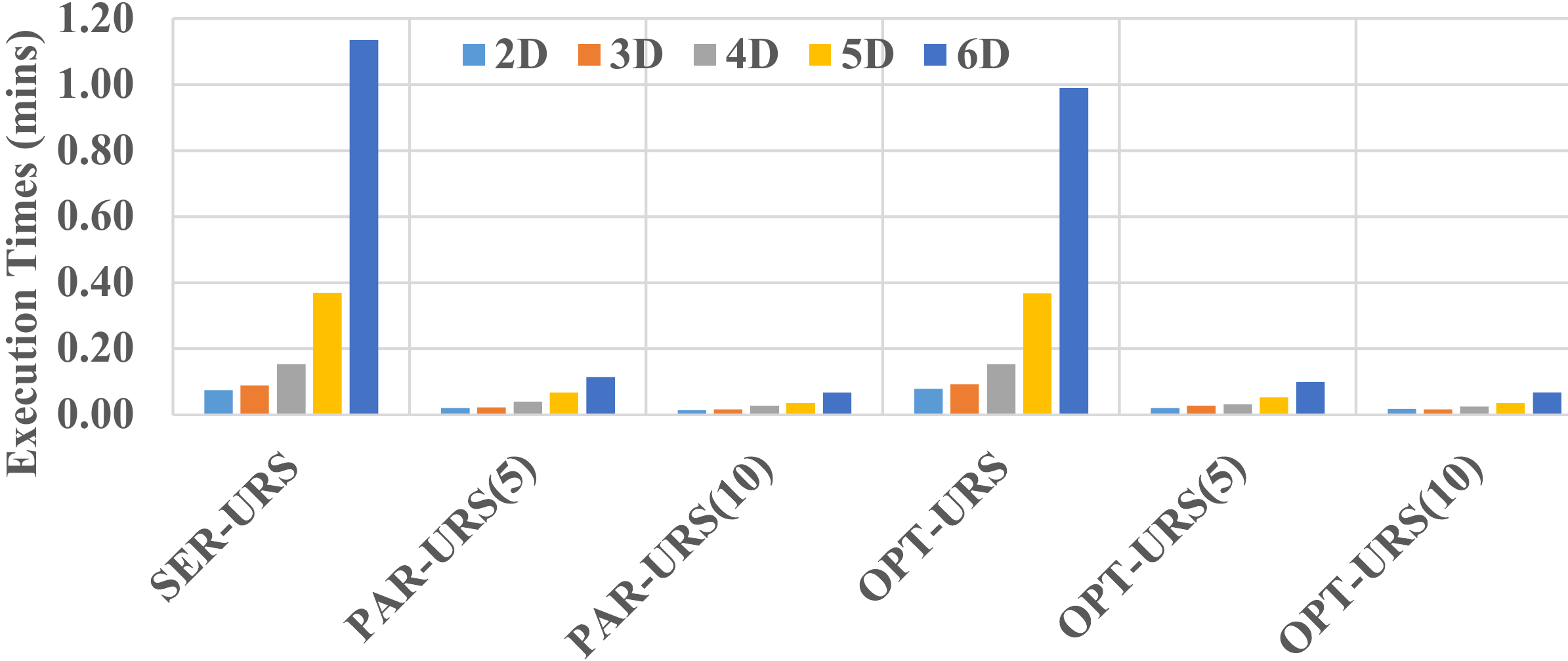}
\vspace{-2ex}
\caption{Effect of dimensions on efficiency of processing reverse skyline queries in UN dataset}
\label{fig:UNURSEffectOfDimensions}
%\vspace{-2ex}
\normalsize
\end{figure}

\subsubsection{Effect of data dimensions}
\label{sec:EffectOfDimensions}
Here, we examine the effect of data dimensionality on the efficiency of processing uncertain reverse skyline queries and computing the influence scores of probabilistic products by different approaches on CarDB two-dimensional (2D) and three-dimensional (3D) datasets. We set $|\mathcal{P}|$ = 100K, $|\mathcal{C}|$=10K, \#threads to 5 and 15 for PAR-URS, PAR-URS*, Na\"{\i}ve-PAR-URS, PAR-IS, PAR-IS* and Na\"ive-PAR-IS, and the MAX \#entries in a R-Tree node to 50. We run a number of queries and the results of processing uncertain reverse skyline of a query and computing the influence score of a probabilistic product on average are shown in Fig. \ref{fig:CarDBURSEffectOfDimensions} and Fig. \ref{fig:CarDBISEffectOfDimensions}, respectively. We observe that the na\"{\i}ve approach\cite{ZhouLXZL16} takes minutes to finish its execution in 3D data even with 15 threads (processors). The execution times get more worse for increased customer cardinality and dimensionality. On the other hand, all of our proposed approaches scale very well and finish their executions within seconds. We also perform another experiment in higher dimensions for UN dataset with varying $d$ from 2 to 6 for testing the efficiency of evaluating the uncertain reverse skyline of a query. For this experiment, we set $|\mathcal{P}|$ and $|\mathcal{C}|$ to $100$K, and the MAX \#entries in a R-Tree node to 50. The results are shown in Fig. \ref{fig:UNURSEffectOfDimensions}. We observe that all of our approaches can finish their executions within 2 minutes. Therefore, we claim that our approaches are scalable even in higher dimensions.

\begin{figure}[t]
\small
\centering
\begin{tabular}{cc}
\begin{minipage}[a]{0.5\linewidth}
\subfigure[CarDB]{
\includegraphics[scale=0.32]{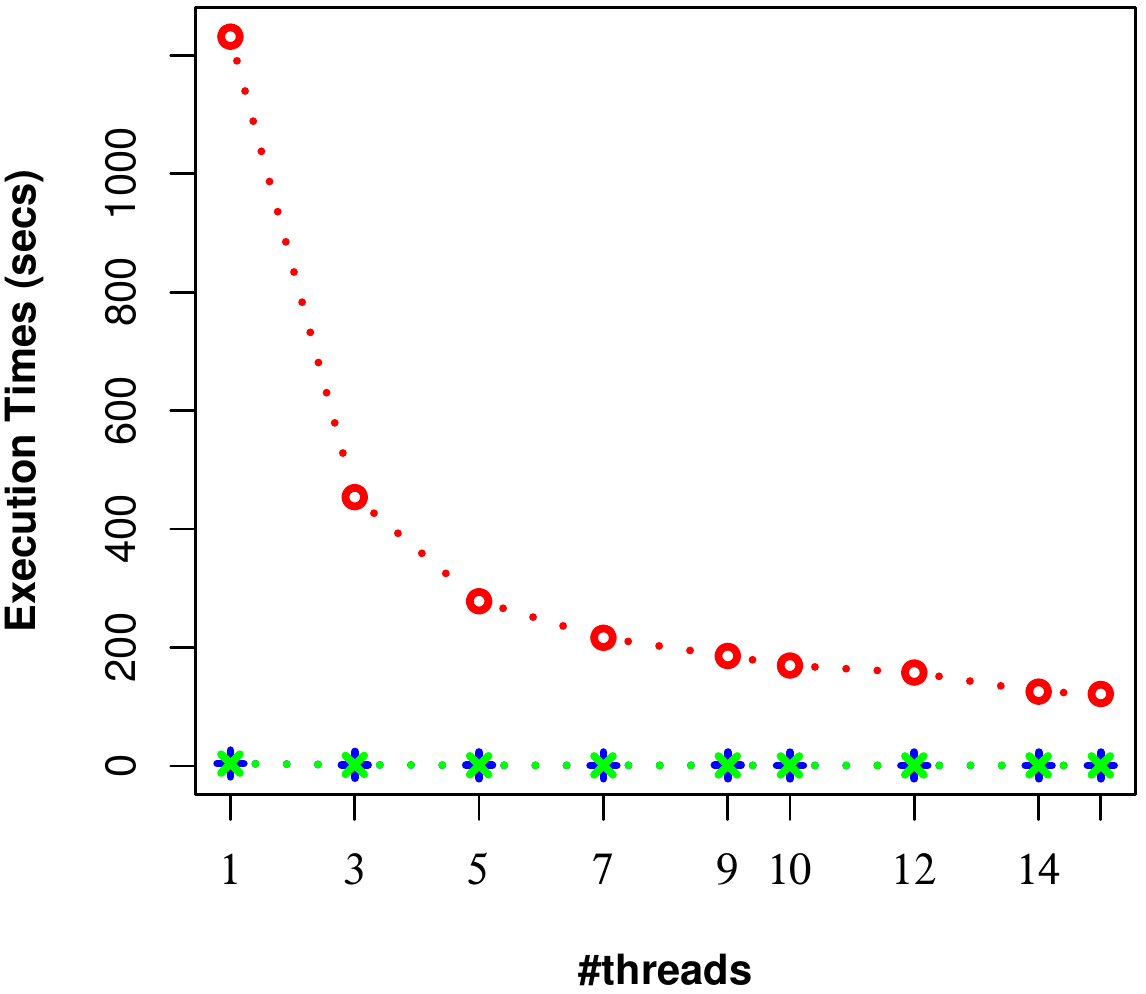}
}
\end{minipage}
&
\begin{minipage}[a]{0.5\linewidth}
\subfigure[UN]{
\includegraphics[scale=0.32]{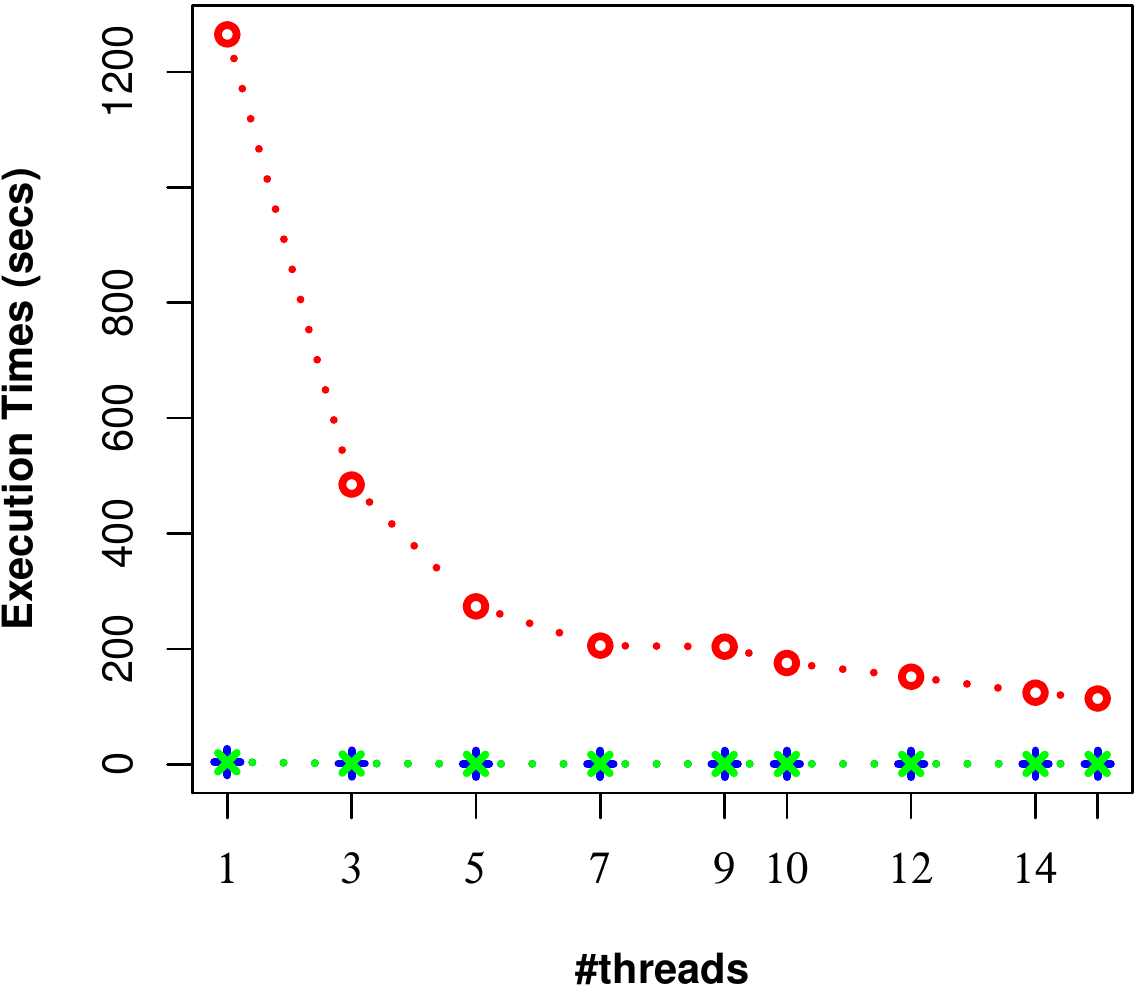}
}
\end{minipage}
\end{tabular}
\vspace{-3ex}
\caption{Effect of \#threads on efficiency of URS queries: (a) CarDB and (b) UN datasets}
\label{fig:3DURSEffectOfThreads}
\vspace{-2ex}
\normalsize
\end{figure}

\begin{figure}[t]
\small
\centering
\begin{tabular}{cc}
\begin{minipage}[a]{0.5\linewidth}
\subfigure[CarDB]{
\includegraphics[scale=0.32]{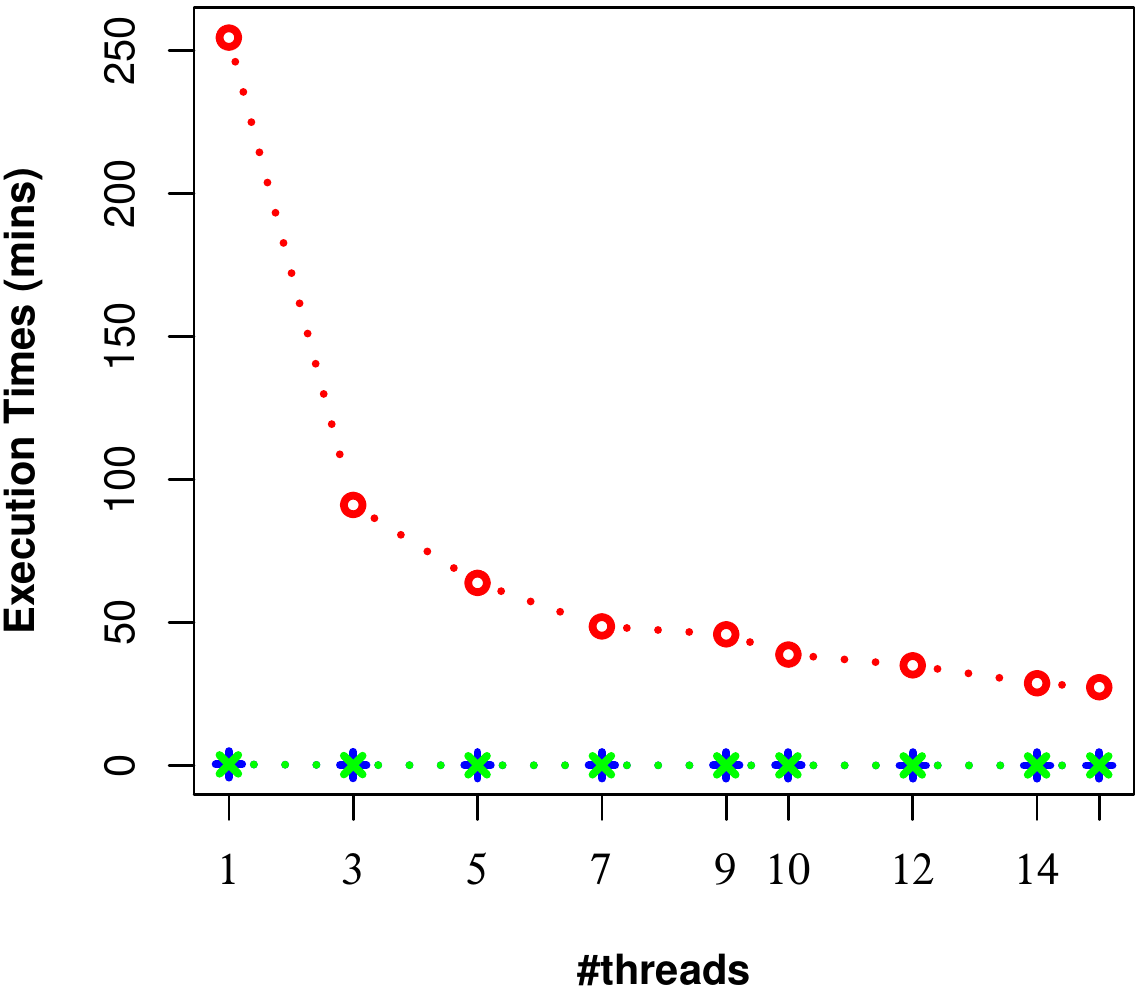}
}
\end{minipage}
&
\begin{minipage}[a]{0.5\linewidth}
\subfigure[UN]{
\includegraphics[scale=0.32]{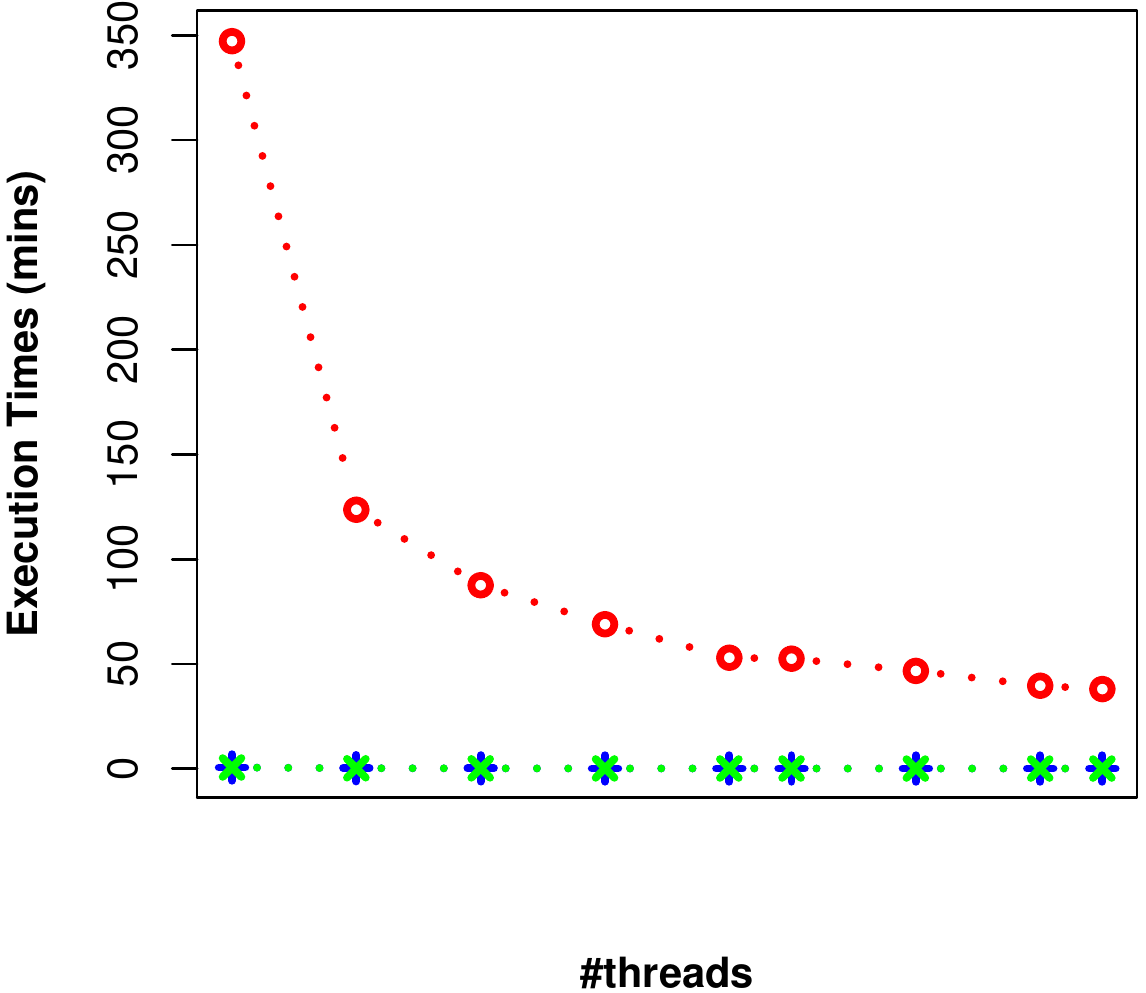}
}
\end{minipage}
\end{tabular}
\vspace{-3ex}
\caption{Effect of \#threads on efficiency of influence scores in: (a) CarDB and (b) UN datasets}
\label{fig:3DISEffectOfThreads}
\vspace{-2ex}
\normalsize
\end{figure}

\subsubsection{Effect of threads}
\label{sec:EffectOfThreads}
Here, we examine the effect of \#threads on the efficiency of processing uncertain reverse skyline queries and computing the  influence scores of probabilistic products in parallel by different approaches on CarDB and UN datasets. We set $|\mathcal{P}|$ = 100K, $|\mathcal{C}|$=10K, $d$ = 3, MAX \#entries in a R-Tree node to 50 and vary \#threads  from 1 to 15. We run a number of queries and the results of evaluating an uncertain reverse skyline query and computing the influence score of a probabilistic product on average for different \#threads are shown in Fig. \ref{fig:3DURSEffectOfThreads} and Fig. \ref{fig:3DISEffectOfThreads}, respectively. It is evident that the na\"{\i}ve approach\cite{ZhouLXZL16} is not scalable even if we increase the \#threads, whereas our approaches are scalable and can finish their executions within seconds with less \#threads. %We observe that the efficiencies of different approaches improve in general if we increase the \#threads and the efficiencies get stabilized after certain \#threads.

\subsubsection{Effect of R-Tree parameters}
\label{sec:EffectOfRTreeParams}
Here, we examine the effect of R-Tree parameters (MAX \#entries in a R-Tree node) on the efficiency of processing uncertain reverse skyline queries and computing the influence scores of probabilistic products by different approaches on CarDB and AC datasets. Here, we set $|\mathcal{P}|$ = 100K, $|\mathcal{C}|$=100K, \#threads to 10 for PAR-URS and PAR-URS*, $d$ = 2 and vary MAX \#entries in a R-Tree node from 20 to 60. We run a number of queries and the results of evaluating an uncertain reverse skyline query and computing the influence score of a probabilistic product on average are shown in Fig. \ref{fig:2DURSEffectOfRTreeParams} and Fig. \ref{fig:2DISEffectOfRTreeParams}, respectively. We observe that efficiency improves in general in SER-URS and OPT-URS with the increased MAX \#entries in a R-Tree node. However, we observe an exception in their parallel evaluations. We also observe that the efficiencies of different approaches improve if we increase the MAX \#entries in a R-Tree node in general except for SER-IS in AC dataset. We believe that the efficiency depends on many factors including data distribution in different threads (processors) and \#threads, not only on the MAX \#entries in a R-Tree node.

\begin{figure}[tb]
\small
\centering
\begin{tabular}{cc}
\begin{minipage}[a]{0.5\linewidth}
\subfigure[CarDB]{
\includegraphics[scale=0.32]{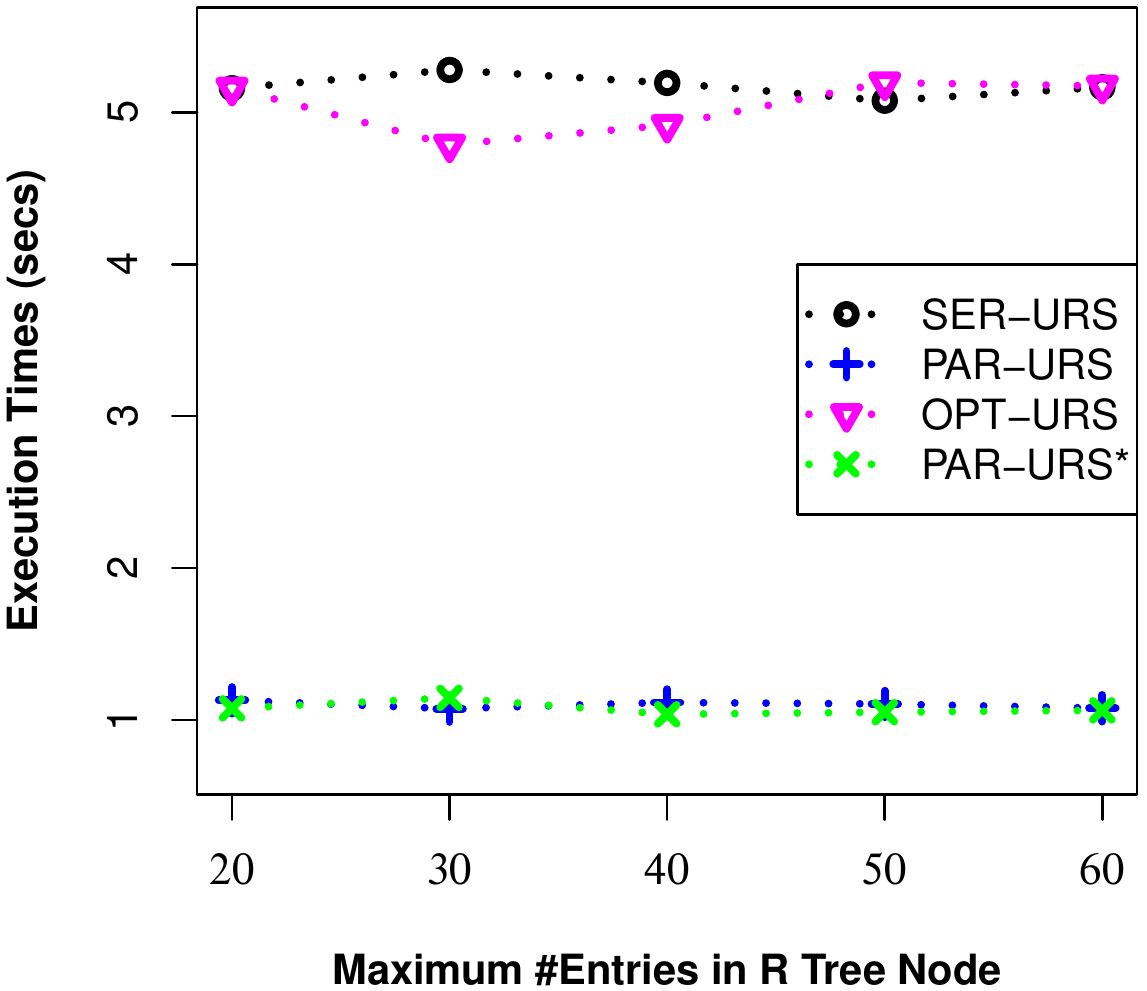}
}
\end{minipage}
&
\begin{minipage}[a]{0.5\linewidth}
\subfigure[AC]{
\includegraphics[scale=0.32]{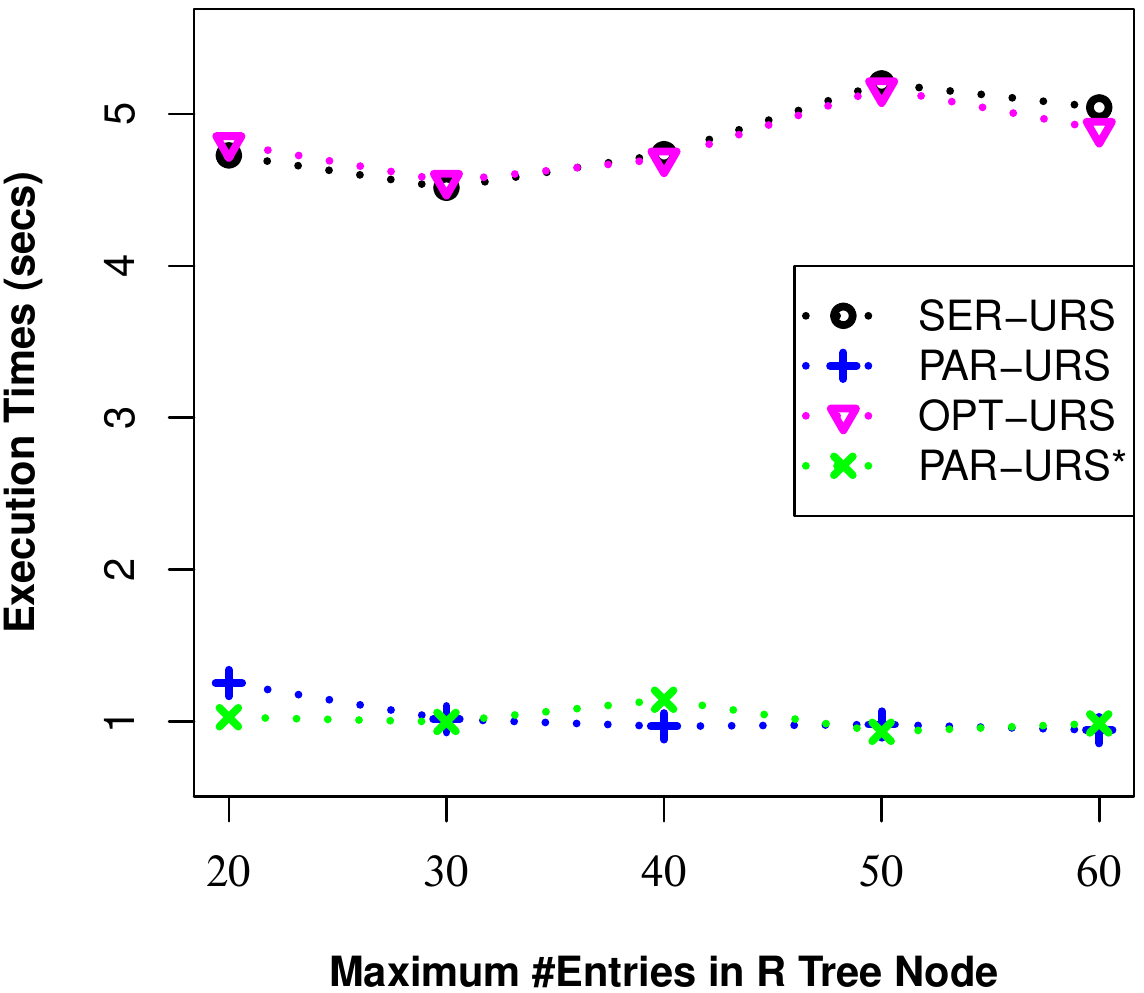}
}
\end{minipage}
\end{tabular}
\vspace{-2ex}
\caption{Effect of R-Tree \#entries on efficiency of URS queries: (a) CarDB and (b) AC datasets}
\label{fig:2DURSEffectOfRTreeParams}
\vspace{-2ex}
\normalsize
\end{figure}

\begin{figure}[t]
\small
\centering
\begin{tabular}{cc}
\begin{minipage}[a]{0.5\linewidth}
\subfigure[CarDB]{
\includegraphics[scale=0.32]{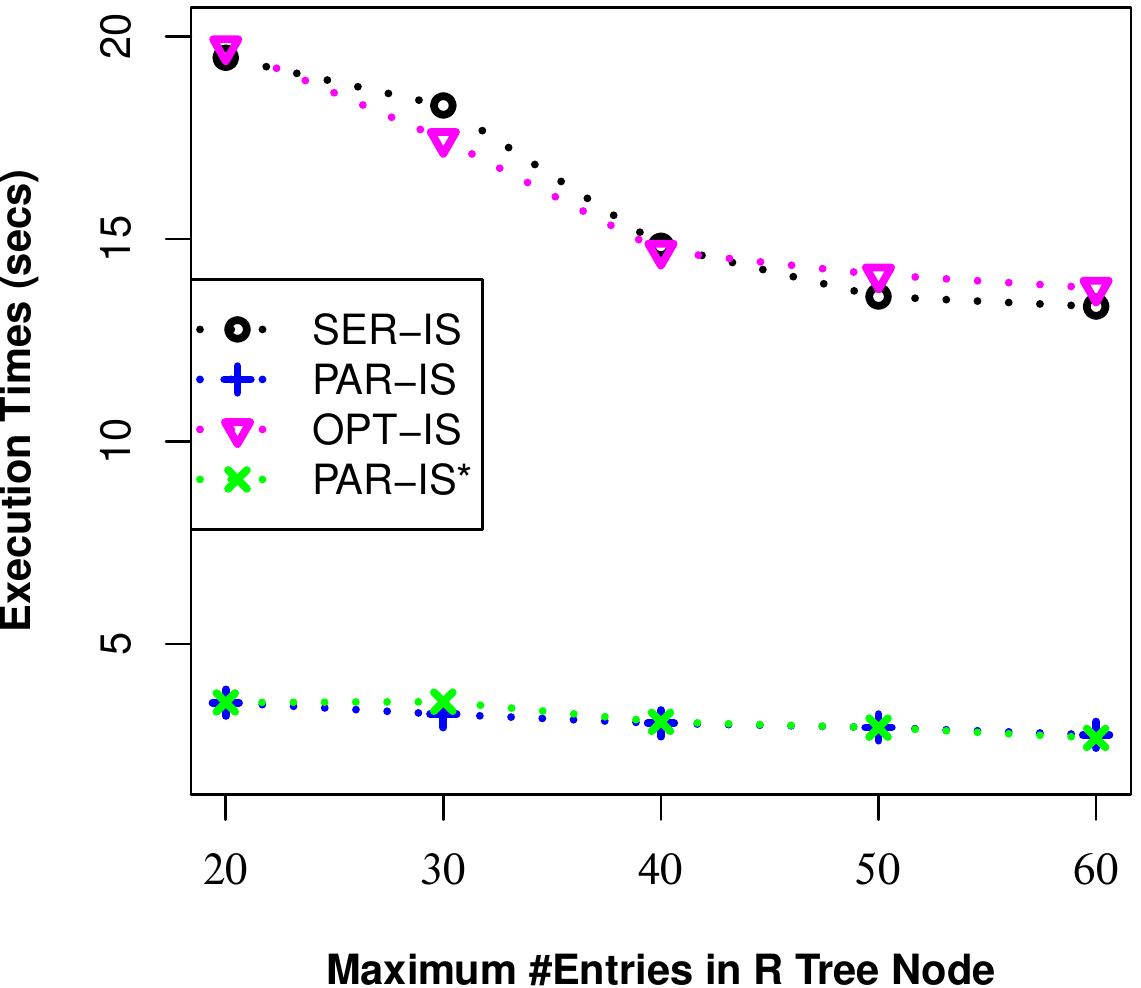}
}
\end{minipage}
&
\begin{minipage}[a]{0.5\linewidth}
\subfigure[AC]{
\includegraphics[scale=0.32]{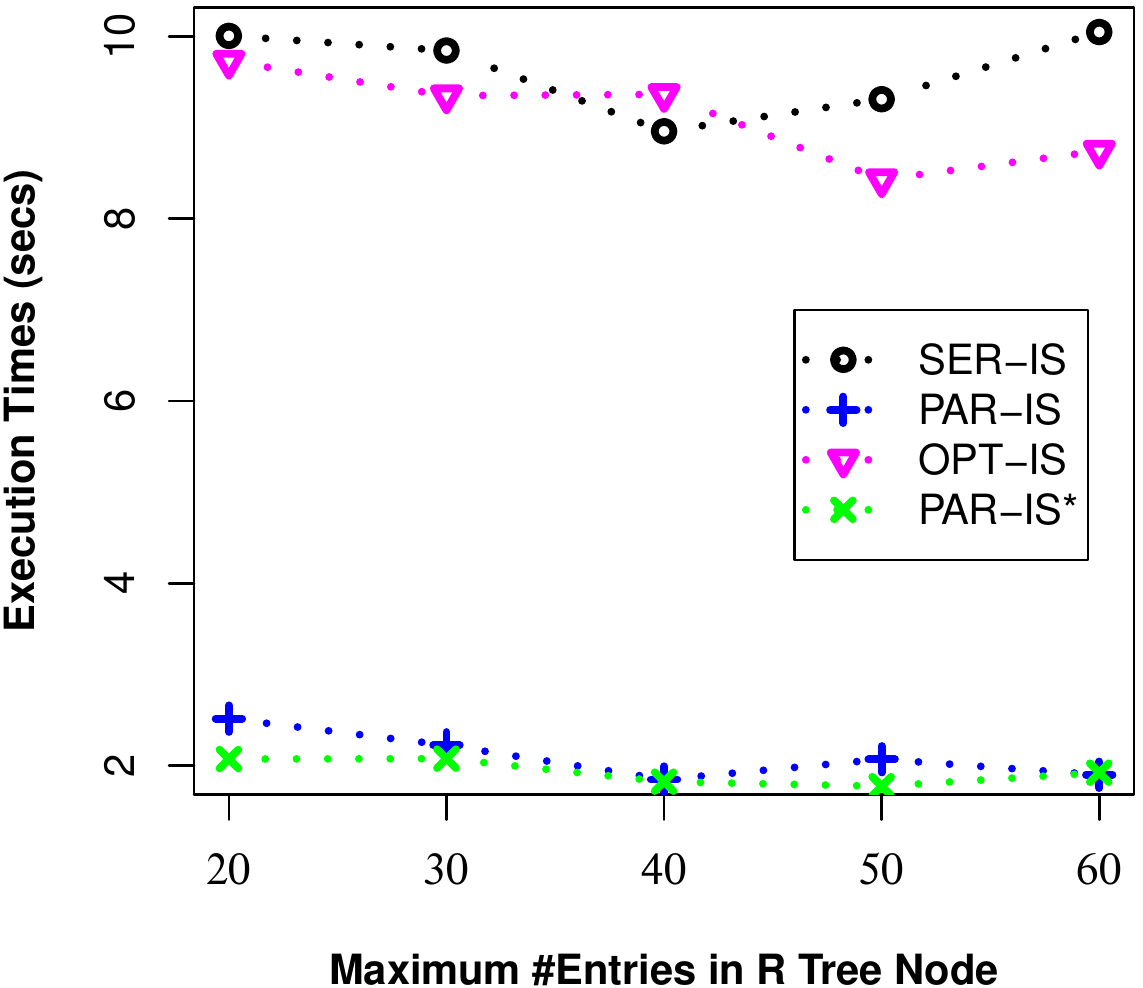}
}
\end{minipage}
\end{tabular}
\vspace{-2ex}
\caption{Effect of R-Tree \#entries on efficiency of influence scores in: (a) CarDB and (b) AC datasets}
\label{fig:2DISEffectOfRTreeParams}
\vspace{-2ex}
\normalsize
\end{figure}

\subsection{Summary}
\label{sec:summary}
We experimentally demonstrate (prove theoretically in Section \ref{sec:complexity}) that the na\"{\i}ve approach proposed in \cite{ZhouLXZL16} is not scalable for computing the influence score of a probabilistic product. The computation of the influence score of a probabilistic product through uncertain reverse skyline in uncertain data is scalable for millions of customer and product data objects, and can finish executions within few minutes.

\section{Related Work}
\label{sec:relatedwork}

\textbf{Reverse Skyline Queries and Related Studies.} Dellis et al. \cite{DellisS07} are the first to present reverse skyline query to the database community. Later, Wu et al. \cite{WuTWDY09} propose an efficient approach for computing the influence of a product through its reverse skyline, where the influence set consists of the member of the reverse skyline query results. Then, \cite{DeshpandeP11} propose an approach for evaluating reverse skyline queries with non-metric similarity measures. Wang et al. \cite{WangXCL12} propose an energy efficient approach for evaluating reverse skyline queries over wireless sensor networks. Arvanitis et al. \cite{ArvanitisDV12} extends this idea for computing the $k$-most attractive candidates ($k$-MAC) from a given set of products that maximizes the size of their joint influence set (score). Islam et al. \cite{IslamLZ13} propose an approach to answer how to turn up a given customer into the reverse skyline query result of an arbitrary query product. Recently, Islam et al. \cite{IslamL16} present an approach for computing the $k$-most promising products ($k$-MPP), which assigns equal probabilities to the products appearing in the dynamic skyline of a customer and selects a subset of given products to maximize their joint probabilistic influence score. All of the above works are in certain data settings. Lian et al. \cite{LianC08}, \cite{LianC13} extends the idea of reverse skyline query in uncertain data settings. However, the probabilistic reverse skylines proposed in \cite{LianC08}, \cite{LianC13} lack friendliness, stability and fairness as per \cite{ZhouLXZL16}.  Zhou et al. \cite{ZhouLXZL16} propose uncertain dynamic skyline and an approach to compute top-$k$ favorite probabilistic products through uncertain dynamic skyline. However, the approach    proposed in \cite{ZhouLXZL16} is not efficient as discussed in Section \ref{sec:complexity}. This paper presents uncertain reverse skyline query to efficiently evaluate the influence of an arbitrary probabilistic product in uncertain data settings. Unlike \cite{LianC08}, \cite{LianC13}, the uncertain reverse skyline proposed here is user friendly, stable and fair.   

\textbf{Parallelizing Reverse Skyline Queries.} Though there exist many works on parallelizing the standard skyline queries (\cite{HoseV12}, \cite{MullesgaardPLZ14}, \cite{AfratiKSU15}, \cite{PertesisD15}, \cite{BoghCA15}, \cite{ZhangJKQ16} for survey), there are only few works devoted to parallelizing the reverse skyline queries. Park et al. \cite{ParkMS13} propose an approach for parallelizing both dynamic and reverse skyline queries in MapReduce by inventing a novel quad-tree based data indexing. Later, the authors extend their quad-tree based data indexing in \cite{ParkMS15} for evaluating probabilistic dynamic and reverse skylines. Recently, Islam et al. \cite{IslamLRA16} propose an advancement of the quad-tree based data indexing proposed in \cite{ParkMS13} for evaluating the dynamic skyline, monochromatic and bichromatic reverse skylines in parallel. Here, we propose an efficient approach for parallelizing the computation of uncertain reverse skyline query result and the influence score of an arbitrary probabilistic product using R-Tree. Our approach for computing the influence score of a probabilistic product is significantly different from the one proposed in \cite{ZhouLXZL16}. Here, we only compute the dynamic skyline probabilities of the products that appear in the uncertain dynamic skyline of the customers existing in the uncertain reverse skyline of the query product, not for all customers in the dataset.  

%\vspace{-2ex}
\section{Conclusion}
\label{sec:conclusion}
This paper presents a novel skyline query, called uncertain reverse skyline, for measuring the influence of an arbitrary probabilistic product in uncertain data settings. We propose efficient pruning ideas and techniques for processing the uncertain reverse skyline and the influence score of a query product in probabilistic databases using R-Tree. We also present a parallel approach for evaluating the uncertain reverse skyline query and the influence score of a probabilistic product, which outperforms its serial counterpart. We conduct experiments with both real and synthetic datasets and compare our results with the existing baseline approach to demonstrate the efficiency of our approach. 

\section{Acknowledgment}
The research of C. Liu and T. Anwar is supported by the ARC discovery projects DP160102412 and DP170104747.

%\scriptsize
\small
\bibliographystyle{abbrv}
\bibliography{TPDS2017}  
\vspace{-1ex}

\end{document}